\documentclass[lettersize,journal]{IEEEtran}
\usepackage{amsmath,amsfonts}

\usepackage{subfig}

\usepackage{cite}
\hyphenation{op-tical net-works semi-conduc-tor IEEE-Xplore}

\newcommand{\myparatight}[1]{\smallskip\noindent{\bf {#1}:}~}

\usepackage{amsmath,amsfonts}
\usepackage{algorithm}
\usepackage{algorithmicx}
\usepackage{array}
\usepackage[amsmath,thmmarks]{ntheorem}
\usepackage{textcomp}
\usepackage{stfloats}
\usepackage{url}
\usepackage{verbatim}
\usepackage{graphicx}

\usepackage{tikz}
\usepackage{filecontents}
\usepackage{algpseudocode}
\usepackage{setspace}
\usepackage{booktabs}
\usepackage{multirow}
\usepackage{bm}
\usepackage{diagbox}

\usepackage{tcolorbox}
\usepackage{makecell}

\newtheorem{theorem}{Theorem}
\newtheorem{proof}{Proof}
\newtheorem{assumption}{Assumption}

\usepackage{tcolorbox}

\usepackage[hidelinks]{hyperref}

\begin{document}

\title{Practical Framework for Privacy-Preserving and Byzantine-robust Federated Learning}

\author{Baolei~Zhang,~Minghong Fang,~Zhuqing Liu, ~Biao~Yi,~Peizhao Zhou,~Yuan Wang,~Tong~Li\IEEEauthorrefmark{1},~and~Zheli~Liu
\thanks{
Baolei~Zhang,~Biao~Yi,~and~Peizhao Zhou are with College of Computer Science, Nankai University, China.~Yuan~Wang is with School of Mathematical Sciences, Nankai University, China.~Tong~Li~and~Zheli~liu are with College of Cyber Science, Nankai University, China. E-mail: zhangbaolei@mail.nankai.edu.cn,~yibiao199801@163.com,~zhoupz@mail.nankai.edu.cn, 2120220067@mail.nankai.edu.cn,~tongli@nankai.edu.cn,~liuzheli@nankai.edu.cn.

Minghong Fang is with the Department of Computer Science and Engineering, University of Louisville, USA. E-mail: minghong.fang@louisville.edu.
  
Zhuqing Liu is with the Department of Computer Science and Engineering, University of North Texas, USA. E-mail: zhuqing.liu@unt.edu.
		
\IEEEauthorrefmark{1} Corresponding Author.
}
}

\markboth{Journal of \LaTeX\ Class Files,~Vol.~14, No.~8, August~2021}%
{Shell \MakeLowercase{\textit{et al.}}: A Sample Article Using IEEEtran.cls for IEEE Journals}

\maketitle

\begin{abstract}

Federated Learning (FL) allows multiple clients to collaboratively train a model without sharing their private data. However, FL is vulnerable to \emph{Byzantine attacks}, where adversaries manipulate client models to compromise the federated model, and \emph{privacy inference attacks}, where adversaries exploit client models to infer private data. Existing defenses against both backdoor and privacy inference attacks introduce significant computational and communication overhead, creating a gap between theory and practice. To address this, we propose ABBR, a practical framework for Byzantine-robust and privacy-preserving FL. We are the first to utilize dimensionality reduction to speed up the private computation of complex filtering rules in privacy-preserving FL. Additionally, we analyze the accuracy loss of vector-wise filtering in low-dimensional space and introduce an adaptive tuning strategy to minimize the impact of malicious models that bypass filtering on the global model. We implement ABBR with state-of-the-art Byzantine-robust aggregation rules and evaluate it on public datasets, showing that it runs significantly faster, has minimal communication overhead, and maintains nearly the same Byzantine-resilience as the baselines.

\end{abstract}

\begin{IEEEkeywords}
Federated learning, Byzantine-robust defense, Privacy-preserving defense.
\end{IEEEkeywords}

% !TEX root = mainfile.tex

\section{Introduction}

\IEEEPARstart{F}{ederated} learning (FL)~\cite{konevcny2016federated,mcmahan2017communication,yang2019federated,li2020federated,kairouz2021advances} is a machine learning framework designed to address data silo issues and privacy risks in applications like image recognition and intrusion detection. FL enables multiple clients to collaboratively train a model with a central server without sharing their private data. The process involves multiple iterations where 1) clients train local models on their data and send them to the server, and 2) the server aggregates these models to update the global model, which is then sent back to clients for further training. However, the decentralized nature of FL allows adversaries to manipulate client models, making FL vulnerable to Byzantine attacks~\cite{blanchard2017machine,gu2017badnets,DBLP:journals/csur/WangMWHQR23,DBLP:journals/corr/abs-2302-13520, bagdasaryan2020backdoor,fang2020local,DBLP:conf/sp/LiYHLWFS23,DBLP:journals/tdsc/ZhaoHWJSLH21}. These attacks involve adversaries controlling clients to corrupt the global model and degrade performance. To counter this, Byzantine-robust aggregation rules like FLAME \cite{nguyen2022flame}, Multi-Krum \cite{blanchard2017machine}, FoolsGold \cite{fung2018mitigating}, and FABA \cite{xia2019faba} have been developed.

When designing an FL scheme that can withstand Byzantine clients, it is also crucial to prioritize data privacy. Simply sharing model parameters without revealing private data is not enough to guarantee data confidentiality in FL. Specifically, an untrusted central server can launch inference attacks~\cite{DBLP:conf/sp/NasrSH19,melis2019exploiting,pyrgelis2017knock,DBLP:conf/uss/Fu0JCWG0L022,DBLP:journals/tdsc/GaoHGLZCL23} or reconstruction attacks~\cite{DBLP:conf/nips/ZhuLH19,DBLP:journals/corr/abs-2001-02610} to extract sensitive information from local models. Consequently, researchers~\cite{so2020byzantine,nguyen2022flame} suggest that Byzantine-robust aggregation rules must be combined with secure computation protocols, such as secure aggregation~\cite{bonawitz2017practical,bell2020secure} and secure multi-party computation (MPC) protocols, to prevent the exposure of local models in plaintext on the server.

However, there is a significant gap between the proposed schemes and practical implementation. Previous works~\cite{so2020byzantine,nguyen2022flame,he2020secure,DBLP:conf/esorics/DongCLWZ21} focus on implementing vector-wise filtering rules (a major category of Byzantine-robust aggregation rules, detailed in Section \ref{subsec:bar}) in privacy-preserving FL, leading to substantial computational and communication overhead. A large portion of this cost arises from the private computation of pairwise distances between local models, as vector-wise filtering rules use distance measurements to identify Byzantine clients. Therefore, the key challenge in making these schemes practical is improving the efficiency of these private distance computations.

In this paper, we propose ABBR, a practical framework for Byzantine-robust and privacy-preserving FL. The overview of ABBR is outlined in Figure~\ref{fig:overview}. We are the first to propose using dimensionality reduction to accelerate the private computation of vector-wise filtering rules in privacy-preserving FL. Meanwhile, we analyze the accuracy loss of vector-wise filtering rules in low-dimensional space and propose an adaptive tuning strategy to minimize the perturbation of unfiltered malicious local models on the global model.  Once malicious local models are filtered out, the remaining models are aggregated in the original-dimensional space. This approach significantly reduces computational and communication overhead, addressing the gap mentioned earlier.  ABBR uses the secure two-party computation (STPC) as the private computing backend, and consists of two following key components.

\myparatight{Dimensionality reduction}
In the STPC architecture, since the server only holds secret shares of local models, expensive operations (such as private multiplication, MUL, explained in Section~\ref{subsec:aby}) are needed to compute distances between local models. For example, calculating the Euclidean distance between two models requires $O(d)$ MULs, where $d$ is the number of model parameters. Drawing inspiration from using dimensionality reduction to lower overhead in data retrieval, ABBR introduces a dimensionality reduction component (detailed in Section~\ref{subsec:dimensionality_reduction}) to project secret shares of local models from high-dimensional to low-dimensional space. Unlike traditional methods, which would introduce costly operations, ABBR uses random projection and shares the projection matrix between the two servers, thus avoiding additional expensive operators in the dimensionality reduction process.

\myparatight{Adaptive tuning}
The dimensionality reduction component inevitably leads to a loss of parameter information from the local model, which introduces some error in the calculated distance. In Byzantine-robust aggregation rules, this error could cause malicious local models to be mistakenly classified as benign. To address this, we first analyze the error bound for the distance and discuss the upper bound on the potential filtering loss. Additionally, we propose an adaptive tuning strategy that minimizes the directional bias of malicious models by adaptively clipping the norm of local model updates in the original-dimensional space.

\myparatight{Our contributions} 
In summary, this paper makes the following contributions: 

\begin{itemize}
    \item We propose ABBR, a practical framework for Byzantine-robust and privacy-preserving FL. This is the first approach to use dimensionality reduction to expedite the private computation of complex filtering rules in privacy-preserving FL. With ABBR, vector-wise filtering rules can be applied in a privacy-preserving manner while significantly reducing computational and communication overhead.
    \item We analyze the accuracy loss of vector-wise filtering rules in low-dimensional space and propose an adaptive tuning strategy to minimize the perturbation of unfiltered malicious local models on the global model.
    \item  We implement an ABBR construction of the state-of-the-art vector-wise filtering rules, Multi-Krum \cite{he2020secure}, FoolsGold \cite{fung2018mitigating}, FABA \cite{xia2019faba} and FLAME \cite{nguyen2022flame}, and evaluate the construction on 4 public datasets. 
    The experimental results show that, they run significantly fast and have very
little communication overhead. Moreover, it almost has the same Byzantine-robust performance as the baselines.
\end{itemize}

ABBR proposes a promising paradigm for implementing Byzantine-robust aggregation rules in privacy-preserving FL, i.e., executing complex filtering rules in a low-dimensional space and aggregating the remaining local models in the original-dimensional space. We believe that developing Byzantine-robust aggregation rules tailored to this paradigm is a key step toward achieving practical solutions.
% !TEX root = mainfile.tex

%-------------------------------------------------------------------------------
\section{Preliminaries}
%-------------------------------------------------------------------------------

\subsection{Federated Learning}
\label{subsec:fl}
Federated learning \cite{mcmahan2017communication,konevcny2016federated,yang2019federated} (FL) is a collaborative machine learning approach where multiple clients train a shared model without exchanging private data. A central server coordinates $n$ clients to solve an optimization problem over several iterations. Each iteration involves three main steps:

\myparatight{Step 1} The central server sends the current global model $G$ to random $n$ clients of $m$ clients.

\myparatight{Step 2} Each client updates its local model with the global model's parameters, trains it on its own data $D_i$, and then sends the local model back to the central server. 

\myparatight{Step 3} The central server aggregates these local models by a specific aggregation rule. We only introduce the aggregation rule \emph{federated averaging} (FedAvg) \cite{mcmahan2017communication} as it is commonly applied. In FedAvg, the central server computes the weighted mean of these local models, i.e.
\begin{equation}
    G = \frac{1}{ {\textstyle \sum_{i=1}^{n}} \left | D_i \right | }  {\textstyle \sum_{i=1}^{n} \left | D_i \right |L_i  } 
\end{equation}
where $\left | D_i \right |$ is the size of $D_i$. However, in practice, it is difficult to verify the size of $D_i$. Therefore, many works \cite{blanchard2017machine,munoz2019byzantine} directly compute the mean to replace the weighted mean, and so do we.

\subsection{Byzantine-robust Aggregation Rule}
\label{subsec:bar}
Recently, the machine learning community has developed Byzantine-robust aggregation rules, mainly categorized into vector-wise filtering and dimension-wise filtering rules.

\myparatight{Vector-wise filtering} aims to remove potentially malicious local models. The server identifies potentially malicious local models and filters them based on the pairwise distances (such as Euclidean distance and Cosine distance) between any two local models. Examples include Multi-Krum \cite{he2020secure}, FoolsGold \cite{fung2018mitigating}, FABA \cite{xia2019faba} and FLAME \cite{nguyen2022flame}.

\myparatight{Dimension-wise filtering} aims to remove potentially malicious values for each dimension of the client's local model separately. The server performs statistical analysis (such as median and trimmed mean) on each dimension of all local models to filter out potentially malicious parameters of the local models. Examples include Median~\cite{yin2018byzantine} and Trimmed-mean~\cite{yin2018byzantine}.

Prior works \cite{so2020byzantine,nguyen2022flame,he2020secure,DBLP:conf/esorics/DongCLWZ21} have implemented vector-wise filtering rules instead of dimension-wise filtering rules in privacy-preserving FL. This is because the private operators (such as secure addition, subtraction, and multiplication) required by vector-wise filtering rules produce lower overhead than other private operators (such as secure comparison and multiplexer) required by dimension-wise filtering rules.

\subsection{Secure Two-Party Computation Framework : ABY }
\label{subsec:aby}
ABY \cite{DBLP:conf/ndss/Demmler0Z15} is the default STPC backend for our framework, which combines secure computation schemes \cite{DBLP:conf/crypto/Beaver91a,DBLP:conf/stoc/GoldreichMW87,DBLP:conf/focs/Yao82b} based on Arithmetic sharing, Boolean sharing, and Yao sharing. In ABY, a private value can be secretly shared between two parties through three types of sharing, and there are efficient conversions between the three types. It is worth noting that the operations addition (ADD), subtraction (SUB), and multiplication (MUL) on Arithmetic sharing are more efficient than those on Boolean sharing and Yao sharing. The operations comparison (CMP) and multiplexer (MUX) are only implemented on Boolean and Yao sharing. We introduce above private operations in detail in Appendix~\ref{aby_detail}.

In STPC, since ADD and SUB do not require communication, they are generally considered free operators. In contrast, MUL, CMP, and MUX require the two parties to communicate some intermediate information, so they are generally considered expensive operators.

% !TEX root = mainfile.tex

\section{ABBR Framework}
In this section, we first introduce the threat model. Then, we discuss challenges for improving the efficiency of privacy-preserving and Byzantine-resilient defense in FL. Lastly, we present an overview of our ABBR framework and a simple approach to using ABBR.

\begin{figure*}[!th]
\centering
\includegraphics[height=6.5cm]{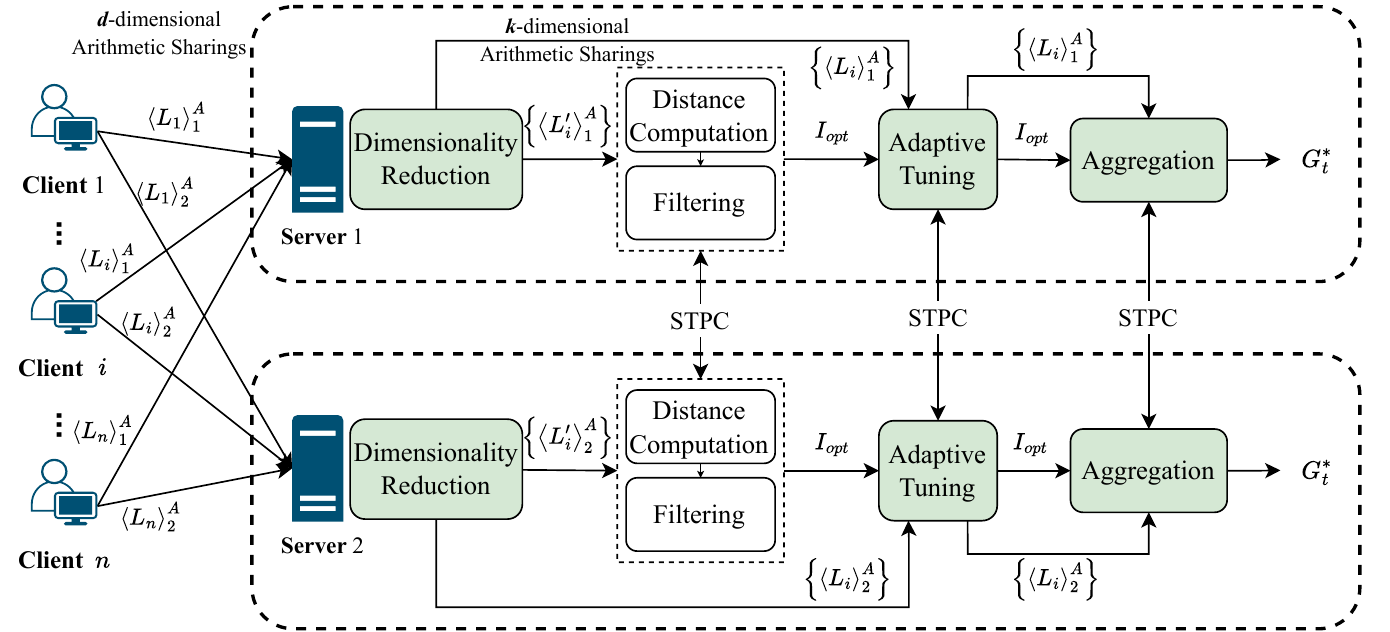}
\caption{Overview of ABBR in $t$-th iteration of FL.}
\label{fig:overview}
\end{figure*}

\subsection{Problem Setting}
\label{subsec:problem_statement}
\myparatight{Based on two-server architecture}
Previous works~\cite{so2020byzantine,nguyen2022flame,he2020secure,DBLP:conf/esorics/DongCLWZ21} have implemented vector-wise filtering rules (we explain the reasons for not adopting dimension-wise filtering rules in Section \ref{subsec:bar}) in privacy-preserving FL by using two different architectures. In the single-server architecture~\cite{so2020byzantine}, clients share secret splits of their local model parameters with each other. All clients need to perform the sophisticated privacy computation of vector-wise filtering rules. In the two-server architecture~\cite{nguyen2022flame,he2020secure,DBLP:conf/esorics/DongCLWZ21}, clients share secret splits of local model parameters between two servers. Only two servers need to perform the sophisticated privacy computation. The single-server architecture places heavy computational and communication burdens on clients, which is challenging for those with limited resources. In contrast, the two-server architecture shifts most of this overhead to the servers, which typically have greater computational power and bandwidth. Therefore, we believe that the two-server architecture is more promising.

\myparatight{Threat Model} In the two-server architecture, FL systems face two main threats, as described in previous works \cite{nguyen2022flame,he2020secure,DBLP:conf/esorics/DongCLWZ21}. We assume that a malicious adversary can corrupt less than 50\% of the clients, including their data, training processes, and local models. The remaining clients are benign and aim to obtain an accurate global model, while the compromised clients attempt to  manipulate the global model to produce incorrect predictions. The two servers are honest-but-curious, meaning they follow the protocol but try to learn as much as possible from the training data without colluding with others.

\myparatight{The gap between theory and practice} 
Previous studies~\cite{he2020secure,DBLP:conf/esorics/DongCLWZ21} employing a two-server architecture have used STPC to implement vector-wise filtering rules in privacy-preserving FL. However, even servers equipped with more advanced computational resources struggle to handle the massive computational and communication overhead introduced by STPC, leading to a considerable gap between theoretical designs and practical applications. For example, the private FLAME protocol \cite{nguyen2022flame} requires a total server-side runtime of 14,840 seconds (approximately 4 hours, as indicated in Table \ref{tab:setup_runtime} and Table \ref{tab:online_runtime}) for just a single training iteration with ResNet-18 and 32 clients. This excessive overhead and prolonged runtime significantly hinder the practical applicability of such methods.

As a result, it is urgent to reduce the overhead of these methods and improve their practicality.

\subsection{Challenge}
\label{subsec:challenge}
We observe that the overhead is mainly spent on private computations of pairwise distances between any two local models when implement vector-wise filtering rules in privacy-preserving FL. Dimensionality reduction offers a potential solution by projecting all local models into a shared low-dimensional space, but there are two challenges to be addressed.

\myparatight{C1 - The private computation of dimensionality reduction may produce more overhead}Dimensionality reduction techniques can be formalized as the multiplication between the data $u \in \mathbb{R}^d$ (in vector form) and the projection matrix $R \in \mathbf{R}^{d \times k}$ ( $k < d$) i.e.
\begin{equation}
    u' = u \times R.
\end{equation}
However, in the two-server architecture, local models are stored as secret shares on the server. If the dimensionality reduction method is data-dependent, the projection matrix must also be secret to maintain model privacy. This results in a large number of expensive operations in STPC for performing private multiplications between each local model and the projection matrix. Consequently, the overhead from these operations could outweigh the reduction in computations gained by using pairwise distances between low-dimensional local models.

\myparatight{C2 - The vector-wise filtering rules may not be accurate in low-dimensional spaces} Dimensionality reduction inherently causes information loss in local models. This results in an error margin when computing distances in the low-dimensional space compared to the original-dimensional space (see~\ref{subsubsec:error_analysis}). Consequently, the accuracy of vector-wise filtering based on low-dimensional distances is affected, potentially allowing malicious local models to evade detection. Furthermore, since the perturbations in vector-wise filtering are unpredictable, mitigating this accuracy loss poses a significant challenge.

\subsection{System Overview}

\begin{algorithm}[!t]
	\caption{Prototype of ABBR}
	\label{alg:overview}
	\small
    \begin{spacing}{1}
	\begin{algorithmic}[1]
	    \State \textbf{Input:} $G_0,T,m,n,r,d,\varepsilon,\eta$ \Comment{$G_0$ is the initial global model, $T$ is the number of FL iterations,$m$ is the number of total clients, $n$ is the number of selected clients in an iteration,  $r$ is the ID of server, $d$ is the original dimension of local model, $\varepsilon$ is the desired error in distance preservation, $\eta$ is the probability of successful projection.}
	    \State \textbf{Output:} $G_T$ \Comment{$G_T$ is the final global model.}
	    \For{each training iteration $t$ in $\left [ 1,T \right ] $}
            \State randomly selects $n$ clients from $m$ clients and generate the index set $I_n = \{1,\cdots, n \}$ 
	    \For{each client $i$ in $I_n$}
	    \State ${\langle L_i \rangle}^A \gets$ \Call{ClientProcess}{$G_{t-1},r$}
	    \EndFor
	    \State $\{{\langle L'_1 \rangle}^A,\cdots,{\langle L'_n \rangle}^A\} \gets$ \Call{DimensionalityReduction}{$\{{\langle L_1 \rangle}^A,\cdots,{\langle L_n \rangle}^A\},d,\varepsilon,\eta$} \Comment{$ \forall i \in (1, \dots  n), {\langle L'_i \rangle}^A$ is the secret sharings of low-dimensional local model.}
	    \State $\{{\langle e_{11} \rangle}^A,\cdots,{\langle e_{nn} \rangle}^A\} \gets$ \Call{DistanceComputation}{$\{{\langle L'_1 \rangle}^A,\cdots,{\langle L'_n \rangle}^A\}$} \Comment{$ \forall i,j \in (1, \dots  n), {\langle e_{ij} \rangle}^A$ is the secret sharings of pairwise distance between local model $L'_{i}$ and $L'_{j}$ in the low-dimensional space.}
        \State $I_{opt} \gets$ \Call{Filtering}{$\{{\langle e_{11} \rangle}^A,\cdots,{\langle e_{nn} \rangle}^A\}$} \Comment{$ I_{opt}$ is the index set of accepted local models.}
	    \State $ \{{\langle L_1 \rangle}^A,\cdots,{\langle L_n \rangle}^A\} \gets$ \Call{AdaptiveTuning}{$I_{opt}, G_{t-1}, \{{\langle L_1 \rangle}^A,\cdots,{\langle L_n \rangle}^A\},\{{\langle L'_1 \rangle}^A,\cdots,{\langle L'_n \rangle}^A\}$} 
        \State $G_t \gets$ \Call{Aggregation}{$I_{opt},{\langle L_1 \rangle}^A,\cdots,{\langle L_n \rangle}^A$}
	    \EndFor
	    \\
	    \Function {ClientProcess}{$G_{t-1},r$}
	     \State $L_{init} = G_{t-1}$
	     \State $L \gets$  \Call{LocalTraining}{$L, D$} \Comment{The client trains the local model with its local dataset $D$.}
        \State ${\langle L \rangle}_1^A,{\langle L \rangle}_2^A \gets$ \Call{ArithmeticSharing}{$L$} \Comment{The client generates two arithmetic sharings of local model $L$ by using $d$ private operators $\mathbf{SHR} $.}
        \State \Return ${\langle L \rangle}_r^A$
	    \EndFunction
		
	\end{algorithmic}
	\end{spacing}
\end{algorithm}

\myparatight{High-level idea} We present the structure of our ABBR framework, which is a practical framework for privacy-preserving and Byzantine-robust federated learning. The key insight is to filter out malicious local models through vector-wise filtering rules in low-dimensional space and aggregate benign local models in high-dimensional space. As shown in Figure \ref{fig:overview}, the $i$-th iteration of ABBR includes two main components: dimensionality reduction and adaptive tuning. To address challenge C1, we use a random projection for dimensionality reduction (see Section \ref{subsec:dimensionality_reduction}), leveraging its data-independent nature. This allows the server to use only two private operators (ADD and SUB) when projecting secret shares of local models to the same low-dimensional space, In the low-dimensional space, the distance computation component we built in STPC only needs $O(n^2 \log n)$ private multiplication operators, while the previous protocol required $O(n^2d)$. Notably, in typical FL applications~\cite{47976, Hard2018FederatedLF}, $d$ is usually three orders of magnitude larger than $n$. For challenge C2, we implement an adaptive tuning component (see Section \ref{subsubsec:tunning_strategy}) to adaptively tune the norm of remaining local model updates after filtering. This can effectively limit the malicious models that are not filtered out, thereby minimizing the accuracy loss of vector-wise filtering rules.

\myparatight{Using ABBR} Algorithm \ref{alg:overview} outlines the workflow of the $t$-th iteration of ABBR, which is simple to use and requires only two steps. First, the developer selects an appropriate vector-wise filtering method based on robustness needs, which involves configuring three modules: distance computation, filtering, and aggregation. ABBR offers private computations for Euclidean and Cosine distances, allowing the developer to choose one for the distance computation module. For filtering, the developer must use STPC private operators (e.g., MUL, CMP, MUX) to develop a private computation function. For aggregation, ABBR provides additive aggregation for the secret shares of accepted local models, and developers can also customize aggregation rules using STPC operators. Second, the developer configures hyperparameters of the dimensionality reduction component (see Section \ref{subsec:dimensionality_reduction}) to manage the accuracy loss of dimensionality reduction on vector-wise filtering (see Section \ref{subsubsec:error_analysis}).

\myparatight{Complexity comparison}
Communication overhead is the primary bottleneck in privacy-preserving FL with computationally powerful servers. Our analysis thus targets communication complexity, predominantly arising from secure inter-server operations (e.g., secure multiplications) during pairwise distance calculations. While this distance computation is the main communication driver for both existing private baselines~\cite{he2020secure,fung2018mitigating,xia2019faba,nguyen2022flame} and our ABBR method, their complexities diverge significantly. Private baselines incur $O(n^2d)$ communication, whereas ABBR reduces this to $O(n^2\log n)$. Considering that model dimension $d$ is typically orders of magnitude larger than client count $n$ per round in large-scale FL scenarios~\cite{shejwalkar2022back}, ABBR’s substantially lower communication complexity translates to significant improvements in efficiency and scalability.

% !TEX root = mainfile.tex

\section{Dimensionality Reduction}
\label{subsec:dimensionality_reduction}

The dimensionality reduction component should ensure that its privacy-preserving computation introduces less overhead than the reduction it provides. While various dimensionality reduction methods, such as principal component analysis, they are not suitable here. This is because the privacy of local models must be maintained during the process. Specifically, generating a projection matrix would need to occur on the secret shares of local models, and multiplying the secret shares of local models with the secret shares of the projection matrix would be required. The overhead from these steps could surpass the savings gained from computing pairwise distances between low-dimensional secret shares of local models.

To solve the challenge, we make the privacy-preserving dimensionality reduction data-independent by using random projection~\cite{DBLP:journals/jcss/Achlioptas03,arriaga2006algorithmic}. In this approach, both servers generate the same projection matrix locally and keep it in plaintext. Since the plaintext matrix doesn't reveal local model privacy and introduces no communication overhead when multiplied with secret shares, the privacy concerns and overhead are avoided.

The dimensionality reduction component involves three steps: parameter negotiation, projection matrix generation, and dimensionality reduction execution. The detailed algorithm is shown in Algorithm~\ref{alg:dimensionality_reduction} in Appendix.

\myparatight{Parameter negotiation}
A random one server determines a random seed $s$ of the Pseudorandom Generator (PRG) and the dimensionality $k$ of reduced models, then sends them to the other server. The value of $k$ is determined by the following inequality:
\begin{equation}
\label{Target_dimension}
k \ge \left \lceil \frac{4+2\eta}{\varepsilon ^ 2 - \varepsilon ^ 3}\log{n } \right \rceil , \\ 
\end{equation}
where $n$ is the number of local models, the $\eta>0$ and $\varepsilon>0$ are the hyperparameters of the dimensionality reduction component. The $\eta$ controls the probability that distance error between any two low-dimensional local models is within a factor of $\varepsilon$ (detailed in Section~\ref{subsubsec:error_analysis}). 

\myparatight{Projection matrix generation}
Two servers use the same random seed $s$ to generate a same $d \times k$ matrix $\mathbf{P}$ through PRG, each entry of which conforms to the following distribution:
\begin{equation}
\label{Distribution}    
\mathbf{P}_{ij} =\begin{cases}+1 & \text{with probability 1/2}\\
	    -1 & \text{with probability 1/2}
        \end{cases},
\end{equation}
where $\mathbf{P}_{ij}$ is the entry at $i$-th row and $j$-th column. It is worth noting that the matrix $\mathbf{P}$ is in plaintext on both servers and generating the secret shares of low-dimensional local models from it does not reveal any privacy.

Above distribution is chosen due to several advantageous properties. First, generating $\mathbf{P}$ is computationally inexpensive, requiring only a source of random bits. Furthermore, multiplication involving $\mathbf{P}$ and secret shares of local models can be performed more efficiently because it requires only local operations (shown in Equation~\ref{equ:dimension_reduction}) between servers. Second, this distribution is well-known for its ability to approximately preserve pairwise distances in the projected space, which is detailed in Section~\ref{subsubsec:error_analysis}. Third, the projection matrix $\mathbf{P}$ is generated independently of the local models. This simplifies the process and is advantageous from a privacy perspective, as the projection itself does not depend on sensitive data.

\myparatight{Dimensionality reduction execution}
Since the server only holds the secret shares of local models, it needs to perform private operators of STPC to complete the privacy-preserving computation of dimensionality reduction. When reducing the dimensionality of the local model $L$, two server need to perform the same operations to generate secret shares of the $i$-th element of the low-dimensional local model $L'$ by
\begin{equation}
\label{equ:dimension_reduction}
\langle L'[i] \rangle = \sum_{\substack{j \in \mathcal{U} \ \text{s.t. } \mathbf{P}{ji}=1 }} \left \langle L[j] \right \rangle - \sum_{\substack{j \in \mathcal{U} \ \text{s.t. } \mathbf{P}_{ji}=-1 }} \left \langle L[j] \right \rangle,
\end{equation}
where  $\mathcal{U} = \{1, 2, \dots, d\}$.  The index $j$ in the first summation iterates through all indices in $\mathcal{U}$ such that the corresponding entry in the projection matrix $\mathbf{P}_{ji}$ is equal to $+1$. Similarly, the index $j$ in the second summation iterates through all indices in $\mathcal{U}$ such that the corresponding entry $\mathbf{P}_{ji}$ is equal to $-1$. We can see that these operations only require two free operators, i.e., private addition and subtraction. 

% !TEX root = mainfile.tex
\begin{figure}[t]
\centering

\includegraphics[height=3cm]{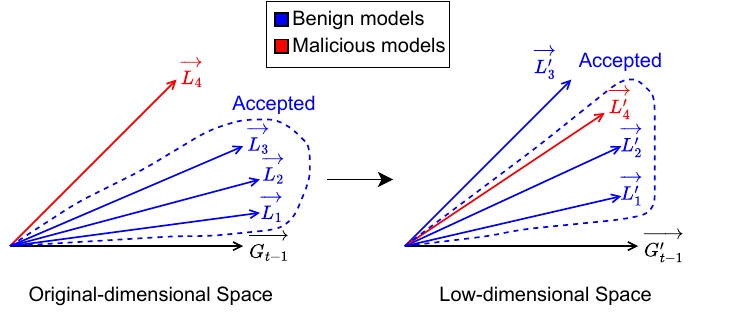}
\caption{Changes in the relative position of local models after being projected from original-dimensional space to low-dimensional space.}
\label{fig:tuning_1}
\end{figure}
\section{Adaptive Tuning}
\label{subsec:adaptive_tuning}

In this section, we first analyze the loss of vector-filtering after dimensionality reduction. Then, we present an adaptive tuning strategy to mitigate the negative influence. 
\subsection{Error Analysis}
\label{subsubsec:error_analysis}

The dimensionality reduction component can be directly applied to all prior works in which the Byzantine-resilient aggregation is vector-wise filtering. As described in Section \ref{subsec:dimensionality_reduction}, the distance between any two local models calculated in low-dimensional space may have errors. We observe that the robustness of vector-wise filtering is perturbed by the error of Euclidean distance or Cosine distance in the low-dimensional space. Therefore, we first analyze the error of the Euclidean and Cosine distance between any two low-dimensional models. Then, we describe our assumptions and describe our theoretical results that the influence of vector-filtering after dimensionality reduction.

\begin{theorem}
\label{theorem_1}
Suppose that the target dimensionality $k$ of low-dimensional models satisfies the Inequation \ref{Target_dimension} and the distribution of the $d \times k$ projection matrix $\mathbf{P}$ satisfies the Equation \ref{Distribution}, the error of Euclidean distance and cosine distance of any two local models in the low-dimensional space is bounded. Formally, we have the error bound of Euclidean distance with the probability at least $1-2e^{-(1+\frac{1}{2}\eta)\log n  }$:
\begin{equation}
\label{Euclidean_error}
(1-\varepsilon){\left \| L_i-L_j \right \|}^2 \leq {\left \| L'_i-L'_j \right \|}^2 \leq (1+\varepsilon){\left \| L_i-L_j \right \|}^2,
\end{equation}
where $L'_i$ and $L'_j$ are the low-dimensional local models after performing the dimensionality reduction component on local model $L_i$ and $L_j$. We have the error bound of cosine distance  with the probability at least $1-4e^{-(1+\frac{1}{2}\eta)\log n  }$:
\begin{multline}
\label{Cosine_error3}
\frac{1}{1-\varepsilon} \left ( 1-  \frac{L_i}{\left \| L_i \right \|} \cdot\frac{L_j}{\left \| L_j \right \|} -2\varepsilon \right )  \leq 1- \frac{L'_i}{\left \| L'_i \right \|}\cdot \frac{L'_j}{\left \| L'_j \right \|} \\
\leq \frac{1}{1+\varepsilon} \left ( 1- \frac{L_i}{\left \| L_i \right \|} \cdot\frac{L_j}{\left \| L_j \right \|} +2\varepsilon \right ) .
\end{multline}

\end{theorem}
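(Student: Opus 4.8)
The plan is to derive both bounds as consequences of the Johnson--Lindenstrauss concentration inequality for the $\pm1$ random projection~\cite{DBLP:journals/jcss/Achlioptas03}, specialized to the matrix $\mathbf{P}$ of Equation~\eqref{Distribution} and to the target dimension $k$ of~\eqref{Target_dimension}. Throughout I would treat the reduction map as the normalized projection $f(x)=\tfrac{1}{\sqrt{k}}\,x\mathbf{P}$, so that $\mathbb{E}\,\|f(x)\|^2=\|x\|^2$; the $1/\sqrt{k}$ scaling is immaterial for the (scale-invariant) cosine and is precisely what makes the clean $(1\pm\varepsilon)$ form of~\eqref{Euclidean_error} hold. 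The single controlling fact is that for any fixed $x\in\mathbb{R}^d$,
\[
\Pr\!\left[\,\big|\|f(x)\|^2-\|x\|^2\big|\ge \varepsilon\|x\|^2\,\right]\le 2\exp\!\left(-\tfrac{\varepsilon^2-\varepsilon^3}{4}\,k\right).
\]
Substituting the lower bound on $k$ from~\eqref{Target_dimension} turns the exponent into $(1+\tfrac12\eta)\log n$, so each application fails with probability at most $2e^{-(1+\frac12\eta)\log n}$.

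For the Euclidean bound~\eqref{Euclidean_error} I would apply the concentration inequality to the single fixed vector $x=L_i-L_j$, using linearity of the projection, i.e.\ $L'_i-L'_j=f(L_i-L_j)$. This immediately yields the two-sided bound with failure probability $2e^{-(1+\frac12\eta)\log n}$, matching the stated probability. This step is essentially a citation of~\cite{DBLP:journals/jcss/Achlioptas03} plus the arithmetic substitution of $k$, and I expect no difficulty here.

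For the cosine bound~\eqref{Cosine_error3} I would first use scale invariance of the cosine to replace $L_i,L_j$ by the unit vectors $u=L_i/\|L_i\|$ and $v=L_j/\|L_j\|$, noting that $L'_i/\|L'_i\|=f(u)/\|f(u)\|$. I then need two ingredients, each obtained from the concentration inequality. First, \emph{norm preservation}: applying it to $u$ and to $v$ gives $\|f(u)\|\,\|f(v)\|\in[1-\varepsilon,\,1+\varepsilon]$. Second, \emph{inner-product preservation}: applying it to $u+v$ and $u-v$ and using the polarization identity $\langle f(u),f(v)\rangle=\tfrac14\big(\|f(u+v)\|^2-\|f(u-v)\|^2\big)$ together with $\|u\pm v\|^2=2\pm2\cos\theta$ yields $\langle f(u),f(v)\rangle\in[\cos\theta-\varepsilon,\,\cos\theta+\varepsilon]$, where $\cos\theta=u\cdot v$. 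Writing the low-dimensional cosine as $\cos\theta'=\langle f(u),f(v)\rangle/(\|f(u)\|\,\|f(v)\|)$ and combining the additive $\pm\varepsilon$ distortion of the numerator with the multiplicative $(1\pm\varepsilon)$ distortion of the denominator gives, on the upper side, $1-\cos\theta'\le 1-\tfrac{\cos\theta-\varepsilon}{1+\varepsilon}=\tfrac{1}{1+\varepsilon}(1-\cos\theta+2\varepsilon)$, and symmetrically $1-\cos\theta'\ge\tfrac{1}{1-\varepsilon}(1-\cos\theta-2\varepsilon)$, which are exactly the two sides of~\eqref{Cosine_error3}. A union bound over the projection events used (controlling the two norms and the inner product) accounts for the larger failure probability $4e^{-(1+\frac12\eta)\log n}$.

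The main obstacle is the cosine step rather than the Euclidean one. Two points need care: the denominator analysis when extremizing $\langle f(u),f(v)\rangle/(\|f(u)\|\,\|f(v)\|)$ tacitly assumes the positively correlated regime $\cos\theta\ge\varepsilon$, so that the extremal denominator is selected as in the stated bound; this is the natural regime for benign FL updates and I would state it as a mild hypothesis. The bookkeeping of which one- and two-sided projection events enter the union bound must also be done consistently to land on the factor-$4$ probability. Finally, I would flag the normalization remark above, since the statement of~\eqref{Euclidean_error} as written omits the $1/\sqrt{k}$ factor that the clean multiplicative form requires.
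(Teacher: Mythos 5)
Your proposal is correct and takes essentially the same route as the paper: the paper's proof simply cites Arriaga--Vempala's Theorem 2 for the Euclidean bound and their Corollary 2 for inner-product preservation of vectors of norm at most one, then performs exactly your normalization, denominator-correction, and cosine-distance algebra, so your polarization-identity step is just an unpacking of that cited corollary. The caveats you flag are genuine but are shared by the paper's own argument: the stated Euclidean bound does presuppose the $1/\sqrt{k}$ normalization that is absent from the paper's projection algorithm, the denominator extremization silently assumes the regime $\cos\theta \ge \varepsilon$, and a strict union bound over your four events (two norm events plus two polarization events, each failing with probability $2e^{-(1+\frac{1}{2}\eta)\log n}$) actually yields $8e^{-(1+\frac{1}{2}\eta)\log n}$ rather than the claimed $4e^{-(1+\frac{1}{2}\eta)\log n}$ --- a constant-factor looseness the paper inherits as well, since it stacks the two norm-preservation events on top of Corollary 2's factor of four without adjusting the probability.
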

\begin{proof}
    See Appendix~\ref{proof:theo_1}.
\end{proof}

\begin{figure}[t]
\centering
\subfloat[Adaptive tuning]{\label{fig:tuning_2_a} \includegraphics[height=3cm]{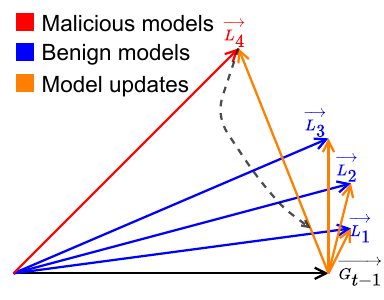}}
\subfloat[Aggregation]{\label{fig:tuning_2_b} \includegraphics[height=3cm]{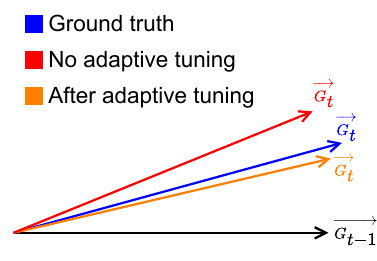}}\\
\caption{High-level idea of adaptive tuning strategy.}
\label{fig:tuning_2}
\end{figure}

\begin{assumption}
\label{assumption_1}
When vector-wise filtering does not consider dimensionality reduction, the difference between the aggregation of the accepted local models and the optimal aggregation of all benign local models is bounded. Formally, when the vector-wise filtering is based on Euclidean distances, we have the following for any $\alpha > 0$ :
\begin{equation}
\label{equ:assumption_1_1}
{\left \| G_t-G_t^* \right \|}^2 \leq \alpha,
\end{equation}
where $G_t$ is the aggregation of the local models accepted by vector-wise filtering in iteration $t$ of FL, $G_t^*$ is the optimal aggregation of all benign local models in iteration $t$ of FL. And when the vector-wise filtering is based on Cosine distances, we have the following for any $\alpha \in [0,2]$:
\begin{equation}
\label{equ:assumption_1_2}
1- \frac{G_t}{\left \| G_t \right \|} \cdot\frac{G_t^*}{\left \|G_t^* \right \|} \leq \alpha.
\end{equation}

\end{assumption}

\begin{theorem}
\label{theorem_2}
Suppose Assumption \ref{assumption_1} holds and the error of pairwise distances is bounded according to Theorem \ref{theorem_1}. When the vector-wise filtering is in our ABBR framework, the difference between the aggregation of the accepted local models and the optimal aggregation of all benign local models is bounded. Formally, when the vector-wise filtering is based on Euclidean distances, we have the following for any $\alpha > 0$ :
\begin{equation}
\label{equ:assumption_1_1}
{\left \| G_t-G_t^* \right \|}^2 \leq \frac{1+\varepsilon}{1-\varepsilon} \alpha,
\end{equation}
where $G_t$ is the aggregation of the local models accepted by vector-wise filtering in iteration $t$ of FL, $G_t^*$ is the optimal aggregation of all benign local models in iteration $t$ of FL. And when the vector-wise filtering is based on Cosine distances, we have the following for any $\alpha \in [0,2]$:
\begin{equation}
\label{equ:assumption_1_2}
1- \frac{G_t}{\left \| G_t \right \|} \cdot\frac{G_t^*}{\left \|G_t^* \right \|} \leq \frac{1-\varepsilon}{1+\varepsilon} (\alpha + 2\varepsilon) + 2\varepsilon.
\end{equation}

\end{theorem}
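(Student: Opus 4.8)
The plan is to bound the ABBR error by chaining Assumption~\ref{assumption_1} with the two directions of Theorem~\ref{theorem_1}, exploiting that filtering happens in the low-dimensional space while aggregation happens in the original space. The central observation is that the random projection $\mathbf{P}$ is linear, so projection commutes with averaging: writing $I_{opt}$ for the accepted index set, the low-dimensional aggregate $G_t^{\prime}$ equals $\mathbf{P}^\top G_t$ and the low-dimensional optimal benign aggregate $G_t^{\prime*}$ equals $\mathbf{P}^\top G_t^{*}$. Hence $G_t^{\prime} - G_t^{\prime*}$ is exactly the projection of $G_t - G_t^{*}$, which lets me treat the pair $(G_t, G_t^{*})$ with the distance-preservation bounds of Theorem~\ref{theorem_1}.

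For the Euclidean case I would proceed in three moves. First, Assumption~\ref{assumption_1} supplies the high-dimensional guarantee $\|G_t - G_t^{*}\|^2 \le \alpha$ for the filtering rule run with faithful distances. Second, I apply the upper bound of Inequality~\ref{Euclidean_error} to transfer this into the low-dimensional space, so that the aggregation error certified by the low-dimensional filtering is at most $(1+\varepsilon)\alpha$, i.e.\ $\|G_t^{\prime} - G_t^{\prime*}\|^2 \le (1+\varepsilon)\alpha$. Third, I apply the lower bound of Inequality~\ref{Euclidean_error} to the pair $(G_t, G_t^{*})$, giving $(1-\varepsilon)\|G_t - G_t^{*}\|^2 \le \|G_t^{\prime} - G_t^{\prime*}\|^2 \le (1+\varepsilon)\alpha$, and dividing by $1-\varepsilon$ yields the claimed $\frac{1+\varepsilon}{1-\varepsilon}\alpha$. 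The cosine case is identical in structure but uses Inequality~\ref{Cosine_error3}: the upper bound converts the assumption's budget $\alpha$ into a low-dimensional cosine-distance budget $\frac{1}{1+\varepsilon}(\alpha+2\varepsilon)$, and the lower bound converts it back, producing $1 - \frac{G_t}{\|G_t\|}\cdot\frac{G_t^{*}}{\|G_t^{*}\|} \le \frac{1-\varepsilon}{1+\varepsilon}(\alpha+2\varepsilon) + 2\varepsilon$ after collecting the additive $2\varepsilon$ terms. Both constants fall out precisely because each direction of Theorem~\ref{theorem_1} is invoked exactly once, in opposite directions.

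The main obstacle is the middle move: justifying that the set selected by low-dimensional filtering is, when measured in the low-dimensional geometry, no worse than the projection of the set Assumption~\ref{assumption_1} refers to. This is where the near-isometry of $\mathbf{P}$ is essential --- because Theorem~\ref{theorem_1} preserves every pairwise distance up to a factor $(1\pm\varepsilon)$, the neighbourhood and ordering structure that any vector-wise filtering rule relies on is approximately preserved, so its low-dimensional decision cannot be much worse than its high-dimensional one. A related subtlety I would need to handle carefully is that Theorem~\ref{theorem_1} is stated for individual model pairs $L_i, L_j$, whereas here it is applied to the aggregate difference $G_t - G_t^{*}$; since $I_{opt}$ is chosen after $\mathbf{P}$ is fixed, this vector is not one of the pre-specified pairs, so I would either enlarge the distance-preservation event to cover the finitely many candidate aggregates or argue that the relevant combination is already controlled by the union bound behind Theorem~\ref{theorem_1}. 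Once that transfer is secured, the rest is the routine rearrangement above.
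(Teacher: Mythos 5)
Your proposal follows essentially the same route as the paper's own proof: the paper formalizes vector-wise filtering as an optimal subset selection, uses the optimality of the low-dimensional selection (via a short contradiction/case analysis) to bound its low-dimensional error by that of the projected high-dimensional selection, and then chains the two directions of Theorem~\ref{theorem_1} exactly once each — the upper bound on the $I_{opt}$ side and the lower bound on the $I'_{opt}$ side — to obtain the factor $\frac{1+\varepsilon}{1-\varepsilon}$, which is precisely your three moves. The one point worth noting is that you explicitly flag the subtlety of applying Theorem~\ref{theorem_1}'s pairwise guarantee to post hoc aggregate differences, a gap the paper's proof passes over silently; your structure is otherwise identical.
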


\begin{proof}
    See Appendix~\ref{proof:theo_2}.
\end{proof}

\begin{table}[t]
        \small

	\centering
	\renewcommand\arraystretch{1.5}
        \caption{Datasets and models used in our evaluations.}
        \label{data_model}
        
	\begin{tabular}{c|c|c|c}
		\hline
		
            Datasets &\#Records &Model &\#params\\
            \hline
            CIFAR-10 &60k &\makecell{ResNet-18 \\ Lightweight }&$\approx$2.8M\\
            \hline
            \makecell{EMNIST \\ (Digits)} &280k &\makecell{CNN \\ (2 conv and 2 fc)} &$\approx$266k\\
            \hline
            Fashion-MNIST &70k &\makecell{ResNet-18 \\ Featherweight} &$\approx$700k\\
            \hline
            Tiny-ImageNet &120k &ResNet-18 &$\approx$11M\\
            \hline
            
		\hline  
	\end{tabular}

        %\vspace{-2.5em}
\end{table}

\subsection{Tuning Strategy}
\label{subsubsec:tunning_strategy}

As we analyzed in Section~\ref{subsubsec:error_analysis}, the distortion of the distance causes the upper bound on the deviation between the aggregated global model and the optimal global model to become larger. Therefore, we propose the adaptive tuning strategy to reduce the deviation between the aggregated global model and the optimal global model. Due to distance distortion, some malicious local models may be misclassified into the accepted group. Figure \ref{fig:tuning_1} depicts the change in the relative position of local models after they are projected from the original-dimensional space to the low-dimensional space. We can observe that in the original-dimensional space, the relative positions between benign local models are relatively close, so they are accepted by vector-wise filtering. However, the relative positions between the benign local model $L_3$ and the malicious local model $L_4$ are reversed in the low-dimensional space. As a result, the malicious local model $L_4$ is accepted but the benign local model $L_3$ is filtered out. In addition, since ABBR aggregates accepted local models in the original-dimensional space, the aggregated model is inevitably biased in the direction of the malicious model $L_4$.

Since the low-dimensional space is random, it is difficult to predict how the relative positions of local models will shift, making it challenging to mitigate the negative effects of incorrectly accepting malicious models. Drawing on the concept of norm clipping used in previous studies \cite{nguyen2022flame, sun2019can} to reduce the influence of malicious models on the global model, we propose an adaptive tuning strategy, as described in Algorithm~\ref{alg:adaptive_tuning} in Appendix. In this strategy, the clipping bound and factor for local updates are dynamically adjusted in the low-dimensional space, and local updates with norms exceeding the clipping bound are adaptively clipped in the original-dimensional space.

\myparatight{Dynamically Determining the clipping bound and clipping factor} 
As shown in Figure \ref{fig:tuning_1}, in the original-dimensional space, the malicious local model deviates further from the global model compared to the benign local models. This occurs because the adversary's objective of manipulating the local model is entirely different from the training goal of benign models~\cite{cao2020fltrust,nguyen2022flame}. Based on this observation, and assuming that more than half of the clients are benign, we can infer that the median norm of all local model updates is more likely to belong to a benign model. Therefore, we set the clipping bound $S_1$ as the median of the $\ell_2$ norms of the $n$ model updates, $S_1 = \Call{Median}{e_1, e_2, \cdots, e_n}$, where $e_i$ represents the norm of the $i$-th local model update. Additionally, the smaller clipping bound $S_2$ is determined as the minimum norm, $S_2 = \Call{Min}{e_1, e_2, \cdots, e_n}$. Finally, using the bounds $S_1$ and $S_2$, we compute the adaptive clipping factor $\gamma$ for each local model updates in the accepted set as follows:
\begin{equation}
\label{clipping_factor}
\gamma_i =
\begin{cases}
 \frac{S_2}{e_i} & \text{ if } e_i > S_1 \\
 1 &  \text{ if } e_i \le S_1,
\end{cases}
\end{equation}

Our approach significantly differs from previous works~\cite{nguyen2022flame, sun2019can} in how we determine the clipping factor. Instead of simply clipping local model updates with norms exceeding the bound $S_1$, we clip them to the smaller bound $S_2$, as shown in Figure~\ref{fig:tuning_2_a}. The norms of benign local model updates within the accepted set are usually smaller than $S_1$. If a malicious model update is clipped only to $S_1$, its norm may still surpass that of some benign model updates, causing the aggregated model to be heavily influenced by the malicious model. In our approach, the malicious model update is clipped to the smaller bound $S_2$. Since most benign model updates remain unclipped, the resulting aggregated model will tend to be closer to the true model, as shown in Figure~\ref{fig:tuning_2_b}. Figures~\ref{fig:bcr_impact} to~\ref{fig:alpha_impact} demonstrate that our adaptive tuning strategy is highly effective, especially for vector-wise filtering like FoolsGold, greatly enhancing its robustness.

\myparatight{Efficient adaptive clipping}
Calculating the norms of all local model updates in the original-dimensional space would require $O(nd)$ private multiplication operations in STPC, so we opt to compute the norm and determine the clipping factor in the low-dimensional space. While there may be some errors in the clipping factor, leading to malicious local updates not being fully clipped to the bound $S_2$, our experiments (Figures~\ref{fig:bcr_impact} to~\ref{fig:alpha_impact}) show that this strategy does not compromise the resilience of ABBR. Furthermore, even if we calculate the norm in the original-dimensional space, the additional overhead would not diminish the efficiency advantage ABBR holds over previous methods.

\myparatight{Privacy analysis} We discuss whether exposing the clipping factor as plaintext would compromise privacy. First, it is difficult for the server to infer the norm $e_i$ of the $i$-th low-dimensional local model update by the clipping factor $\gamma_i$. This is because the server does not know the norm of each low-dimensional local model update, $S_1$ and $S_2$. Second, the norm of the low-dimensional local model update is a random number within the error range of the norm of the original-dimensional local model update. Even if the server infers the norm of the low-dimensional local model update, it is difficult to infer the norm of the original-dimensional local model update. If there is a stronger need for privacy protection, the clipping factor can also be kept private, the additional overhead introduced does not change the efficiency advantage of ABBR over prior methods.

% !TEX root = mainfile.tex

\begin{table*}[htp]
    \footnotesize
	\centering
	
	\caption{The runtime (in seconds) of private vector-wise filtering in the setup phase. The notation ``-'' denotes no result due to the method exceeding the 30,000s time limit.}
	\label{tab:setup_runtime}
    \renewcommand\arraystretch{1}
	\begin{tabular}{c|c|c|c|c|c|c|c|c|c}
		\hline

            \multirow{2}{*}{Model} &\multirow{2}{*}{\#clients}  &\multicolumn{4}{c|}{Baseline} &\multicolumn{4}{c}{ABBR}\\
  
            &\multirow{2}{*}{~} &Multi-Krum  &FoolsGold &FABA &FLAME &Multi-Krum &FoolsGold  &FABA & FLAME \\
		
		\hline
		\multirow{6}{*}{\makecell{CNN \\ (\#params $\approx 266$k)}} 
		& 4   & 4.05  & 6.79  & 11    & 6.78  & \textbf{0.37} & \textbf{0.45} & \textbf{0.53} & \textbf{0.43} \\
 & 8   & 22    & 27    & 43    & 27    & \textbf{1.1}  & \textbf{1.28} & \textbf{1.94} & \textbf{1.25} \\
 & 16  & 85    & 96    & 171   & 96    & \textbf{4.29} & \textbf{4.7}  & \textbf{7.92} & \textbf{4.66} \\
 & 32  & 390   & 398   & 755   & 398   & \textbf{17}   & \textbf{18}   & \textbf{32.}  & \textbf{18}   \\
 & 64  & 1573  & 1606  & 2702  & 1606  & \textbf{73}   & \textbf{77}   & \textbf{142}  & \textbf{77}   \\
 & 128 & 6445  & 6578  & 12444 & 6576  & \textbf{286}  & \textbf{332}  & \textbf{625}  & \textbf{329}  \\
		\hline
		\multirow{6}{*}{\makecell{ResNet-18 Light \\ (\#params $\approx 2.8$M)}} 
		& 4   & 32    & 69    & 84    & 69    & \textbf{2.85} & \textbf{2.92} & \textbf{3.01} & \textbf{2.91} \\
 & 8   & 158   & 216   & 358   & 216   & \textbf{6.55} & \textbf{6.74} & \textbf{7.4}  & \textbf{6.71} \\
 & 16  & 744   & 879   & 1690  & 879   & \textbf{16}   & \textbf{16}   & \textbf{20}   & \textbf{16}   \\
 & 32  & 2839  & 2999  & 6937  & 2999  & \textbf{42}   & \textbf{43}   & \textbf{57}   & \textbf{43}   \\
 & 64  & 11412 & 12522 & 21345 & 12522 & \textbf{128}  & \textbf{133}  & \textbf{197}  & \textbf{133}  \\
 & 128 & -     & -     & -     & -     & \textbf{411}  & \textbf{458}  & \textbf{751}  & \textbf{455}  \\
		\hline
		\multirow{6}{*}{\makecell{ResNet-18 \\ (\#params $\approx 11$M)}} 
		& 4   & 141   & 271   & 288   & 271   & \textbf{8.69} & \textbf{8.76} & \textbf{8.85} & \textbf{8.75} \\
 & 8   & 709   & 804   & 1631  & 804   & \textbf{23}   & \textbf{23}   & \textbf{24}   & \textbf{23}   \\
 & 16  & 3161  & 2842  & 5173  & 2842  & \textbf{54}   & \textbf{54}   & \textbf{57}   & \textbf{54}   \\
 & 32  & 10814 & 10653 & 22626 & 10653 & \textbf{125}  & \textbf{126}  & \textbf{140}  & \textbf{126}  \\
 & 64  & -     & -     & -     & -     & \textbf{321}  & \textbf{326}  & \textbf{390}  & \textbf{326}  \\
 & 128 & -     & -     & -     & -     & \textbf{830}  & \textbf{876}  & \textbf{1169} & \textbf{873} \\
		\hline
	\end{tabular}
	
        % \vspace{-3.0em}
\end{table*}

\begin{table*}[htp]
    \footnotesize
	\centering
	
	\caption{The runtime (in seconds) of private vector-wise filtering in the online phase.}
	\label{tab:online_runtime}
    \renewcommand\arraystretch{1}
	\begin{tabular}{c|c|c|c|c|c|c|c|c|c}
		\hline
	
            \multirow{2}{*}{Model} &\multirow{2}{*}{\#clients}  &\multicolumn{4}{c|}{Baseline} &\multicolumn{4}{c}{ABBR}\\
  
            &\multirow{2}{*}{~} &Multi-Krum  &FoolsGold &FABA &FLAME &Multi-Krum &FoolsGold  &FABA & FLAME \\
		
		\hline
		\multirow{6}{*}{\makecell{CNN \\ (\#params $\approx 266$k)}} 
		& 4   & 1.15 & 1.74 & 2.34 & 1.72 & \textbf{0.08} & \textbf{0.11} & \textbf{0.11} & \textbf{0.09} \\
 & 8   & 5.28 & 6.92 & 9.11 & 6.87 & \textbf{0.22} & \textbf{0.27} & \textbf{0.52} & \textbf{0.22} \\
 & 16  & 22   & 24   & 36   & 24   & \textbf{0.79} & \textbf{0.9}  & \textbf{2.25} & \textbf{0.84} \\
 & 32  & 89   & 103  & 164  & 104  & \textbf{3.32} & \textbf{3.8}  & \textbf{9.7}  & \textbf{3.89} \\
 & 64  & 390  & 369  & 626  & 371  & \textbf{13}   & \textbf{16}   & \textbf{45}   & \textbf{17}   \\
 & 128 & 1444 & 1467 & 3013 & 1474 & \textbf{35}   & \textbf{70}   & \textbf{210}  & \textbf{76}   \\
		\hline
		\multirow{6}{*}{\makecell{ResNet-18 Light \\ (\#params $\approx 2.8$M)}} 
		& 4   & 12   & 19   & 25   & 19   & \textbf{0.4}  & \textbf{0.43} & \textbf{0.11} & \textbf{0.41} \\
 & 8   & 55   & 68   & 111  & 68   & \textbf{0.58} & \textbf{0.64} & \textbf{0.52} & \textbf{0.59} \\
 & 16  & 236  & 261  & 441  & 261  & \textbf{1.23} & \textbf{1.38} & \textbf{2.25} & \textbf{1.32} \\
 & 32  & 1000 & 996  & 1645 & 996  & \textbf{4.01} & \textbf{4.5}  & \textbf{9.7}  & \textbf{4.59} \\
 & 64  & 4046 & 3962 & 6411 & 3963 & \textbf{14}   & \textbf{17}   & \textbf{45}   & \textbf{18}   \\
 & 128 & -    & -    & -    & -    & \textbf{37}   & \textbf{72}   & \textbf{210}  & \textbf{78}   \\
		\hline
		\multirow{6}{*}{\makecell{ResNet-18 \\ (\#params $\approx 11$M)}} 
		& 4   & 47   & 81   & 98   & 81   & \textbf{1.29} & \textbf{1.32} & \textbf{0.11} & \textbf{1.30} \\
 & 8   & 222  & 278  & 429  & 278  & \textbf{1.67} & \textbf{1.72} & \textbf{0.52} & \textbf{1.67} \\
 & 16  & 928  & 1042 & 1619 & 1042 & \textbf{2.73} & \textbf{2.86} & \textbf{2.25} & \textbf{2.8}  \\
 & 32  & 4031 & 4187 & 6678 & 4187 & \textbf{6.5}  & \textbf{7}    & \textbf{9.7}  & \textbf{7.09} \\
 & 64  & -    & -    & -    & -    & \textbf{18}   & \textbf{22}   & \textbf{45}   & \textbf{23}   \\
 & 128 & -    & -    & -    & -    & \textbf{46}   & \textbf{81}   & \textbf{210}  & \textbf{88} \\
		\hline
	\end{tabular}
	
        % \vspace{-3.0em}
\end{table*}

\section{Evaluation}

In this section, we evaluate the performance and robustness of our ABBR framework. We implement the ABBR construction of state-of-the-art vector-wise filtering rules Multi-Krum \cite{he2020secure}, FoolsGold \cite{fung2018mitigating}, FABA \cite{xia2019faba} and FLAME \cite{nguyen2022flame}, denoted as ABBR versions. In addition, we implement the private Multi-Krum, FoolsGold, FABA, and FLAME as baselines according to the private computation scheme in \cite{nguyen2022flame}. To evaluate the performance, we compare the runtime and communication overhead between ABBR versions and baselines. To evaluate the resilience to Byzantine failures, we compare the Byzantine-resilient performance between ABBR versions and baselines under various Byzantine attacks.

\subsection{Setup}
\label{subsec:experimental_setup}

\myparatight{Datasets}
We consider four datasets: MNIST \cite{deng2012mnist}, Fashion-MNIST \cite{xiao2017fashion}, CIFAR-10 \cite{krizhevsky2009learning} and Tiny-ImageNet \cite{le2015tiny}, which are commonly used to evaluate Byzantine attacks and defenses in FL. The details of them are described in Appendix~\ref{detailed_setup}.

\myparatight{Models}
For performance evaluation, we use three models of varying sizes to comprehensively assess the performance benefits of the ABBR framework: a convolutional neural network (CNN), lightweight ResNet-18~\cite{he2016deep}, and ResNet-18~\cite{he2016deep}. The CNN, lightweight ResNet-18, and ResNet-18 models have 266K, 2.8M, and 11M parameters, respectively. For resilience evaluation, we employ CNN for EMNIST, featherweight ResNet-18 for Fashion-MNIST, lightweight ResNet-18 for CIFAR-10, and ResNet-18 for Tiny-ImageNet. A summary of the datasets and models can be found in Table~\ref{data_model}.

\begin{table*}[t]
    \footnotesize
	\centering
	
	\caption{The communication overhead (GB) of private vector-wise filtering in the setup phase.}
	\label{tab:setup_communication}
    \renewcommand\arraystretch{1}
	\begin{tabular}{c|c|c|c|c|c|c|c|c|c}
		\hline
	
            \multirow{2}{*}{Model} &\multirow{2}{*}{\#clients}  &\multicolumn{4}{c|}{Baseline} &\multicolumn{4}{c}{ABBR}\\
  
            &\multirow{2}{*}{~} &Multi-Krum  &FoolsGold &FABA &FLAME &Multi-Krum &FoolsGold  &FABA & FLAME \\
		
		\hline
		\multirow{6}{*}{\makecell{CNN \\ (\#params $\approx 266$k)}} 
		& 4   & 4.56  & 7.61  & 12    & 7.61  & \textbf{0.01} & \textbf{0.03} & \textbf{0.05} & \textbf{0.03} \\
 & 8   & 21    & 27    & 49    & 27    & \textbf{0.11} & \textbf{0.16} & \textbf{0.27} & \textbf{0.15} \\
 & 16  & 91    & 104   & 195   & 104   & \textbf{0.65} & \textbf{0.78} & \textbf{1.41} & \textbf{0.74} \\
 & 32  & 378   & 402   & 779   & 402   & \textbf{3.46} & \textbf{3.88} & \textbf{6.93} & \textbf{3.54} \\
 & 64  & 1535  & 1586  & 3118  & 1583  & \textbf{17}   & \textbf{19}   & \textbf{33}   & \textbf{17}   \\
 & 128 & 6194  & 6307  & 12473 & 6285  & \textbf{85}   & \textbf{99}   & \textbf{154}  & \textbf{77}   \\
		\hline
		\multirow{6}{*}{\makecell{ResNet-18 Light \\ (\#params $\approx 2.8$M)}} 
		& 4   & 48    & 80    & 128   & 80    & \textbf{0.01} & \textbf{0.03} & \textbf{0.05} & \textbf{0.03} \\
 & 8   & 225   & 289   & 512   & 289   & \textbf{0.11} & \textbf{0.16} & \textbf{0.27} & \textbf{0.15} \\
 & 16  & 960   & 1090  & 2051  & 1090  & \textbf{0.65} & \textbf{0.78} & \textbf{1.41} & \textbf{0.74} \\
 & 32  & 3970  & 4233  & 8209  & 4233  & \textbf{3.46} & \textbf{3.88} & \textbf{6.93} & \textbf{3.54} \\
 & 64  & 16068 & 16680 & 32839 & 16677 & \textbf{17}   & \textbf{19}   & \textbf{33}   & \textbf{17}   \\
 & 128 & -     & -     & -     & -     & \textbf{85}   & \textbf{99}   & \textbf{154}  & \textbf{77}   \\
		\hline
		\multirow{6}{*}{\makecell{ResNet-18 \\ (\#params $\approx 11$M)}} 
		& 4   & 194   & 323   & 517   & 323   & \textbf{0.01} & \textbf{0.03} & \textbf{0.05} & \textbf{0.03} \\
 & 8   & 901   & 1163  & 2066  & 1163  & \textbf{0.11} & \textbf{0.16} & \textbf{0.27} & \textbf{0.15} \\
 & 16  & 3876  & 4389  & 8267  & 4389  & \textbf{0.65} & \textbf{0.78} & \textbf{1.41} & \textbf{0.74} \\
 & 32  & 15974 & 17039 & 33071 & 17039 & \textbf{3.46} & \textbf{3.88} & \textbf{6.93} & \textbf{3.54} \\
 & 64  & -     & -     & -     & -     & \textbf{17}   & \textbf{19}   & \textbf{33}   & \textbf{17}   \\
 & 128 & -     & -     & -     & -     & \textbf{85}   & \textbf{99}   & \textbf{154}  & \textbf{77} \\
		\hline
	\end{tabular}
	
        % \vspace{-3.0em}
\end{table*}

\begin{table*}[t]
    \footnotesize
	\centering
	
	\caption{The communication overhead (GB) of private vector-wise filtering in the online phase.}
	\label{tab:online_communication}
    \renewcommand\arraystretch{1}
	\begin{tabular}{c|c|c|c|c|c|c|c|c|c}
		\hline
		
            \multirow{2}{*}{Model} &\multirow{2}{*}{\#clients}  &\multicolumn{4}{c|}{Baseline} &\multicolumn{4}{c}{ABBR}\\
  
            &\multirow{2}{*}{~} &Multi-Krum  &FoolsGold &FABA &FLAME &Multi-Krum &FoolsGold  &FABA & FLAME \\
		
		\hline
		\multirow{6}{*}{\makecell{CNN \\ (\#params $\approx 266$k)}} 
		& 4   & 0.04 & 0.08 & 0.13 & 0.08 & \textbf{0.003} & \textbf{0.003} & \textbf{0.0006} & \textbf{0.003} \\
 & 8   & 0.22 & 0.28 & 0.51 & 0.28 & \textbf{0.005} & \textbf{0.005} & \textbf{0.003}  & \textbf{0.005} \\
 & 16  & 0.95 & 1.08 & 2.03 & 1.07 & \textbf{0.01}  & \textbf{0.01}  & \textbf{0.02}   & \textbf{0.01}  \\
 & 32  & 3.94 & 4.2  & 8.13 & 4.19 & \textbf{0.04}  & \textbf{0.05}  & \textbf{0.08}   & \textbf{0.04}  \\
 & 64  & 16   & 17   & 32   & 17   & \textbf{0.19}  & \textbf{0.25}  & \textbf{0.36}   & \textbf{0.19}  \\
 & 128 & 65   & 66   & 130  & 66   & \textbf{0.98}  & \textbf{1.25}  & \textbf{1.66}   & \textbf{0.9}   \\
		\hline
		\multirow{6}{*}{\makecell{ResNet-18 Light \\ (\#params $\approx 2.8$M)}} 
		& 4   & 0.54 & 0.87 & 1.34 & 0.87 & \textbf{0.04}  & \textbf{0.04}  & \textbf{0.0006} & \textbf{0.04}  \\
 & 8   & 2.37 & 3.04 & 5.35 & 3.04 & \textbf{0.04}  & \textbf{0.04}  & \textbf{0.003}  & \textbf{0.04}  \\
 & 16  & 10   & 11   & 21   & 11   & \textbf{0.05}  & \textbf{0.05}  & \textbf{0.02}   & \textbf{0.05}  \\
 & 32  & 41   & 44   & 86   & 44   & \textbf{0.08}  & \textbf{0.09}  & \textbf{0.08}   & \textbf{0.08}  \\
 & 64  & 168  & 174  & 342  & 174  & \textbf{0.23}  & \textbf{0.28}  & \textbf{0.36}   & \textbf{0.23}  \\
 & 128 & -    & -    & -    & -    & \textbf{1.02}  & \textbf{1.29}  & \textbf{1.66}   & \textbf{0.93}  \\
		\hline
		\multirow{6}{*}{\makecell{ResNet-18 \\ (\#params $\approx 11$M)}} 
		& 4   & 2.17 & 3.52 & 5.38 & 3.52 & \textbf{0.16}  & \textbf{0.16}  & \textbf{0.0006} & \textbf{0.16}  \\
 & 8   & 9.58 & 12   & 22   & 12   & \textbf{0.16}  & \textbf{0.16}  & \textbf{0.003}  & \textbf{0.16}  \\
 & 16  & 40   & 46   & 86   & 46   & \textbf{0.17}  & \textbf{0.17}  & \textbf{0.02}   & \textbf{0.17}  \\
 & 32  & 167  & 178  & 345  & 178  & \textbf{0.2}   & \textbf{0.21}  & \textbf{0.08}   & \textbf{0.2}   \\
 & 64  & -    & -    & -    & -    & \textbf{0.35}  & \textbf{0.4}   & \textbf{0.36}   & \textbf{0.35}  \\
 & 128 & -    & -    & -    & -    & \textbf{1.14}  & \textbf{1.41}  & \textbf{1.66}   & \textbf{1.05} \\
		\hline
	\end{tabular}
	
        % \vspace{-3.0em}
\end{table*}

\begin{table*}[t]
    \footnotesize
	\centering
	
	\caption{Robustness against various attacks on different datasets. The Metrics are Backdoor Accuracy (BA) and Main Task Accuracy (MA), and all values are percentages. Label Flipping and Gaussian attacks are untargeted and thus have no BA metric.}
	\label{tab:robustness}
    \renewcommand\arraystretch{1}
    \setlength\tabcolsep{5pt}
	\begin{tabular}{c|c|c c|c c|c c|c c|c c|c c|c c|c c}
		\hline
		\multirow{3}{*}{Attack} &\multirow{3}{*}{Dataset}  &\multicolumn{8}{c|}{Baseline} &\multicolumn{8}{c}{\textbf{ABBR}}\\
  
        &\multirow{3}{*}{~} &\multicolumn{2}{c|}{ Multi-Krum}  &\multicolumn{2}{c|}{ FoolsGold} &\multicolumn{2}{c|}{FABA} &\multicolumn{2}{c|}{FLAME} &\multicolumn{2}{c|}{Multi-Krum}   &\multicolumn{2}{c|}{FoolsGold}  &\multicolumn{2}{c|}{FABA}  & \multicolumn{2}{c}{ FLAME} \\
        \cline{3-18}
        &\multirow{3}{*}{~} &BA &MA &BA &MA &BA &MA &BA &MA &BA &MA &BA &MA &BA &MA &BA &MA\\
		\hline
		\multirow{4}{*}{\makecell{Benign \\ Setting}} 
		&CIFAR-10 &- &78.2 &- &80.6 &- &77.6 &- &77.9 &- &78 &- &80.3 &- &77.2 &- &75.4  \\
		& EMNIST        & - & 98.6 & - & 98.7 & - & 98.4 & - & 98.5 & - & 98.3 & - & 98.7 & - & 98.4 & - & 98.4 \\
 & Fashion-MNIST & - & 90.8 & - & 92.2 & - & 90.9 & - & 91.5 & - & 91   & - & 92.2 & - & 90.9 & - & 90.9 \\
 & Tiny-ImageNet & - & 47.4 & - & 54.6 & - & 47.4 & - & 49.5 & - & 47.2 & - & 54.6 & - & 47.1 & - & 49.2\\
		\hline
		\multirow{4}{*}{\makecell{Constrain \\ and \\ Scale}} 
		& CIFAR-10      & 2.1 & 80.1 & 92  & 74.7 & 2.2 & 80.2 & 2.1 & 79.7 & 2.2 & 80.2 & 2.7 & 79.1 & 2.4 & 80   & 1.9 & 80   \\
 & EMNIST        & 0.1 & 98.8 & 1.4 & 97.2 & 0.2 & 98.8 & 0.2 & 98.8 & 0.1 & 98.7 & 0.1 & 98.7 & 0.1 & 98.7 & 0.2 & 98.8 \\
 & Fashion-MNIST & 2.3 & 92.3 & 1.8 & 92.4 & 2.3 & 92.4 & 2.6 & 92.2 & 2.3 & 92.4 & 2.4 & 92.4 & 2.3 & 92.4 & 2.5 & 92.3 \\
		& Tiny-ImageNet & 0.1 & 57.3 & 45.4 & 40.7 & 0.1 & 57.3 & 0.1 & 57.4 & 0.1 & 57.4 & 0.3 & 57   & 0.1 & 57.4 & 0.1 & 57.2\\
		\hline
		\multirow{4}{*}{DBA} 
		& CIFAR-10      & 2.1 & 79.9 & 93   & 53.9 & 2.1 & 79.9 & 2.2 & 79.4 & 2   & 80   & 5.2 & 79.1 & 2.1 & 80   & 2.1 & 79.7 \\
 & EMNIST        & 0.1 & 98.8 & 99.5 & 91.1 & 0.2 & 98.7 & 0.2 & 98.7 & 0.1 & 98.7 & 0.3 & 98.7 & 0.1 & 98.7 & 0.1 & 98.7 \\
 & Fashion-MNIST & 2.6 & 92.1 & 96.6 & 81.8 & 2.3 & 92.4 & 2.3 & 92.3 & 2.3 & 92.3 & 3.8 & 92.2 & 2.3 & 92.4 & 2.5 & 92.2 \\
		& Tiny-ImageNet & 0.1 & 57   & 76 & 20.6 & 0.1 & 57   & 0.1 & 56.9 & 0.1 & 57   & 1.1 & 56.8 & 0.1 & 57   & 0.1 & 57\\
		\hline
        \multirow{1}{*}{Edge-Case} 
		& CIFAR-10      & 3.0   & 79.4 & 88.2 & 59.0   & 3.0   & 79.4 & 4.0   & 78.2 & 3.0   & 79.3 & 9.1 & 78.3 & 3.0   & 79.3 & 4.0   & 78.8 \\
		\hline
        \multirow{1}{*}{PGD} 
		& CIFAR-10      & 3.0   & 79.4 & 64.2 & 68.9 & 3.0   & 79.4 & 4.0   & 78.2 & 3.0   & 79.3 & 7.6 & 78.4 & 3.0   & 79.3 & 4.0   & 78.8 \\
		\hline
        \multirow{4}{*}{\makecell{Label \\ Flipping}} 
		& CIFAR-10      & - & 81.1 & - & 79.5 & - & 81.2 & - & 80.6 & - & 81.1 & - & 79.7 & - & 81.1 & - & 80.7 \\
 & EMNIST        & - & 98.8 & - & 98.2 & - & 98.9 & - & 98.9 & - & 98.9 & - & 98.2 & - & 98.9 & - & 98.9 \\
 & Fashion-MNIST & - & 92.4 & - & 90.3 & - & 92.4 & - & 92.5 & - & 92.3 & - & 90.1 & - & 92.3 & - & 92.3 \\
 & Tiny-ImageNet & - & 57.4 & - & 49.8 & - & 57.3 & - & 57.5 & - & 57.4 & - & 51.3 & - & 57.5 & - & 57.4\\
        \hline
        \multirow{4}{*}{\makecell{Gaussian \\ Attack}} 
		& CIFAR-10      & - & 81.3 & - & 79.3 & - & 81.3 & - & 81   & - & 81.2 & - & 79.7 & - & 81.2 & - & 81.1 \\
 & EMNIST        & - & 98.9 & - & 98.9 & - & 98.8 & - & 98.9 & - & 98.9 & - & 98.9 & - & 98.8 & - & 98.9 \\
 & Fashion-MNIST & - & 92.4 & - & 92.2 & - & 92.5 & - & 92.5 & - & 92.5 & - & 92.3 & - & 92.4 & - & 92.4 \\
 & Tiny-ImageNet & - & 57.6 & - & 57.9 & - & 57.5 & - & 57.5 & - & 57.6 & - & 57.9 & - & 57.6 & - & 57.4\\
        \hline
	\end{tabular}
	
        % \vspace{-3.0em}
\end{table*}

\begin{figure*}[t]
\centering
\subfloat[Multi-Krum]{\label{fig:multi_krum_various_byzantine_clients} \includegraphics[width=0.23\textwidth]{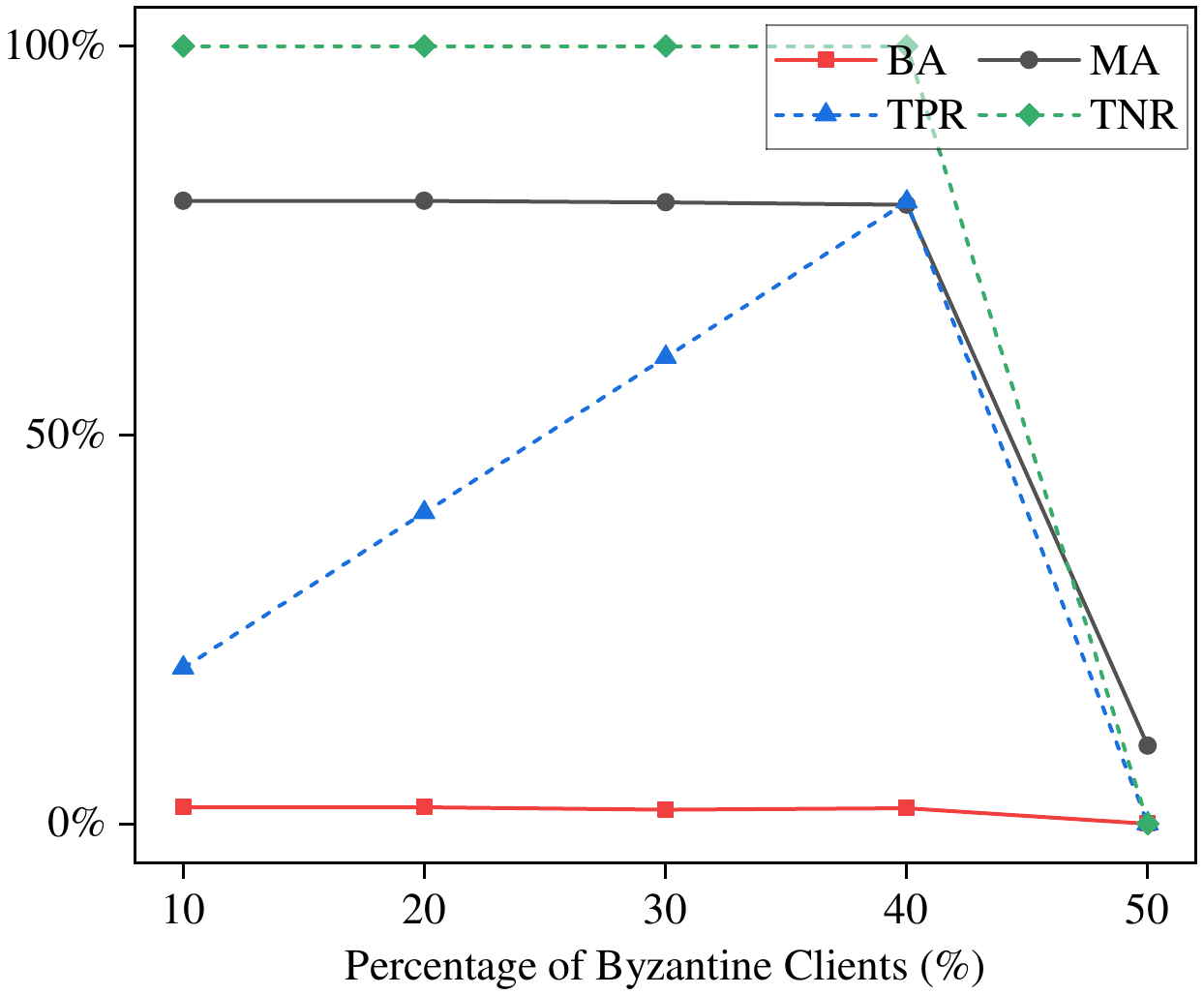}}
\subfloat[FoolsGold]{\label{fig:foolsgold_various_byzantine_clients} \includegraphics[width=0.23\textwidth]{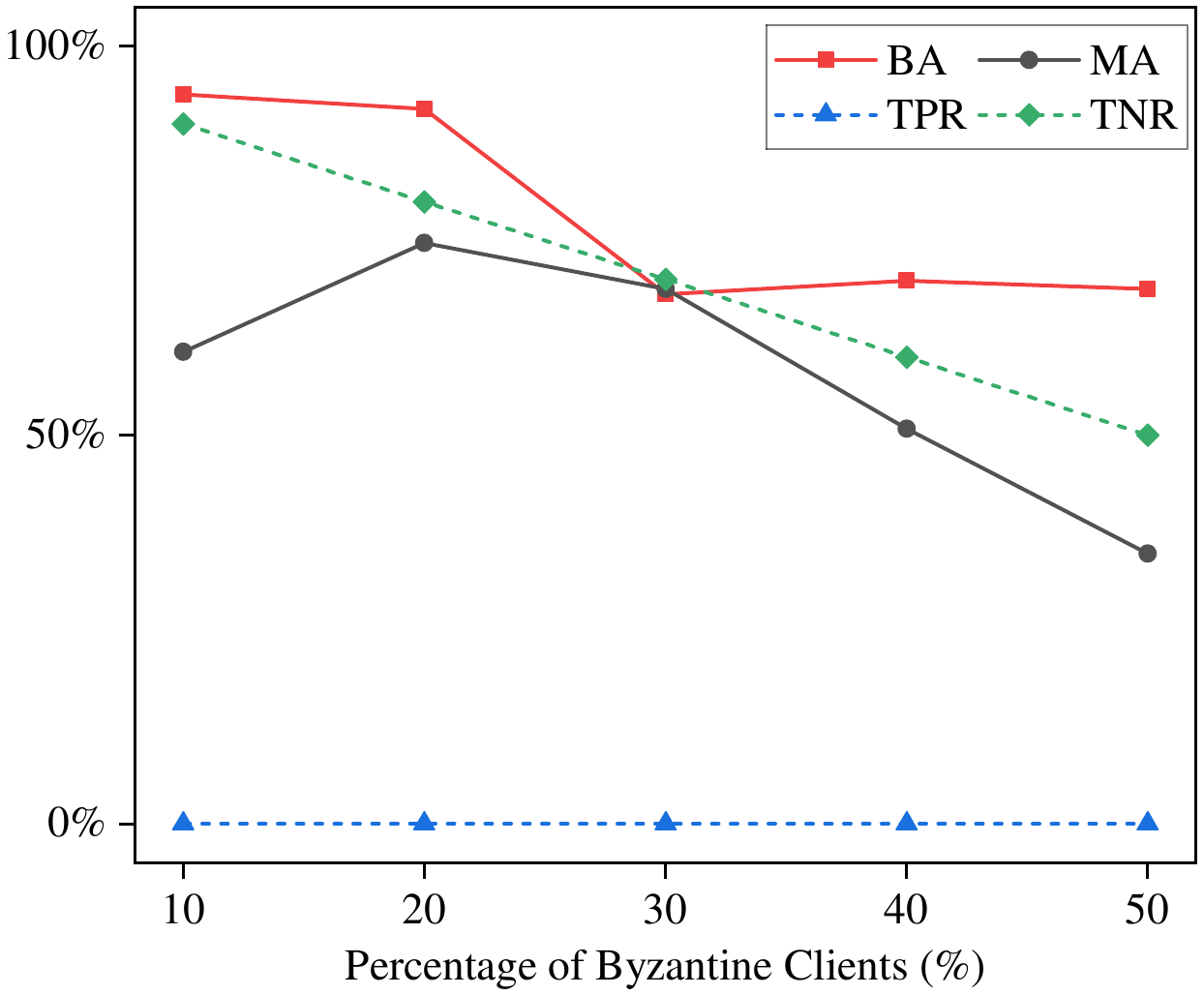}}
\subfloat[FABA]{\label{fig:faba_various_byzantine_clients} \includegraphics[width=0.23\textwidth]{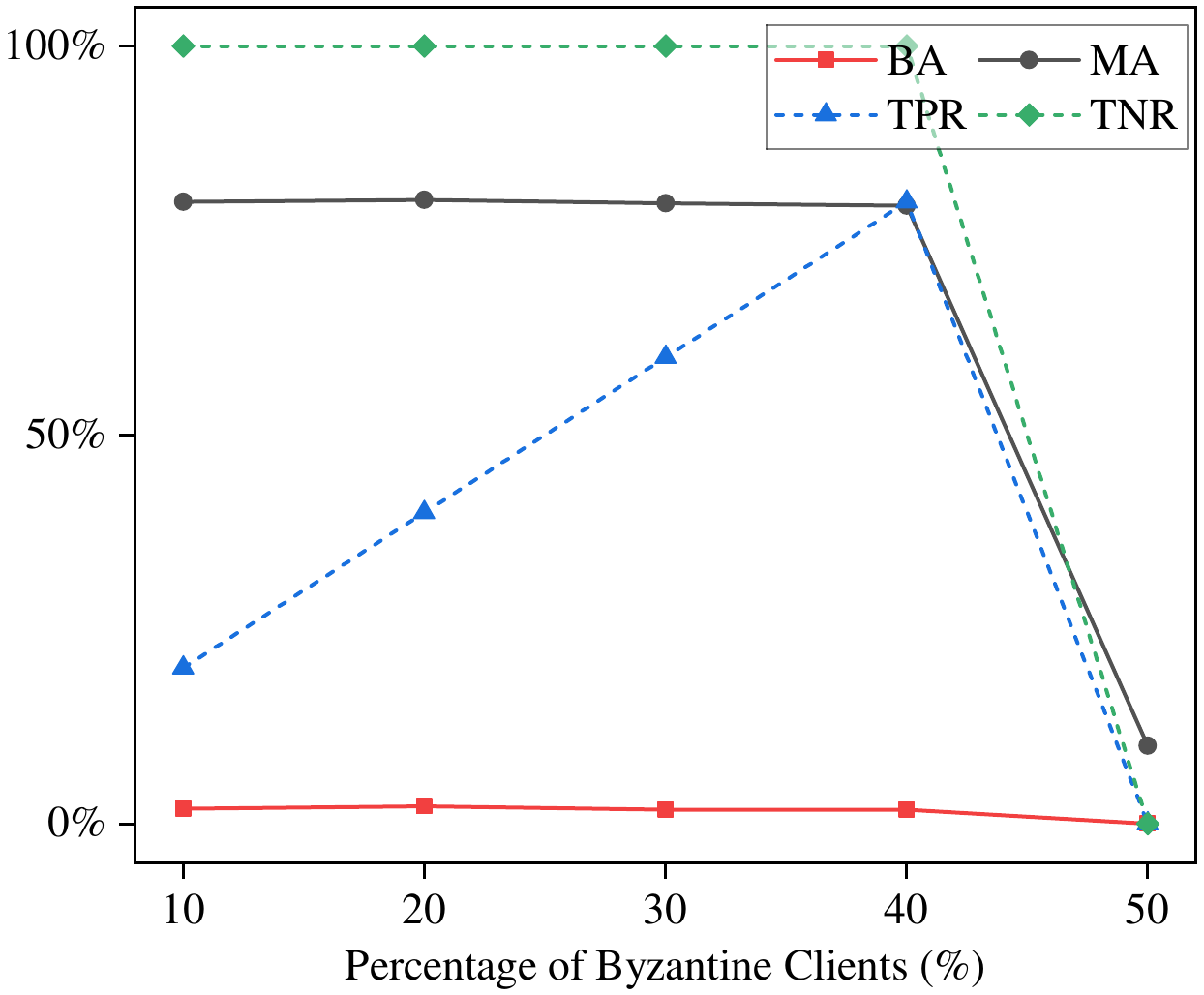}}
\subfloat[FLAME]{\label{fig:flame_various_byzantine_clients} \includegraphics[width=0.23\textwidth]{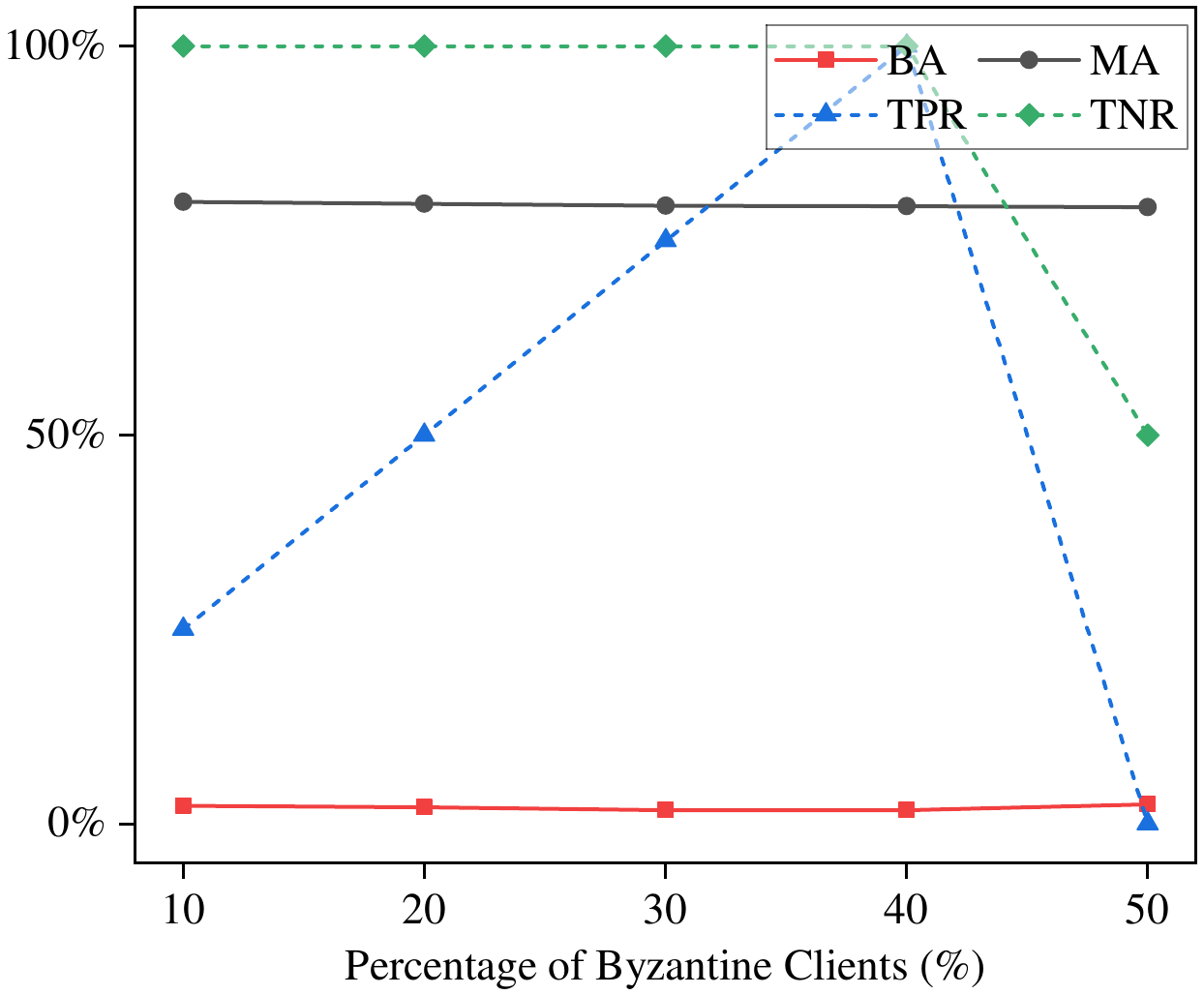}}\\

\subfloat[ABBR Multi-Krum]{\label{fig:abbr_multi_krum_various_byzantine_clients} \includegraphics[width=0.23\textwidth]{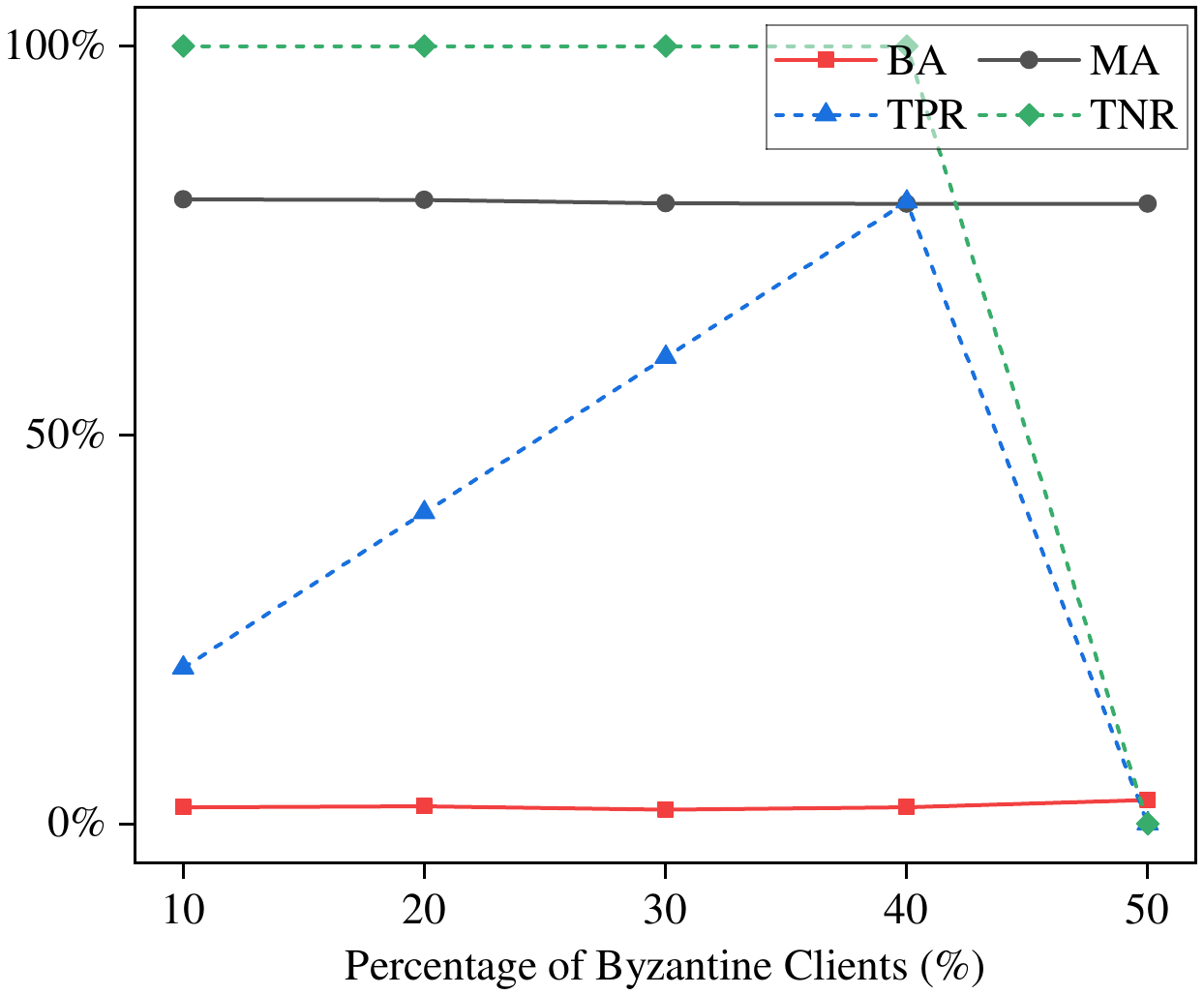}}
\subfloat[ABBR FoolsGold]{\label{fig:abbr_foolsgold_various_byzantine_clients} \includegraphics[width=0.23\textwidth]{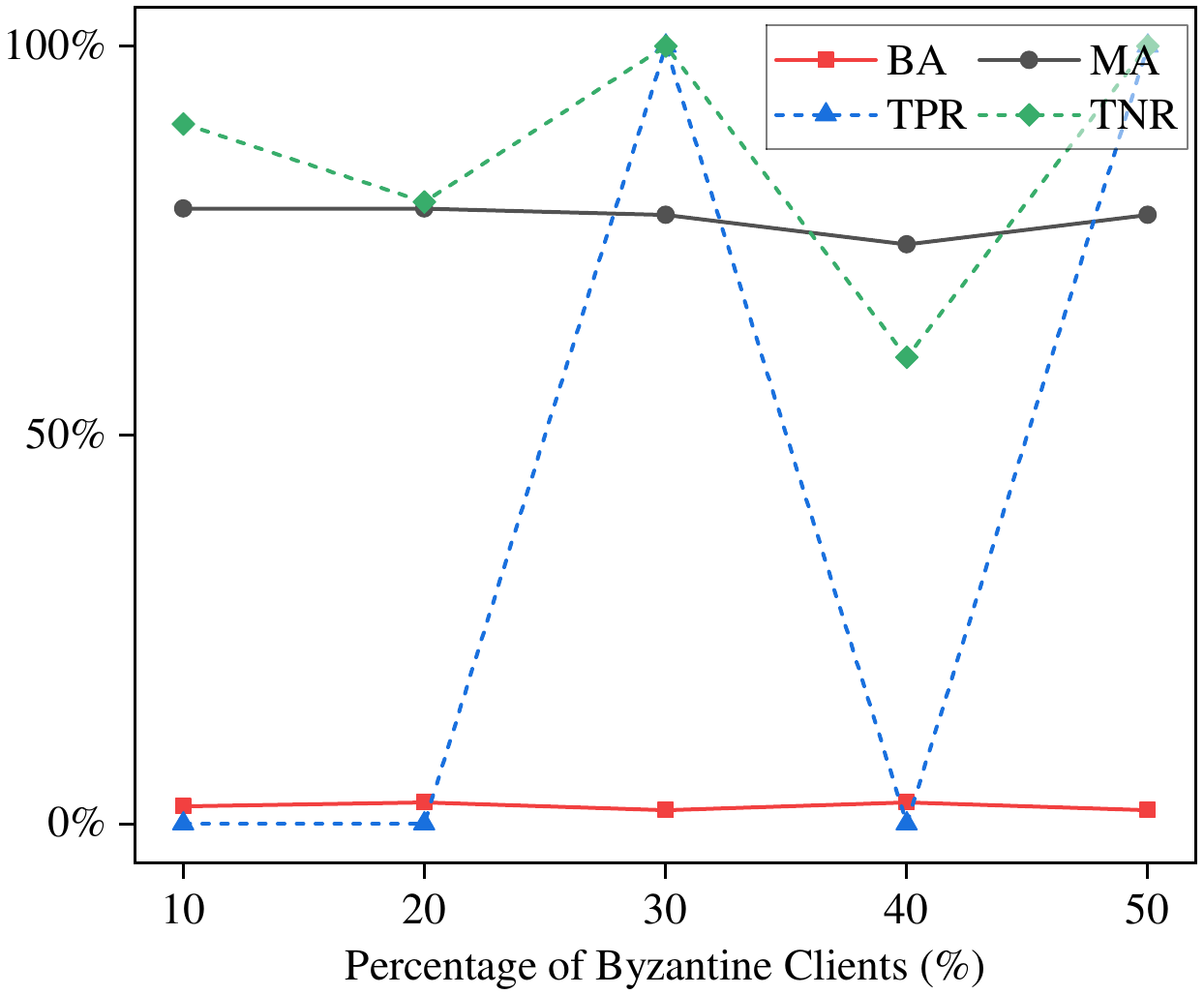}}
\subfloat[ABBR FABA]{\label{fig:abbr_faba_various_byzantine_clients} \includegraphics[width=0.23\textwidth]{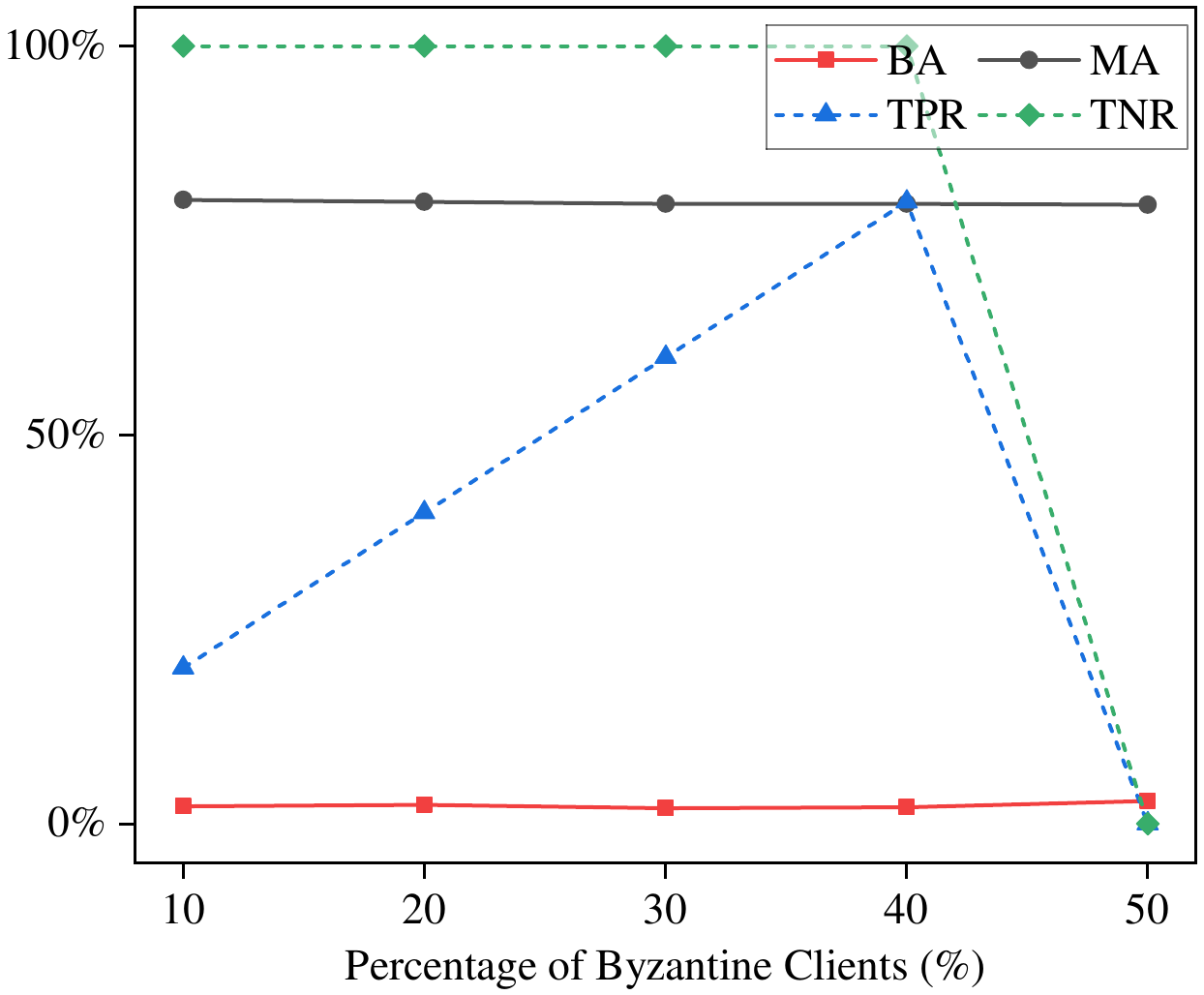}}
\subfloat[ABBR FLAME]{\label{fig:abbr_flame_various_byzantine_clients} \includegraphics[width=0.23\textwidth]{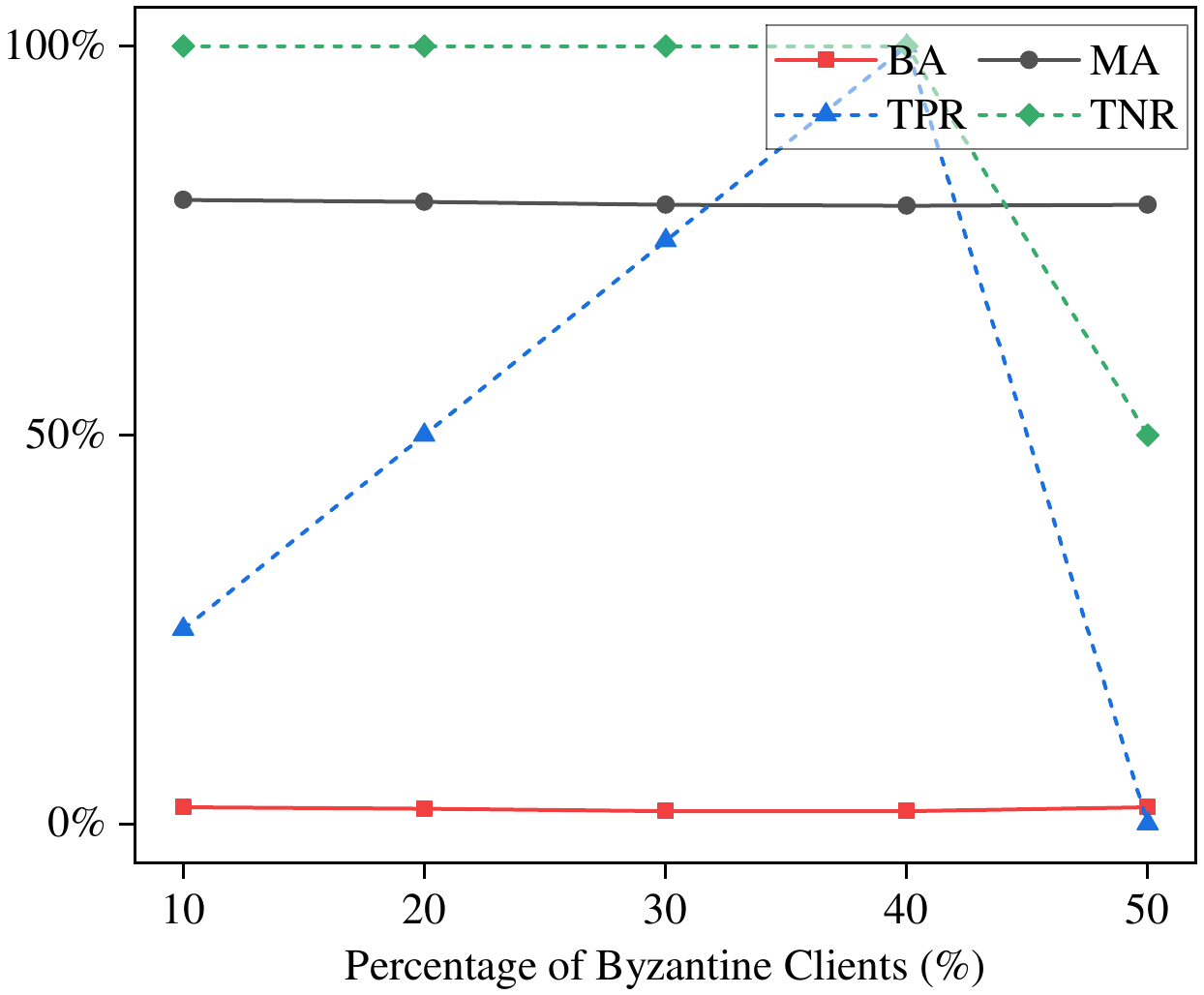}}\\
\caption{Impact of the percentage of Byzantine clients on CIFAR-10 dataset. (a)-(d): original vector-wise filterings, (e)-(f): ABBR versions of vector-wise filterings.}
\label{fig:bcr_impact}
\end{figure*}

\begin{figure*}[t]
\centering
\subfloat[Multi-Krum]{\label{fig:degree_noniid_multi_krum} \includegraphics[width=0.23\textwidth]{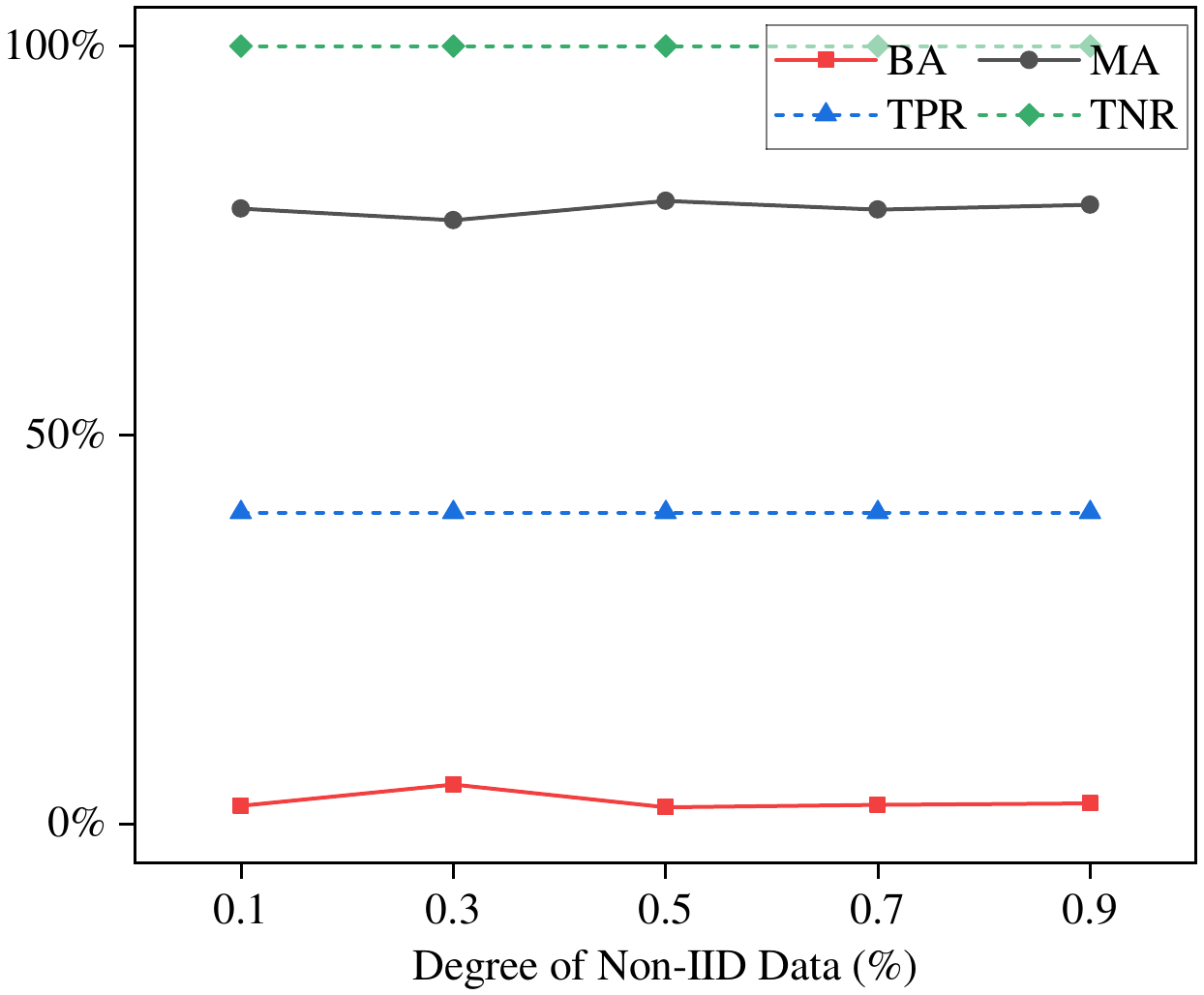}}
\subfloat[FoolsGold]{\label{fig:degree_noniid_foolsgold} \includegraphics[width=0.23\textwidth]{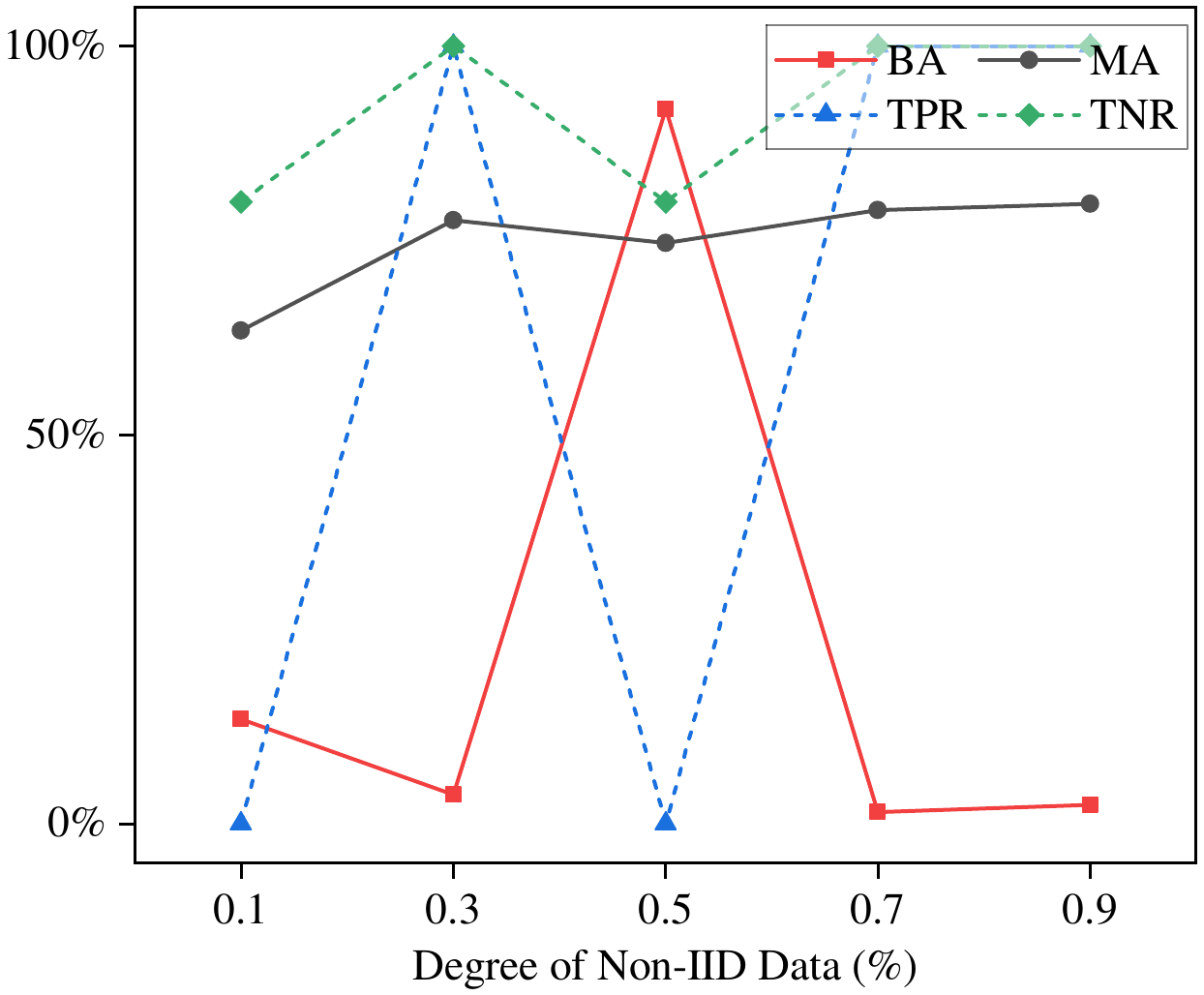}}
\subfloat[FABA]{\label{fig:degree_noniid_faba} \includegraphics[width=0.23\textwidth]{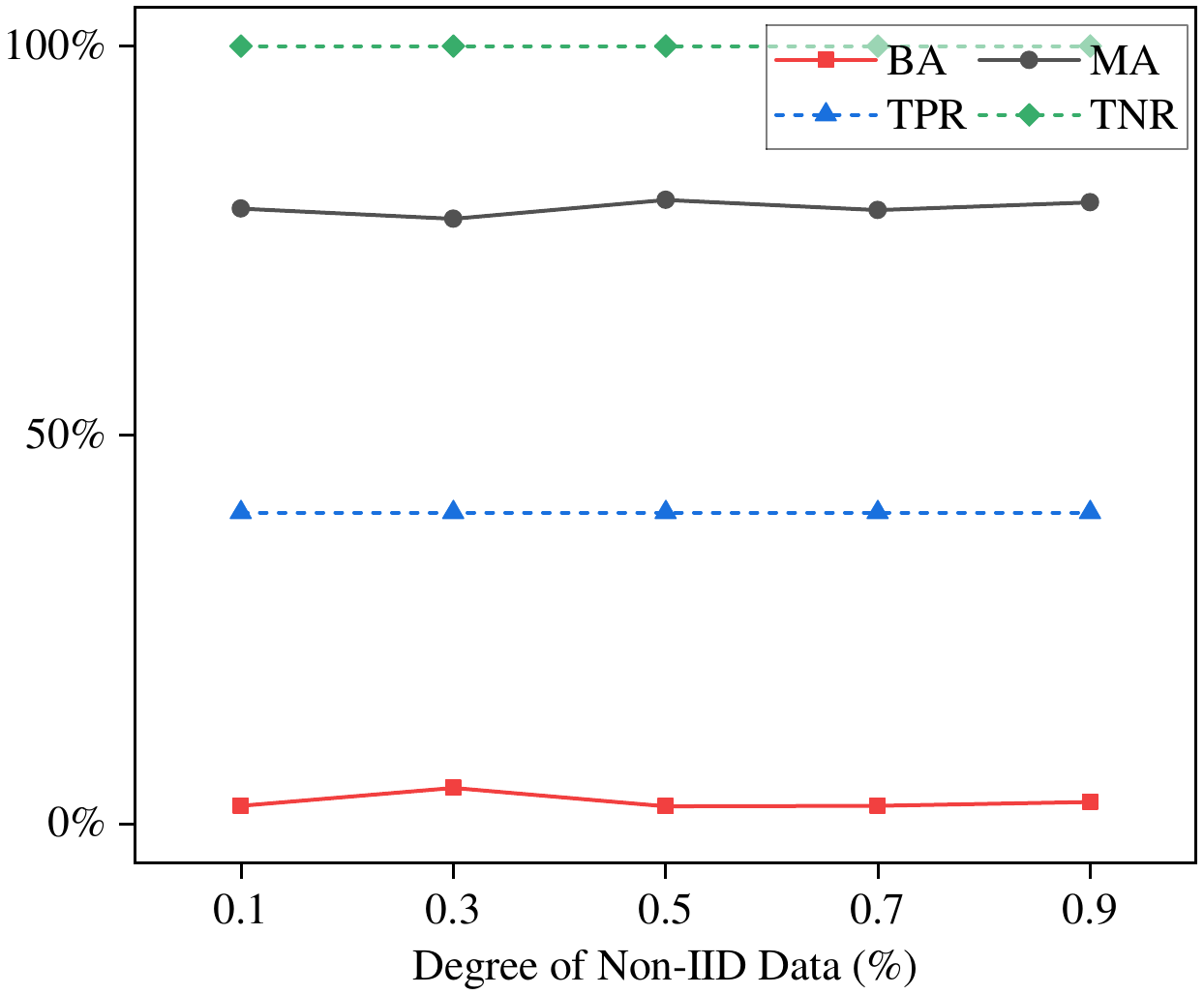}}
\subfloat[FLAME]{\label{fig:degree_noniid_flame} \includegraphics[width=0.23\textwidth]{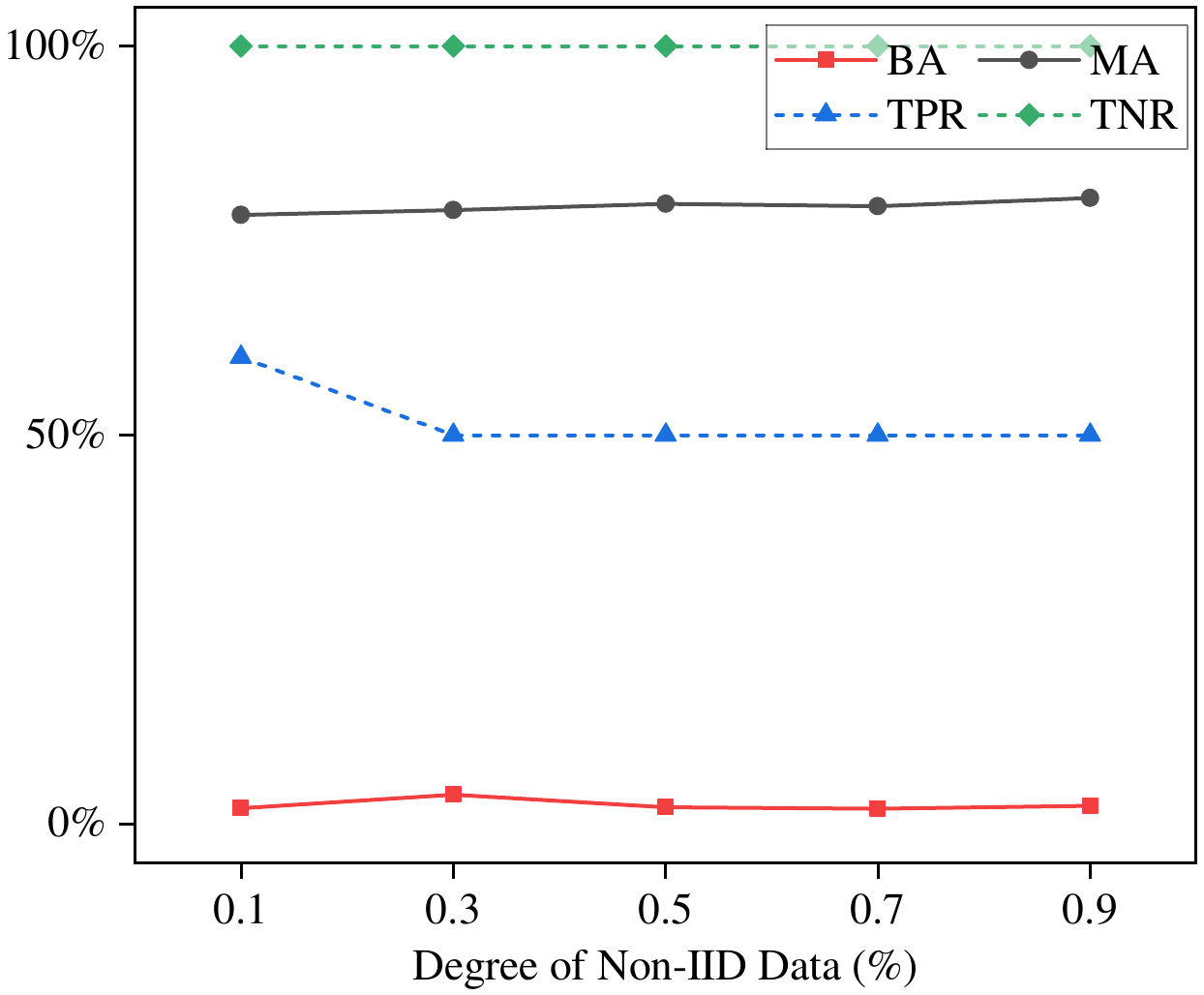}}\\

\subfloat[ABBR Multi-Krum]{\label{fig:degree_noniid_multi_krum_abbr} \includegraphics[width=0.23\textwidth]{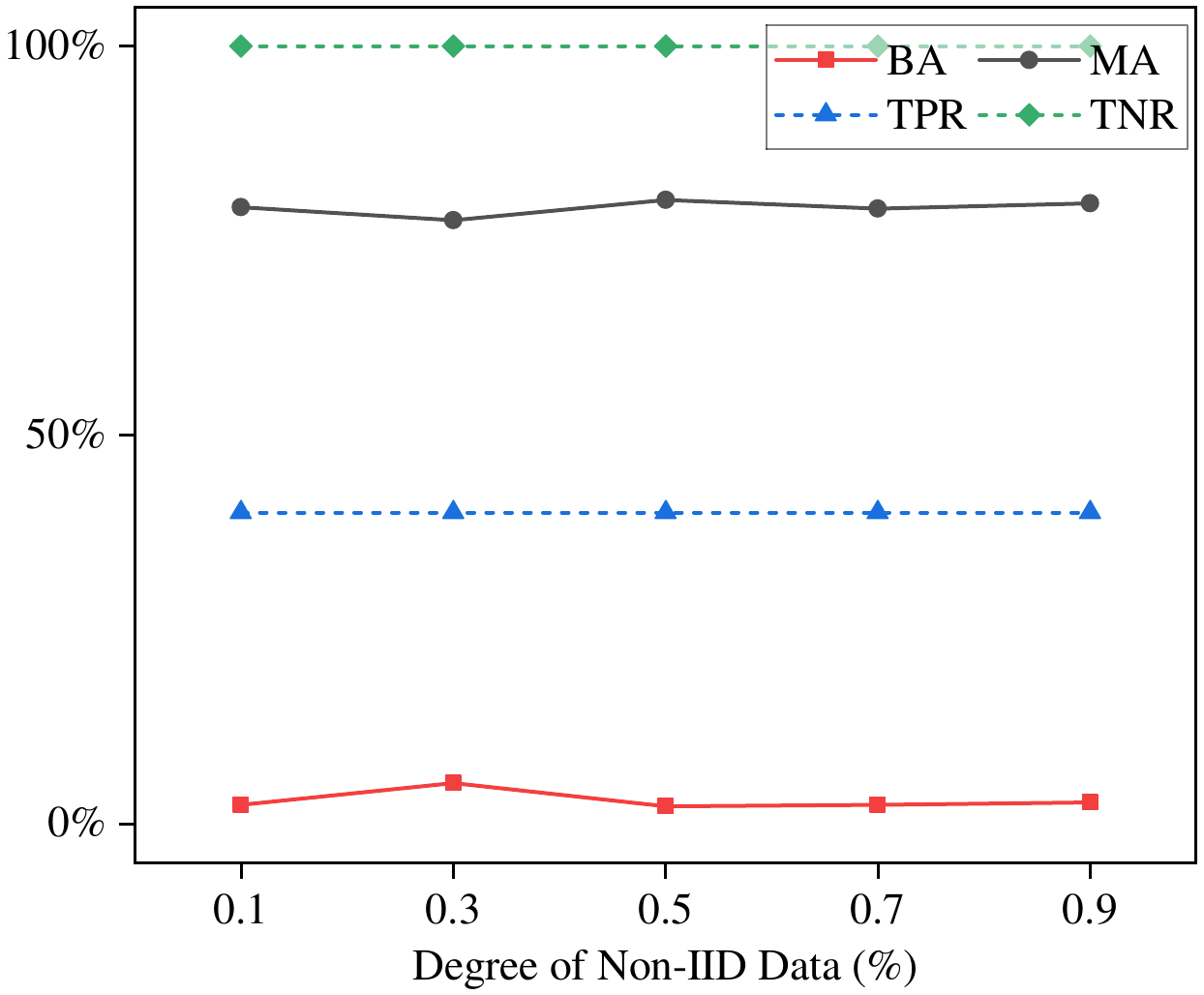}}
\subfloat[ABBR FoolsGold]{\label{fig:degree_noniid_foolsgold_abbr} \includegraphics[width=0.23\textwidth]{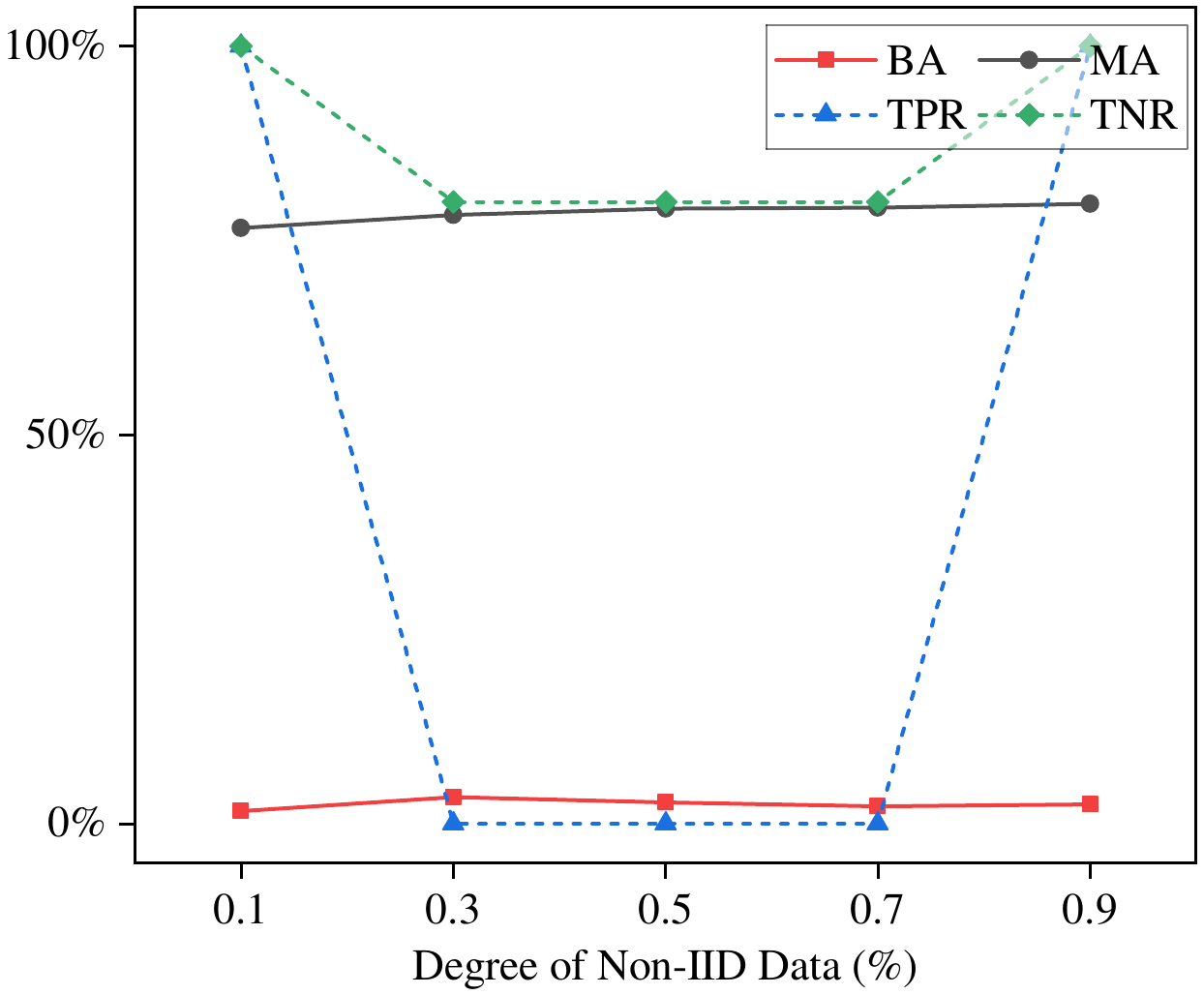}}
\subfloat[ABBR FABA]{\label{fig:degree_noniid_faba_abbr} \includegraphics[width=0.23\textwidth]{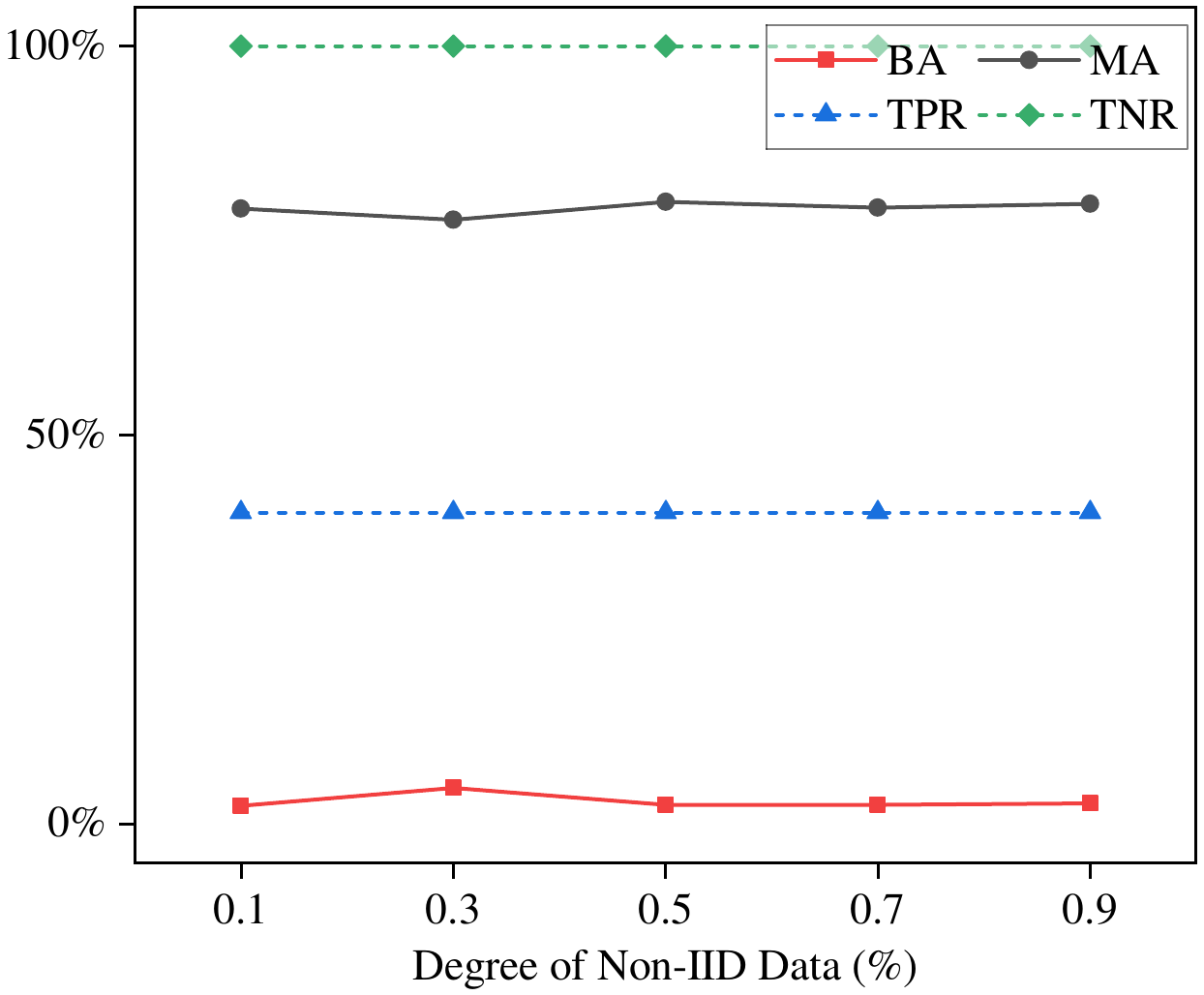}}
\subfloat[ABBR FLAME]{\label{fig:degree_noniid_flame_abbr} \includegraphics[width=0.23\textwidth]{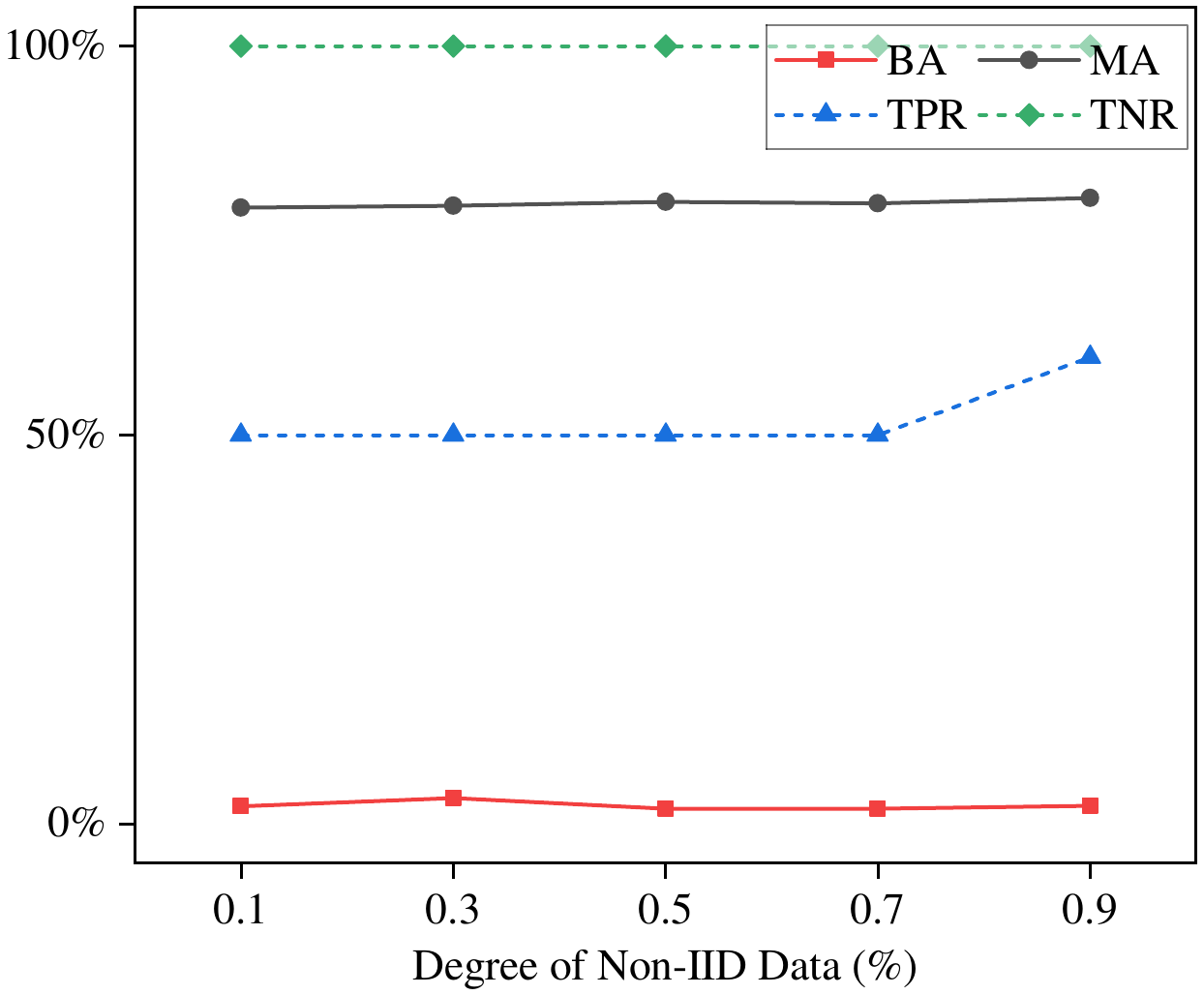}}\\
\caption{Impact of the degree of non-IID data (simulated by using Dirichlet distribution) on CIFAR-10 dataset. (a)-(d): original vector-wise filterings, (e)-(f): ABBR versions of vector-wise filterings.}
\label{fig:alpha_impact}
\end{figure*}

\begin{figure*}[t]
\centering

\subfloat[ABBR Multi-Krum]{\label{fig:abbr_multi_krum_k} \includegraphics[width=0.23\textwidth]{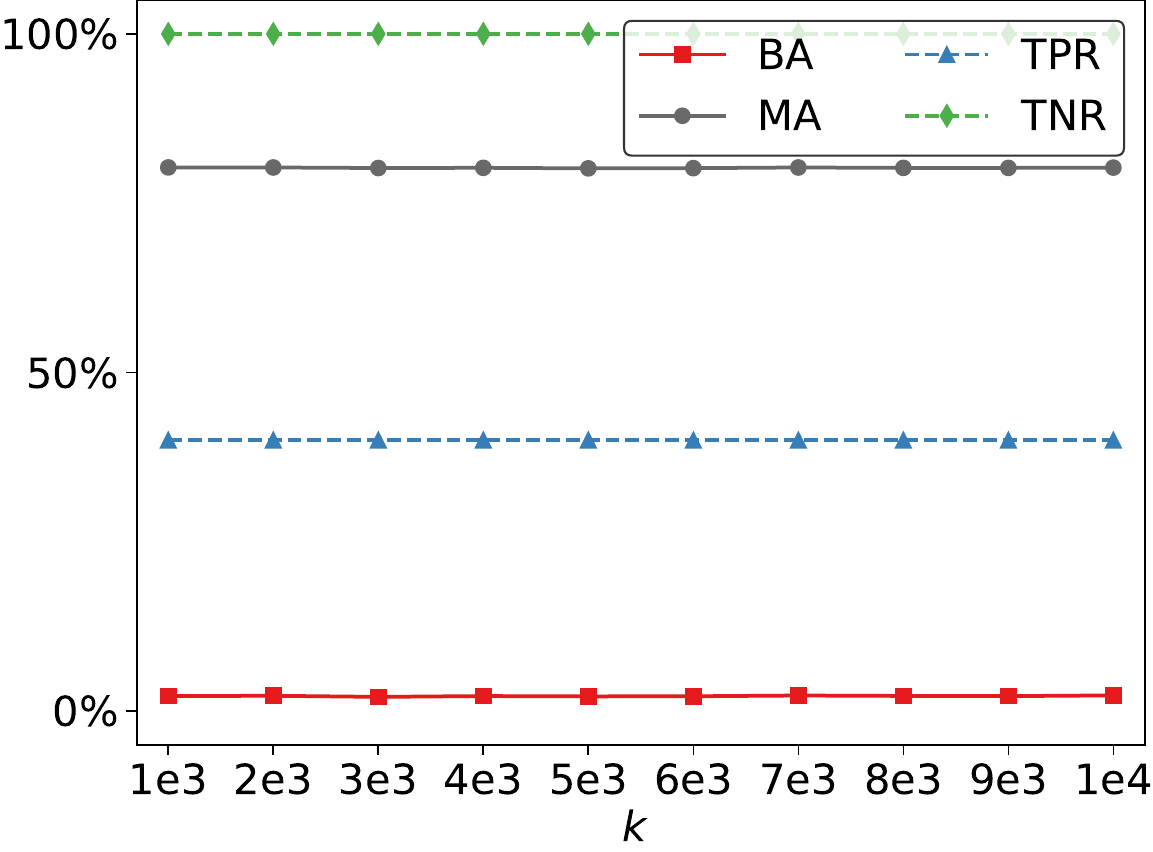}}
\subfloat[ABBR FoolsGold]{\label{fig:abbr_foolsgold_k} \includegraphics[width=0.23\textwidth]{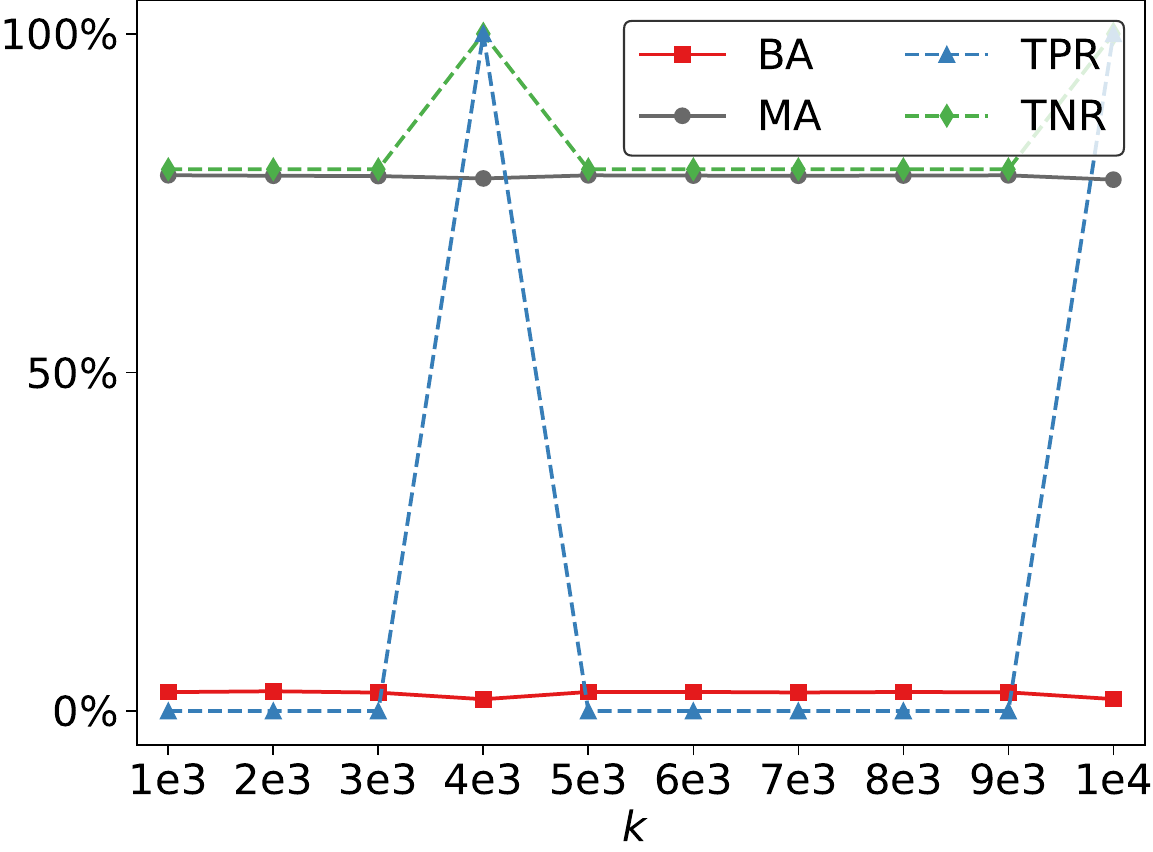}}
\subfloat[ABBR FABA]{\label{fig:abbr_faba_k} \includegraphics[width=0.23\textwidth]{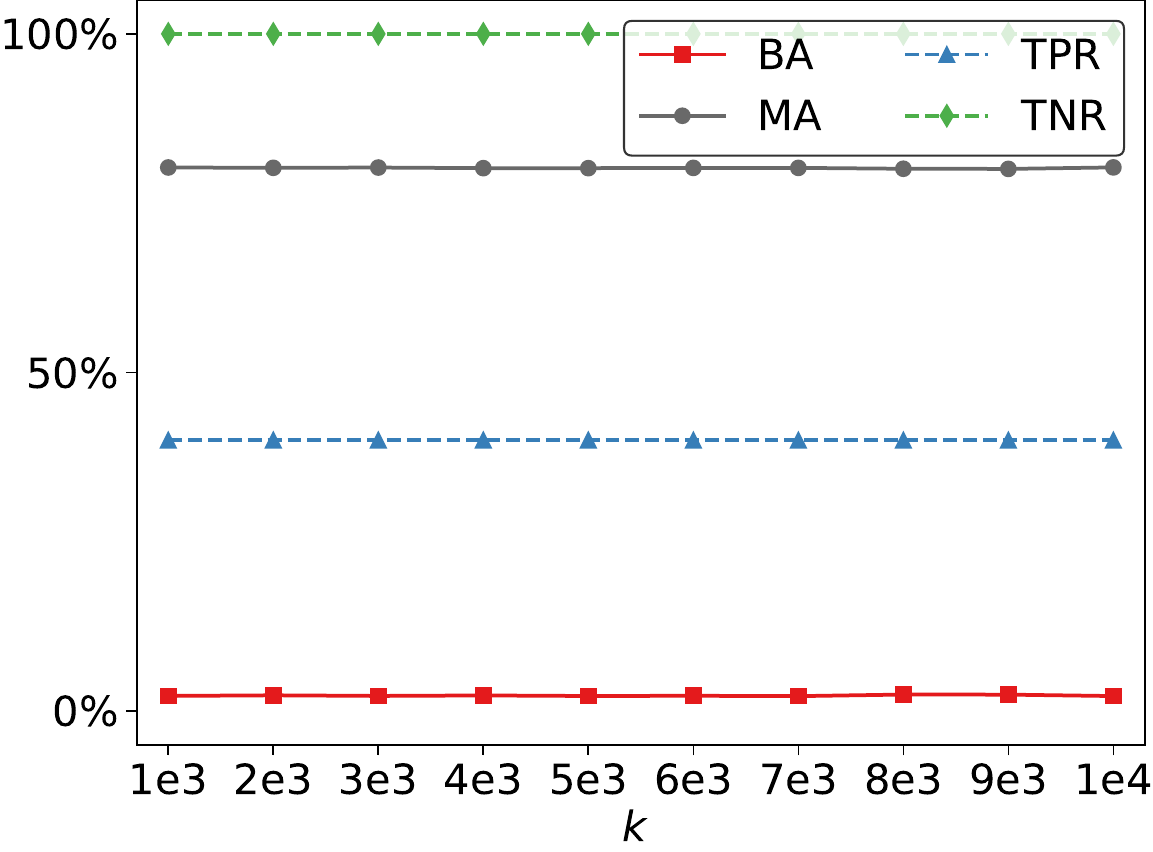}}
\subfloat[ABBR FLAME]{\label{fig:abbr_flame_k} \includegraphics[width=0.23\textwidth]{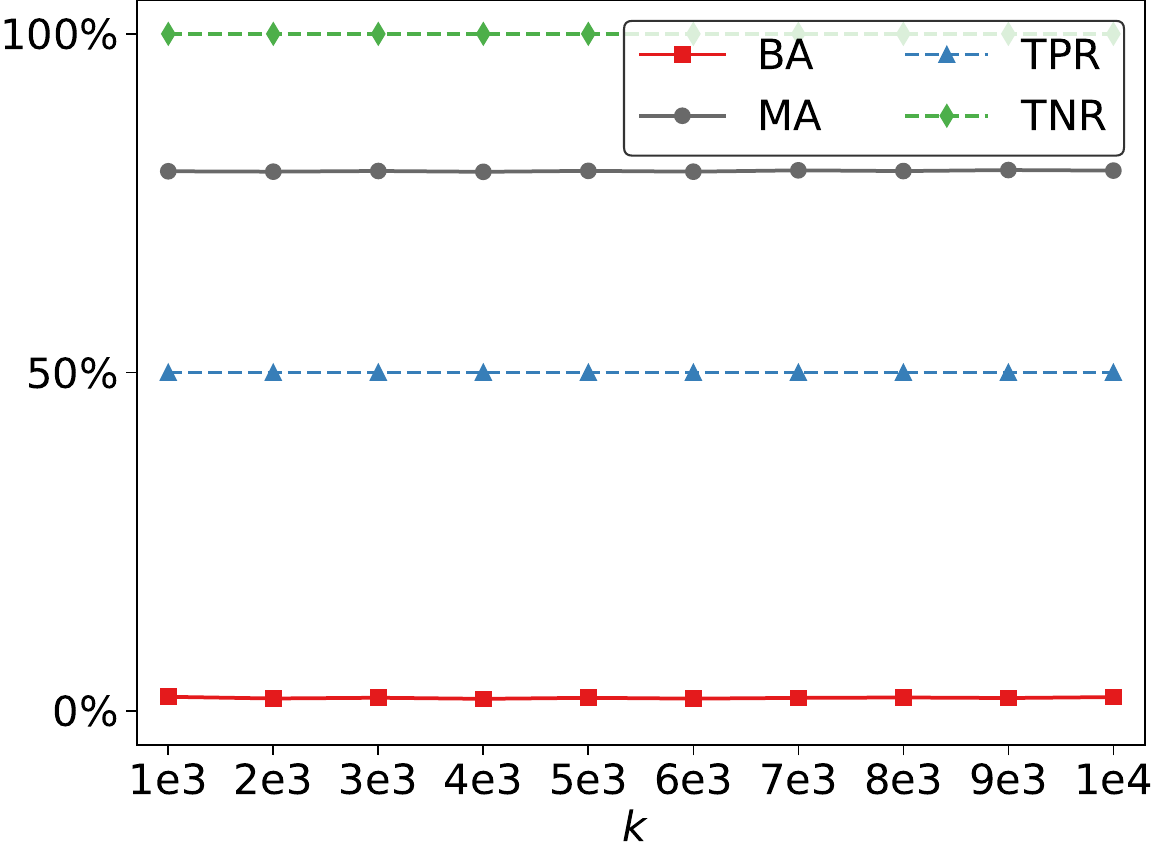}}\\
\caption{Impact of the target dimensionality $k$ for ABBR framework on CIFAR-10 dataset.}
\label{fig:k_impact}
\end{figure*}

\myparatight{Hyperparameters}
By default, we assume that in each iteration of federated learning, $n = 10$ clients are randomly selected from a total of $m = 100$ clients to participate in training. For local training on each client, we set the batch size to 64, the number of local epochs to 2, the optimizer to stochastic gradient descent (SGD), and the loss function to cross-entropy. For the dimensionality reduction component, we set the parameters $\varepsilon$ and $\eta$ to 0.1 and 1, respectively, as this offers a balanced trade-off between robustness and efficiency. Based on the Inequality~\ref{Target_dimension}, we compute the minimum required target dimensionality $k$ for different numbers of participating clients. Specifically, when the number of selected clients in an iteration is 4, 8, 10, 16, 32, 64, or 128, the corresponding values of $k$ are set to 1073, 1465, 1599, 1889, 2332, 2783, and 3240, respectively.

\myparatight{Simulating non-IID data} We simulate non-independent and identically distributed (non-IID) data in two ways. Following prior studies \cite{bagdasaryan2020backdoor,xie2019dba}, we use a Dirichlet distribution to model non-IID training datasets across different clients. Specifically, for each data class, we use a Dirichlet distribution with parameter $p$ to determine the proportion of data assigned to each client. A smaller value of $p$ indicates a higher degree of non-IID distribution. The second method, based on previous works \cite{nguyen2022flame,fang2020local}, simulates non-IID data by assigning data with label $l$ to the $l$-th client group at a proportion $p \in [0,1]$, while the remaining data is randomly distributed to other client groups. In this method, a higher $p$ represents a greater level of non-IID. Unless stated otherwise, we use the first method to simulate non-IID data and set $p = 0.5$ by default for both methods.

\myparatight{STPC Protocol} The STPC protocol in ABBR-FLAME is the same as in the private FLAME, i.e., ABY~\cite{DBLP:conf/ndss/Demmler0Z15}. This protocol is run among two parties. The detailed setting of ABY is described in Appendix~\ref{detailed_setup}. 

\myparatight{Evaluation Metrics} For performance evaluation, we use the metrics of \emph{Runtime} and \emph{Communication Overhead} during both the setup and online phases to assess the efficiency of privacy-preserving computation. 
To assess robustness, we use a set of metrics including \emph{BA} (Backdoor Accuracy), \emph{MA} (Main Task Accuracy), \emph{TPR} (True Positive Rate), and \emph{TNR} (True Negative Rate), following the approach in~\cite{nguyen2022flame}. Further details about these metrics can be found in Appendix~\ref{detailed_setup}.

\subsection{Performance}
\label{subsec:performance}
To demonstrate the efficiency of our framework, we separately compare the runtime and communication overhead of ABBR versions and baselines in the setup phase and online phase.

\subsubsection{Runtime}

\textbf{Our ABBR versions demonstrate a significantly shorter runtime compared to the baselines.} We measure the runtime (in seconds) for both versions of ABBR and the baseline models under varying computational demands. A less intensive scenario involved a CNN model with 4 clients, while a more challenging one used ResNet-18 with 128 clients. As shown in Table~\ref{tab:setup_runtime} and Table~\ref{tab:online_runtime}, our framework achieves a substantial reduction in runtime. For example, with ResNet-18 and 32 clients, the setup phase of our ABBR versions is \underline{86x}, \underline{84x}, \underline{161x}, and \underline{84x} faster than the baselines, and the online phase is \underline{620x}, \underline{598x}, \underline{688x}, and \underline{590x} faster. Notably, with ResNet-18 and 64 clients, the baselines surpassed 30,000 seconds without completing, leading to termination, whereas ABBR versions completed the setup in just 321, 326, 390, and 326 seconds, and the online phase in 18, 22, 45, and 23 seconds.

\subsubsection{Communication overhead}
\textbf{The communication overhead of our ABBR versions is considerably lower than the baselines.} We conduct a separate evaluation of the communication overhead (in GB) for both ABBR versions and the baselines. As shown in Table~\ref{tab:setup_communication} and Table~\ref{tab:online_communication}, our ABBR framework drastically reduces communication overhead. The baselines, in both setup and online phases, incur substantial communication overhead, which not only consumes a large amount of costly bandwidth but also significantly impacts the protocol’s runtime. However, even under the most demanding scenario (ResNet-18 with 128 clients), the communication overhead of ABBR versions remains manageable, at only 85, 99, 154, and 77 GB during the setup phase, and 1.14, 1.41, 1.66, and 1.05 GB during the online phase. This clearly demonstrates that our framework effectively addresses the excessive communication overhead bottleneck.

\subsection{Resilience to Byzantine Failures}
\label{subsec:resilience}

In this section, we evaluate the robustness of 4 baseline vector-wise filtering rules and their corresponding ABBR versions under various attack scenarios. Additionally, we evaluate the resilience of ABBR against two worst-case adaptive attacks.

\myparatight{Attack scenarios} Following the experimental setup of Multi-Krum~\cite{he2020secure}, FoolsGold~\cite{fung2018mitigating}, FABA~\cite{xia2019faba} and FLAME~\cite{nguyen2022flame}, we simulate the attack scenarios through Constrain-and-Scale~\cite{bagdasaryan2020backdoor}, DBA~\cite{xie2019dba}, PGD~\cite{wang2020attack}, Edge-Case~\cite{wang2020attack}, Label Flipping~\cite{fang2020local}, Gaussian~\cite{fang2020local}, which are state-of-the-art attacks in FL. 
It is important to note that Constrain-and-Scale, DBA, PGD, and Edge-Case are targeted attacks, while Label Flipping and Gaussian are untargeted attacks.
We introduce these attacks in detail in Appendix~\ref{detailed_setup}.
Among these attack methods, the Byzantine local model has a good similarity with the benign local model, so the robustness of our ABBR framework can be well evaluated.

In order to quickly evaluate ABBR versions against Byzantine clients in more scenarios, we do not consider privacy-preserving computations here, which does not affect our evaluation results. For CIFAR-10, EMNIST, Fashion-MNIST, and Tiny-ImageNet datasets, we separately pre-train the lightweight ResNet-18, CNN, featherweight ResNet-18, and ResNet-18 for 1000, 200, 1000, and 1000 iterations without adversaries and use them as the initial global models for the above attacks. Unless otherwise mentioned, we set the percentage of Byzantine clients in an iteration is 20\%.

\myparatight{Robustness against state-of-the-art attacks}
We conduct experiments under a benign setting (where the percentage of Byzantine clients is 0 in each iteration) as well as under all the aforementioned attack scenarios. Table~\ref{tab:robustness} illustrates the robustness of our framework against state-of-the-art attacks, showing that the robustness of ABBR versions and the baselines are nearly identical. In the benign setting, most ABBR versions exhibit a \emph{MA} drop of no more than 0.6\% compared to the baselines. While the \emph{MA} of ABBR FLAME decreases by 2.5\% on the CIFAR-10 dataset, this can be mitigated by increasing the dimension $k$ (as described in Section~\ref{subsec:dimensionality_reduction}) of the low-dimensional space. For instance, when the dimension $k$ is increased from 1599 to 3084, the \emph{MA} of ABBR FLAME improves to 76.9\%, which is only 1\% lower than the baseline FLAME.

Under the Constrain-and-Scale, DBA, PGD, and Edge-Case attacks, most ABBR versions show \emph{BA} equal to or lower than the baselines and \emph{MA} equal to or higher. Notably, ABBR significantly enhances the robustness of FoolsGold. For instance, in the CIFAR-10 dataset, ABBR FoolsGold consistently achieves a much lower \emph{BA} than the baseline across all attack types. Upon reviewing the baseline FoolsGold, we found it mistakenly accepted all Byzantine models. While ABBR FoolsGold also accepted them, the adaptive tuning component (see Section~\ref{subsec:adaptive_tuning}) clipped the norms of these models, successfully defending against the attacks. Furthermore, under Label Flipping and Gaussian attacks, the \emph{MA} of ABBR versions drops by less than 0.2\%, demonstrating that ABBR does not compromise the robustness of vector-wise filtering.

\myparatight{Varying the percentage of Byzantine clients}
We investigate the impact of the Byzantine client rate (\emph{BCR} = $\frac{c}{n}$, where $c$ is the number of Byzantine clients) through Constrain-and-Scale attack experiments on the CIFAR-10 dataset. As shown in Figure~\ref{fig:bcr_impact}, most ABBR versions exhibit comparable or even better robustness than the baselines. First, ABBR versions of Multi-Krum, FABA, and FLAME show the same \emph{TPR} and \emph{TNR} as the baselines. This is because the random projection in our ABBR framework approximately preserving pairwise distances in the lower-dimensional space, thereby helping to maintain the accuracy of the filtering mechanism. Second, ABBR FoolsGold outperforms baseline FoolsGold in \emph{TPR} and \emph{TNR} when \emph{BCR} is 30\% or 50\%, likely due to the loss of noise parameters during projection into the low-dimensional space, making Byzantine models easier to detect. Lastly, ABBR versions of Multi-Krum and FABA achieve higher \emph{MA} than baselines at 50\% \emph{BCR}, thanks to our adaptive tuning component, which constrains malicious local models without unduly suppressing the contributions of benign models.

We also examine the scenario where all clients participate in each iteration, shown in Appendix~\ref{appendix:results}, reaching the same conclusions.

\myparatight{Varying the degree of non-IID data}
The degree of non-IID data influences the relative position of local models in the original-dimensional space, making it a key factor in whether dimensionality reduction affects filtering results. To further explore its impact on the ABBR framework, we conduct experiments by varying the degree of non-IID data (simulated via the Dirichlet distribution) on the CIFAR-10 dataset. Figure~\ref{fig:alpha_impact} shows that ABBR versions demonstrate robustness comparable to the baselines across different non-IID levels. First, ABBR versions of Multi-Krum, FABA, and FLAME show similar \emph{TPR}, \emph{TNR}, \emph{BA}, and \emph{MA} as the baselines. Second, ABBR FoolsGold exhibits lower \emph{BA} and higher \emph{MA}, even with lower \emph{TPR} and \emph{TNR} under certain non-IID conditions. This is because our adaptive tuning strategy mitigates the impact of undetected malicious models by ensuring their contributions to the global model are heavily down-weighted, thus preventing them from successfully executing an attack.

Additionally, we perform the experiment using another method to simulate non-IID data (described in Section~\ref{subsec:experimental_setup}) in Appendix~\ref{appendix:results}, reaching the same conclusions.

\myparatight{Varying the target dimensionality $k$ of ABBR} We conduct the experiment by varying the target dimensionality $k$ from 1,000 to 10,000 and  evaluating all ABBR versions against the Constrain-and-Scale attack on CIFAR-10 dataset. The results are presented in Figure~\ref{fig:k_impact}, which demonstrates the robustness of our ABBR framework to variations in the $k$. Specifically, we observe that the TPR and TNR of filtering mechanism do not exhibit a significant decline when $k$ is decreased from 10,000 down to 1,000. This indicates that while a smaller $k$ theoretically increases the potential range of distance distortion, its impact on the practical accuracy of filtering mechanism is minimal.

\myparatight{Adaptive attacks}
We derive two worst-case adaptive attacks to assess the robustness of ABBR. In these scenarios, the adaptive attacker has full knowledge of the aggregation rules and all benign local models. The first attack, called Adaptive Maximum Deviation (Adaptive-MD), is based on the AGR-tailored attack~\cite{shejwalkar2021manipulating}. Here, the attacker aims to maximize the deviation between the aggregated model update post-attack and the ground truth update by using the \emph{inverse sign} vector of all benign updates as the deviation vector, adjusting its scale for maximum impact. The second attack, Adaptive Attack Tuning (Adaptive-AT), targets our tuning strategy by maximizing the deviation while constrained by the median norm of all benign updates. The detailed steps of these attacks are outlined in Algorithms~\ref{alg:Adaptive-MD} and~\ref{alg:Adaptive-AT} in Appendix.

We evaluate the robustness of ABBR under these adaptive attacks by using FLAME and its ABBR version (ABBR FLAME) as defenses. In each iteration, all 100 clients participate, with 20 acting as attackers. The initial scaling factor is set to 10 for Adaptive-MD and 50 for Adaptive-AT. As shown in Table~\ref{table:adaptive_attack}, our ABBR framework maintains the robustness of Byzantine-robust aggregation rules even under these worst-case adaptive attacks. For example, both FLAME and ABBR FLAME achieve consistent main task accuracy compared to the no-defense scenario under the two adaptive attacks.

\begin{table}[]

\footnotesize
\renewcommand\arraystretch{1}
\caption{Robustness of ABBR against adaptive attacks on different datasets.  The two adaptive attacks are untargeted and thus have no BA metric. The Metric is the main task accuray, and all values are percentages.}
\label{table:adaptive_attack}
\begin{tabular}{c|c|c|c|c}
\hline
Datasets                       & Attacks     & \makecell{No \\ Defense} & FLAME & \makecell{ABBR \\ FLAME} \\ \hline
\multirow{2}{*}{CIFAR-10}      & Adaptive-MD & 10.0   & 81.4  & 81.2       \\ \cline{2-5} 
                               & Adaptive-AT & 10.0   & 81.4  & 81.3       \\ \hline
\multirow{2}{*}{EMNIST}        & Adaptive-MD & 10.0   & 98.9  & 98.9       \\ \cline{2-5} 
                               & Adaptive-AT & 10.0   & 98.9  & 98.9       \\ \hline
\multirow{2}{*}{Fashion-MNIST} & Adaptive-MD & 10.0   & 92.3  & 92.4       \\ \cline{2-5} 
                               & Adaptive-AT & 10.0   & 92.4  & 92.4       \\ \hline
\end{tabular}
\end{table}
% !TEX root = mainfile.tex

%-------------------------------------------------------------------------------
\section{Related Work}
%-------------------------------------------------------------------------------
Over the past decade, many studies have separately explored Byzantine-resilient aggregation rules and client data privacy protection. However, combining both introduces a significant overhead bottleneck, as discussed earlier. Below, we outline three key research areas in more detail.

\myparatight{Byzantine-resilient aggregation rules}
Existing rules can be divided into three main categories. The first category~\cite{nguyen2022flame, guerraoui2018hidden, blanchard2017machine,diakonikolas2019sever,259745,khazbak2020mlguard,li2019abnormal,li2020learning,shen2016auror,munoz2019byzantine} distinguishes Byzantine from benign models using pairwise distances. For example, FLAME \cite{nguyen2022flame} uses HDBSCAN~\cite{mcinnes2017hdbscan} to classify updates by Cosine distance. The second category~\cite{guerraoui2018hidden,yin2018byzantine,chen2017distributed,gao2022secure} identifies malicious elements at the dimensional level, like Median~\cite{chen2017distributed} and Trimmed Mean approaches~\cite{chen2017distributed}. The third category~\cite{cao2020fltrust,andreina2021baffle} mitigates Byzantine failures using auxiliary datasets, such as FLTrust~\cite{cao2020fltrust}, which relies on a benign dataset. However, none of these methods protect client privacy, as they process local updates in plaintext, allowing servers to infer sensitive data.

\myparatight{Privacy protection of client data} Recent researchs~\cite{pyrgelis2017knock,salem2018ml,melis2019exploiting,hitaj2017deep,DBLP:conf/nips/ZhuLH19,DBLP:journals/corr/abs-2001-02610,wang2019beyond,DBLP:conf/uss/Fu0JCWG0L022,DBLP:journals/tdsc/GaoHGLZCL23,DBLP:journals/tdsc/WangHSWXR23} have shown that private data attributes can be inferred from local updates, and even the original data can be reconstructed. To safeguard client data, privacy-preserving FL schemes have been proposed~\cite{bonawitz2017practical,bell2020secure,DBLP:journals/tdsc/XuHXZLHD23}. For example, the work~\cite{bonawitz2017practical} introduced secure aggregation using additive masks to encrypt updates, which cancel out when aggregated, ensuring linear scalability with the number of clients. The work~\cite{bell2020secure} improved this with a more scalable solution, reducing client overhead. However, while these methods protect privacy, they cannot defend against Byzantine clients.

\myparatight{Privacy-Preserving and Byzantine-resilient FL protocol} Few studies~\cite{he2020secure,nguyen2022flame,DBLP:conf/esorics/DongCLWZ21,so2020byzantine,DBLP:journals/tdsc/ZhaoJFWSL22} address both Byzantine resilience and client data privacy simultaneously. Protocols like those in~\cite{he2020secure,nguyen2022flame,DBLP:conf/esorics/DongCLWZ21} use a two-server architecture with secure two-party computation to protect privacy while mitigating Byzantine failures, scaling linearly with model parameters. BREA~\cite{so2020byzantine}, a single-server architecture, uses secure multi-party computation but imposes heavy overhead on clients, leading to low efficiency due to the high computational demands of private operators.

% !TEX root = mainfile.tex

\section{Discussion}

\myparatight{Extensibility to other secure computation protocols} It is important to emphasize that the core methodological contributions of ABBR, the integration of random projection for dimensionality reduction and the adaptive tuning strategy, are not inherently dependent on ABY. These techniques are designed to be protocol-agnostic, relying on fundamental secure primitives such as secure addition, secure multiplication, and secure comparison, which are common to a wide array of secure computation protocols. Consequently, ABBR can be readily instantiated within alternative frameworks. For instance, ABY3~\cite{mohassel2018aby3} and MP-SPDZ~\cite{keller2020mp} could effectively host our ABBR mechanisms, potentially extending its applicability to different numbers of parties or security models.

\myparatight{Limitation and future work } The primary limitation is that the ABBR framework may not be able to adapt to dimension-wise filtering mechanisms. Our ABBR framework currently focuses on the vector-wise filtering mechanisms, which are commonly adopted for the robust FL. The core dimensionality reduction technique employed in ABBR, namely random projection, projects high-dimensional local models into a lower-dimensional space. While this is highly beneficial for efficiency and privacy in the context of vector-wise operations, it inherently alters the per-dimension information. Consequently, existing dimension-wise filtering methods, which rely on analyzing or comparing individual feature dimensions across models (e.g., Medain~\cite{yin2018byzantine} that operates on each dimension independently), may not be directly applicable.

Recognizing the limitation regarding dimension-wise filtering, a promising direction for future work is to investigate efficient privacy-preserving FL frameworks that can effectively adapt dimension-wise filtering mechanisms. This would further broaden the scope of robust and private aggregation methods available to the FL community.

%-------------------------------------------------------------------------------
\section{Conclusion}
%-------------------------------------------------------------------------------
We propose ABBR, a practical framework for Byzantine-robust and privacy-preserving  FL with low computation and communication overhead. ABBR provides two key components to improve the efficiency of a vector-wise filtering rule in STPC: the \emph{dimensionality reduction method} and the \emph{adaptive turning strategy}.
The ABBR version of a vector-wise filtering rule is easy to construct, and we implement various ABBR constructions of Multi-Krum, FoolsGold, FABA, and FLAME.
The evaluation of ABBR versions on four datasets shows that ABBR framework is more efficient than baseline framework and almost has the same Byzantine resilience.

\section{Acknowledgments}
This work is supported in part by the National Natural Science Foundation of China (No.62272251) and the Natural Science Foundation of Shandong Province (No.ZR2022LZH014).

\bibliographystyle{IEEEtran}
\bibliography{refs}

\appendix

% \title{Supplemental Material}

\subsection{The detailed introduction of ABY}
\label{aby_detail}
ABY \cite{DBLP:conf/ndss/Demmler0Z15} is the default STPC backend for our framework, which combines secure computation schemes \cite{DBLP:conf/crypto/Beaver91a,DBLP:conf/stoc/GoldreichMW87,DBLP:conf/focs/Yao82b} based on Arithmetic sharing, Boolean sharing, and Yao sharing. In ABY, a private value can be secretly shared between two parties through three types of sharing, and there are efficient conversions between the three types. We next introduce the Arithmetic Sharing and Boolean Sharing.

\myparatight{Arithmetic Sharing } Arithmetic Sharing is that a private $\ell$-bit integer is shared additively in the ring $\mathbb{Z}_{2^\ell}$ as two random integers, and the private integer is the sum of the two random integers. Next, we introduce the operations we will use on Arithmetic sharings performed in the ring $\mathbb{Z}_{2^\ell}$. We assume there are two parties $P_0$ and $P_1$.

\begin{itemize}
    \item $\mathbf{SHR^A(x)}$: It is a function that a private $\ell$-bit integer $x$ is shared additively as two arithmetic sharings $ (\left \langle x \right \rangle _0^A , \left \langle x \right \rangle _1^A)$, where $x = \left \langle x \right \rangle _0^A + \left \langle x \right \rangle _1^A$. 
    \item $\mathbf{ADD(\left \langle x \right \rangle ^A,\left \langle y \right \rangle ^A)}$: It is a function that the inputs on each party are the arithmetic sharings of the private integer $x$ and $y$ and the output on each party is the sharing of the $x + y$. The subtraction function $\mathbf{SUB(\left \langle x \right \rangle ^A,\left \langle y \right \rangle ^A)}$ is like $\mathbf{ADD(\left \langle x \right \rangle ^A,\left \langle y \right \rangle ^A)}$, so we omit its detail.
    \item $\mathbf{MUL(\left \langle x \right \rangle ^A,\left \langle y \right \rangle ^A)}$: It is a function that the inputs on each party are the arithmetic sharings of the private integer $x$ and $y$ and the output on each party is the sharing of the $xy$.
\end{itemize}

\myparatight{Boolean Sharing} Boolean Sharing is that a private $\ell$-bit value is shared as two random $\ell$-bit values based on bitwise XOR, and the private value is equal to the XOR of the two random values. Next, we introduce the operations we will use on Boolean sharings.
\begin{itemize}
    \item $\mathbf{CMP(\left \langle x \right \rangle ^B, \left \langle y \right \rangle ^B)}$: It is a function that the inputs on each party are the Boolean sharings of private value $x$ and $y$ and the output on each party is the Boolean sharing of 1 if $x > y$ or the sharing of 0 otherwise.  
    \item $\mathbf{MUX(\left \langle x \right \rangle ^B, \left \langle y \right \rangle ^B, \left \langle s \right \rangle ^B)}$: It is a function that the inputs on each party are the Boolean sharings of private value $x$, $y$ and $s$ and the output on each party is $\left \langle x \right \rangle ^B$ if $s = 1$ or $\left \langle y \right \rangle ^B$ otherwise. 

\end{itemize}

\subsection{The proof of Theorem~\ref{theorem_1}}
\label{proof:theo_1}
First, the error bound of Euclidean distance is the Theorem 2 in  \cite{arriaga2006algorithmic}, so omit the proof of it. Then, according the Corollary 2 in \cite{arriaga2006algorithmic}, we can get a conclusion that with probability at least $1-4e^{-(1+\frac{1}{2}\eta)\log n  }$, for any local model $\left \|L_i\right \|$, $\left \|L_j\right \| \le 1$,
\begin{equation}
\label{Cosine_error1}
 L_i \cdot L_j -  \varepsilon \leq L'_i \cdot L'_j  \leq  L_i \cdot L_j +  \varepsilon .
\end{equation}
Although the inequality above indicates the error bound of the inner product of two local models whose norm is not greater than 1, we can obtain the error bound of cosine similarity between two local model with any norm by the normalization, i.e.,
\begin{equation}
\label{Cosine_error2}
 \frac{L_i}{\left \| L_i \right \|} \cdot\frac{L_j}{\left \| L_j \right \|} -  \varepsilon \leq \frac{L'_i}{\left \| L_i \right \|}\cdot \frac{L'_j}{\left \| L_j \right \|}  \leq   \frac{L_i}{\left \| L_i \right \|} \cdot\frac{L_j}{\left \| L_j \right \|} +  \varepsilon .
\end{equation}
However, the cosine similarity between two low-dimensional local models is generally computed by computing the ratio of the inner product to the product of their L2 norms. Therefore, Inequality \ref{Cosine_error2} needs to take into account the error of the L2 norm in order to obtain the final distortion bound of the Cosine similarity, i.e.
\begin{multline}
\label{Cosine_error3}
\frac{1}{1+\varepsilon} \left (  \frac{L_i}{\left \| L_i \right \|} \cdot\frac{L_j}{\left \| L_j \right \|} -  \varepsilon \right )  \leq \frac{L'_i}{\left \| L'_i \right \|}\cdot \frac{L'_j}{\left \| L'_j \right \|} \\
\leq \frac{1}{1-\varepsilon} \left (  \frac{L_i}{\left \| L_i \right \|} \cdot\frac{L_j}{\left \| L_j \right \|} +  \varepsilon\right ) .
\end{multline}
Furthermore, we can obtain the distortion bound of the Cosine distance, i.e., 
\begin{multline}
\label{theorem_1:cosine_distance}
\frac{1}{1-\varepsilon} \left ( 1-  \frac{L_i}{\left \| L_i \right \|} \cdot\frac{L_j}{\left \| L_j \right \|} -2\varepsilon \right )  \leq 1- \frac{L'_i}{\left \| L'_i \right \|}\cdot \frac{L'_j}{\left \| L'_j \right \|} \\
\leq \frac{1}{1+\varepsilon} \left ( 1- \frac{L_i}{\left \| L_i \right \|} \cdot\frac{L_j}{\left \| L_j \right \|} +2\varepsilon \right ) .
\end{multline}

\begin{figure*}[htp]
\centering
\subfloat[Multi-Krum]{\label{fig:all_various_byzantine_clients_multi_krum} \includegraphics[width=0.24\textwidth]{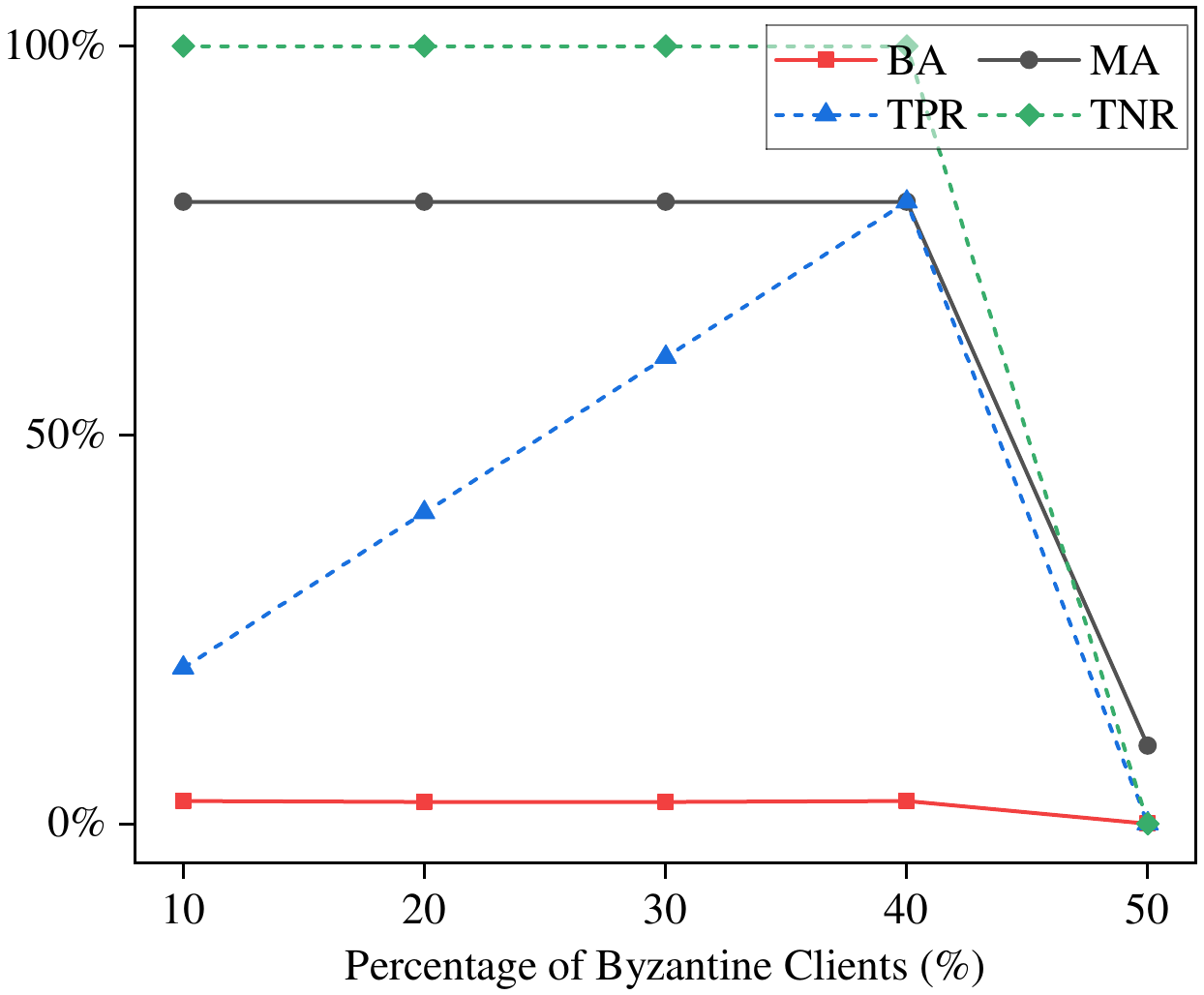}}
\subfloat[FoolsGold]{\label{fig:all_various_byzantine_clients_foolsgold} \includegraphics[width=0.24\textwidth]{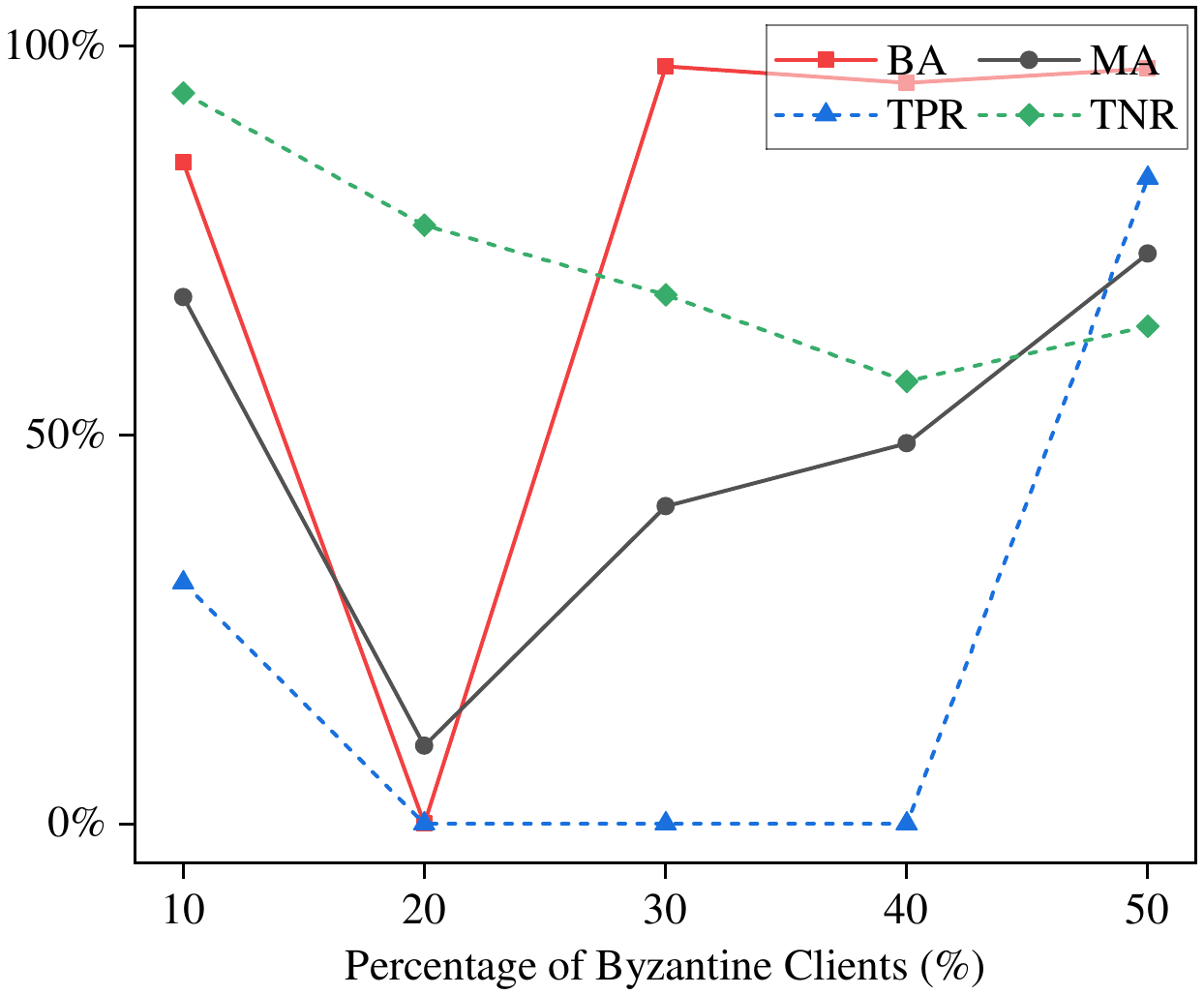}}
\subfloat[FABA]{\label{fig:all_various_byzantine_clients_faba} \includegraphics[width=0.24\textwidth]{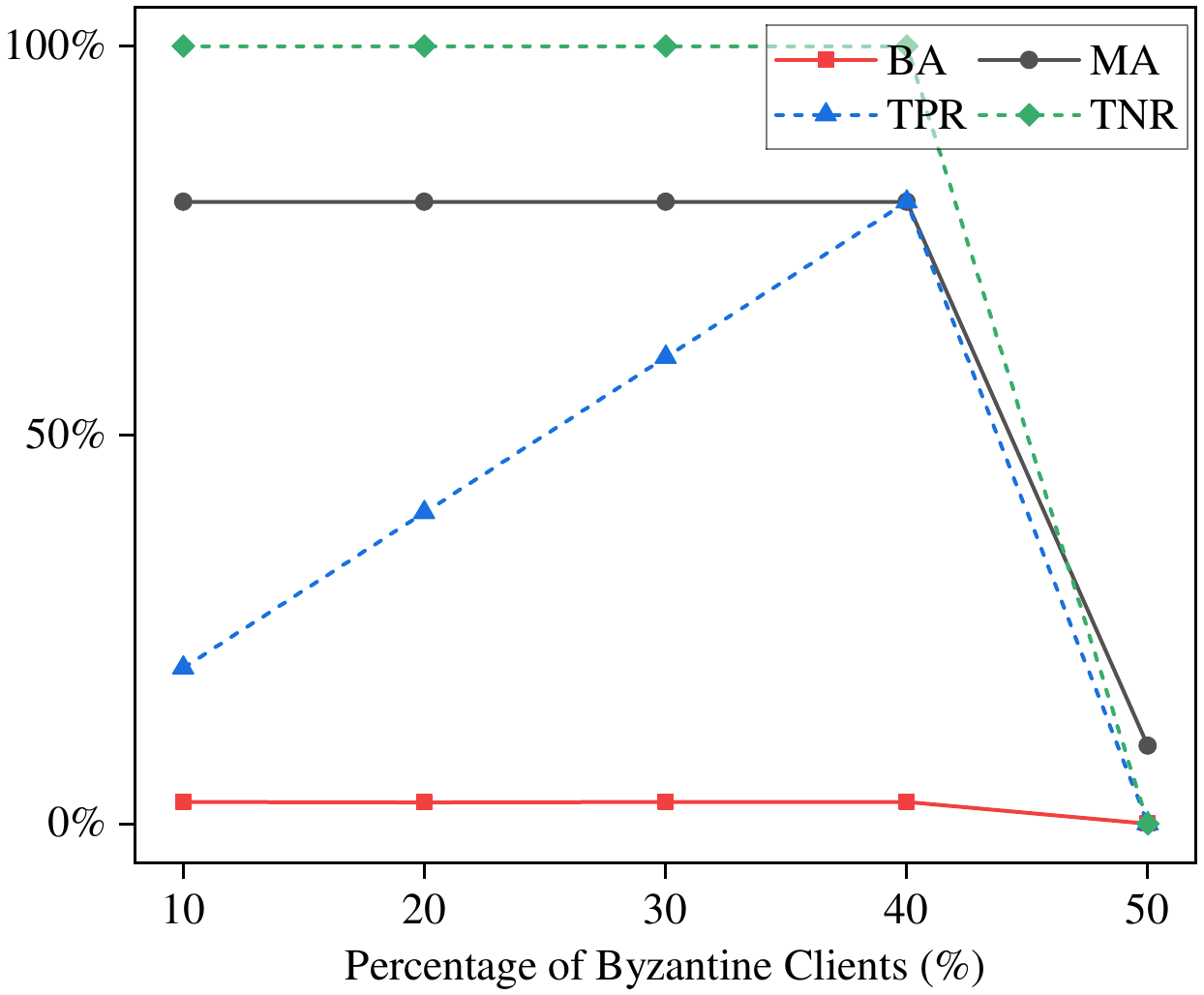}}
\subfloat[FLAME]{\label{fig:all_various_byzantine_clients_flame} \includegraphics[width=0.24\textwidth]{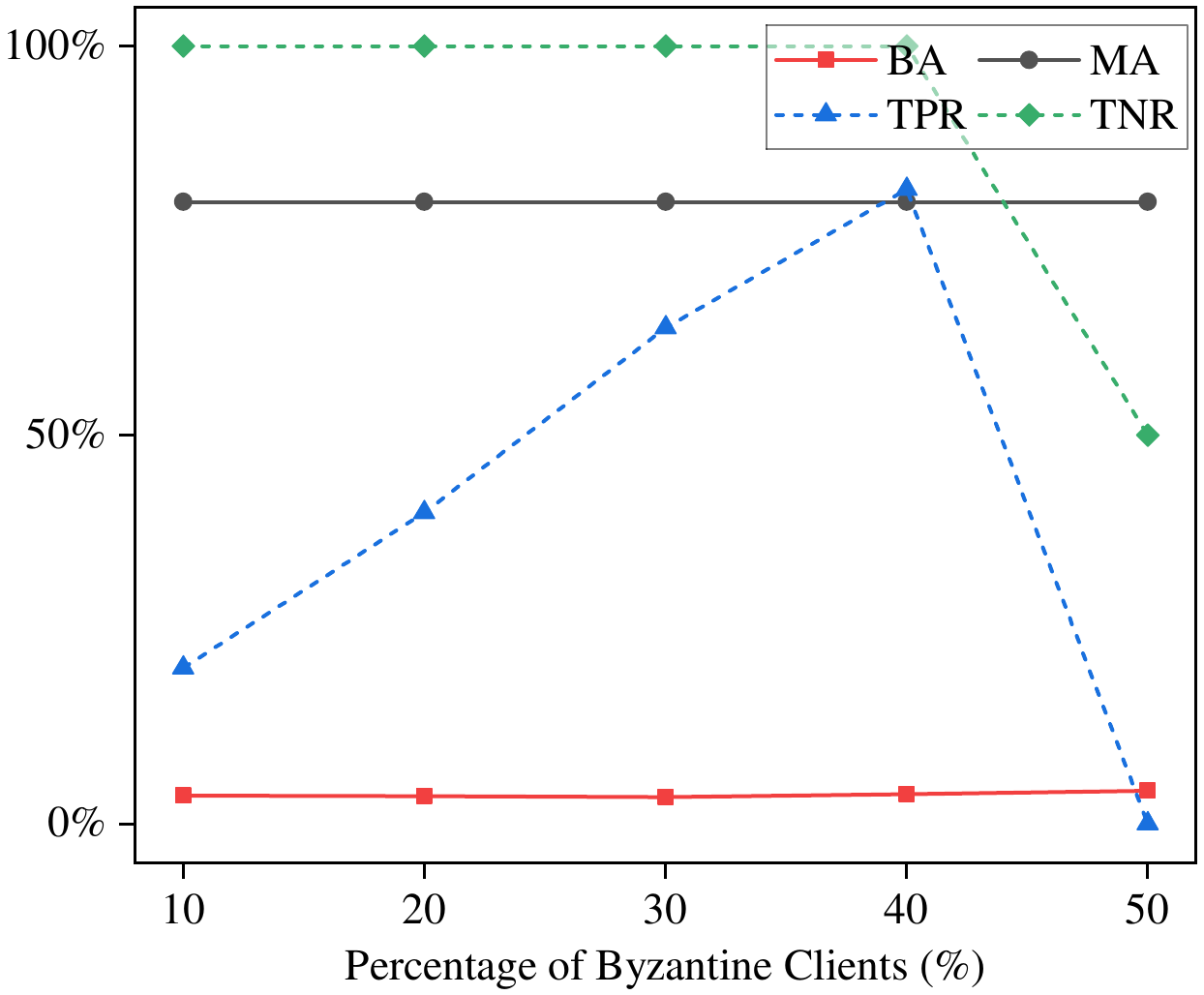}}\\

\subfloat[ABBR Multi-Krum]{\label{fig:all_various_byzantine_clients_multi_krum_abbr} \includegraphics[width=0.24\textwidth]{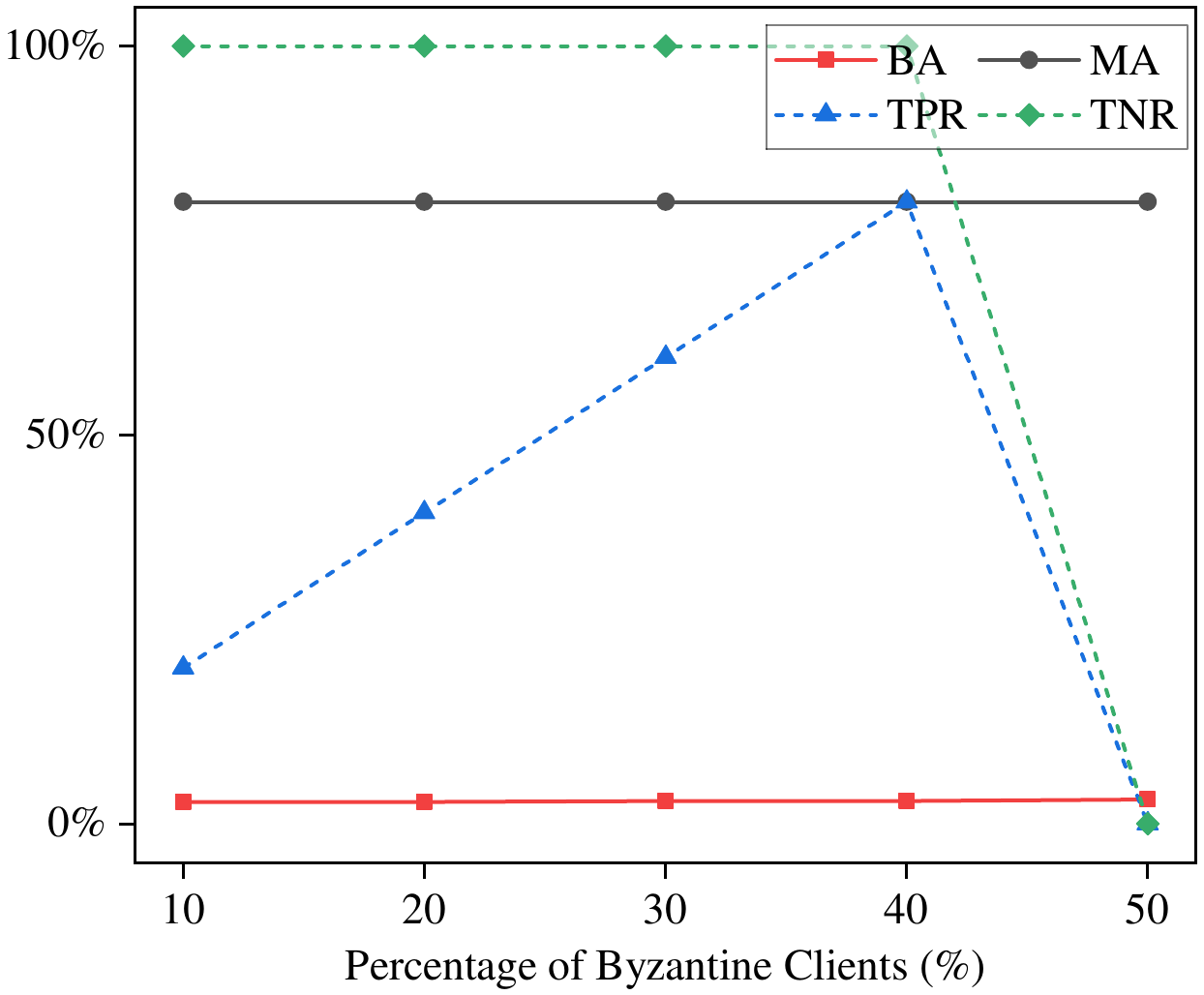}}
\subfloat[ABBR FoolsGold]{\label{fig:all_various_byzantine_clients_foolsgold_abbr} \includegraphics[width=0.24\textwidth]{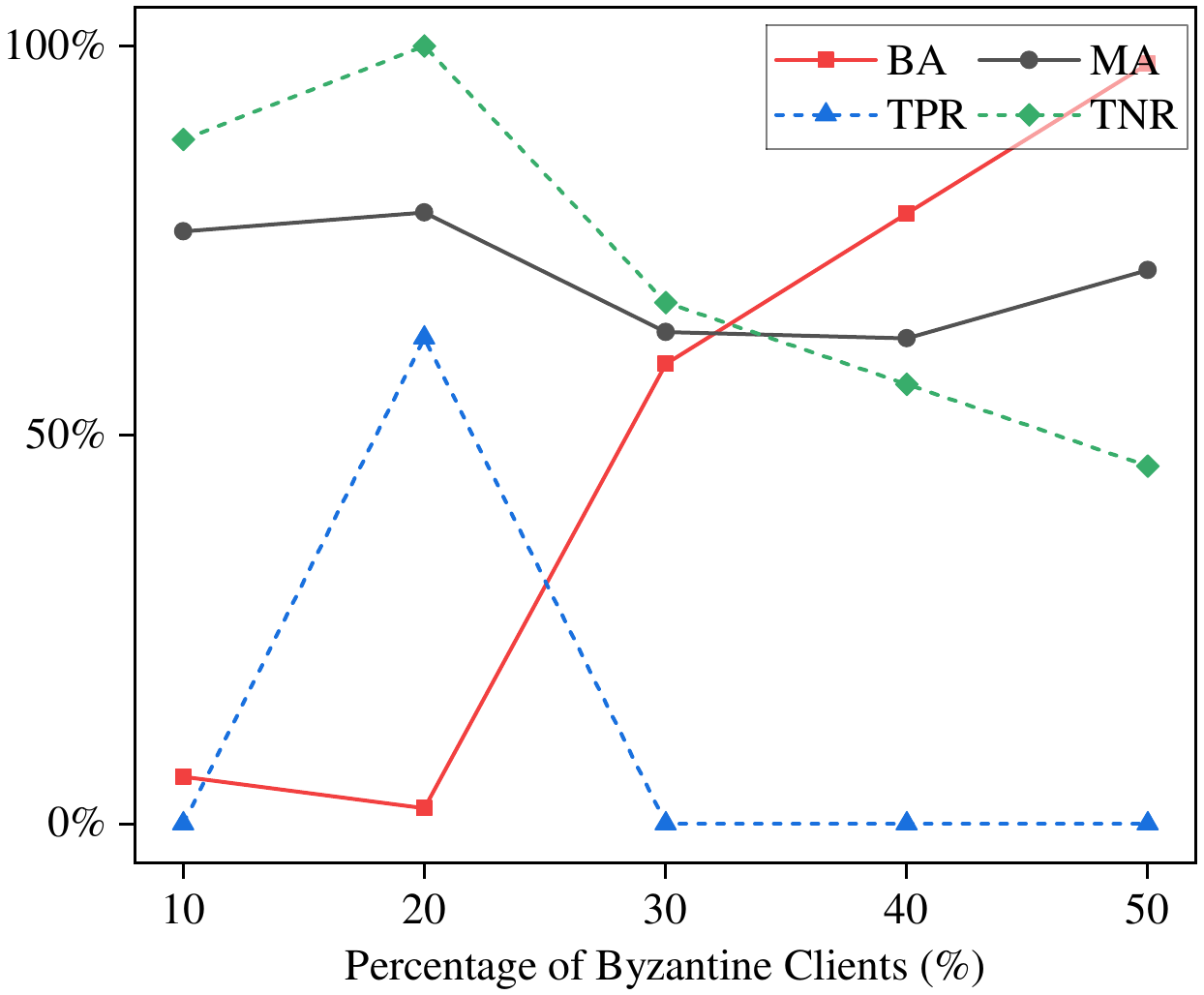}}
\subfloat[ABBR FABA]{\label{fig:all_various_byzantine_clients_faba_abbr} \includegraphics[width=0.24\textwidth]{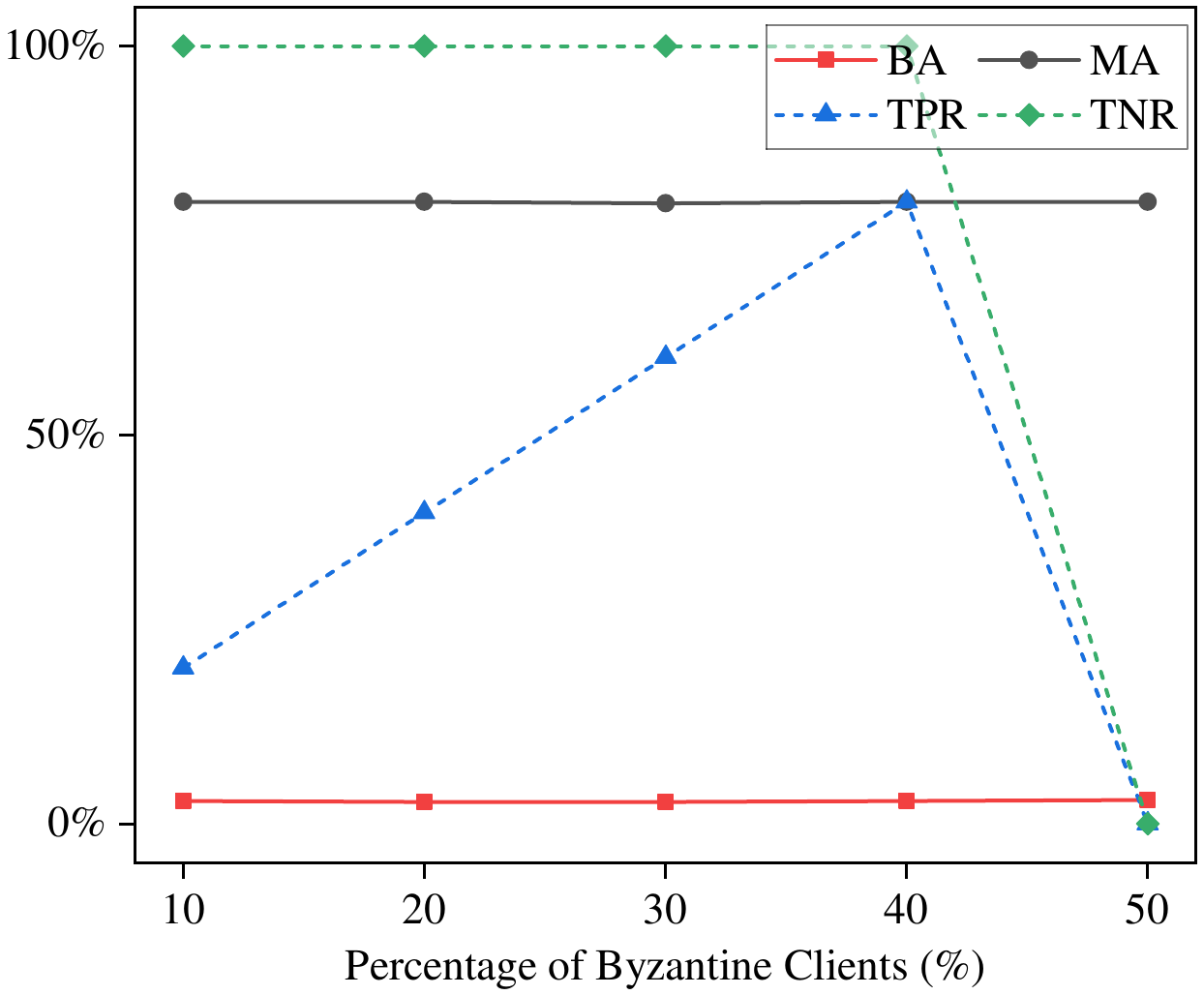}}
\subfloat[ABBR FLAME]{\label{fig:all_various_byzantine_clients_flame_abbr} \includegraphics[width=0.24\textwidth]{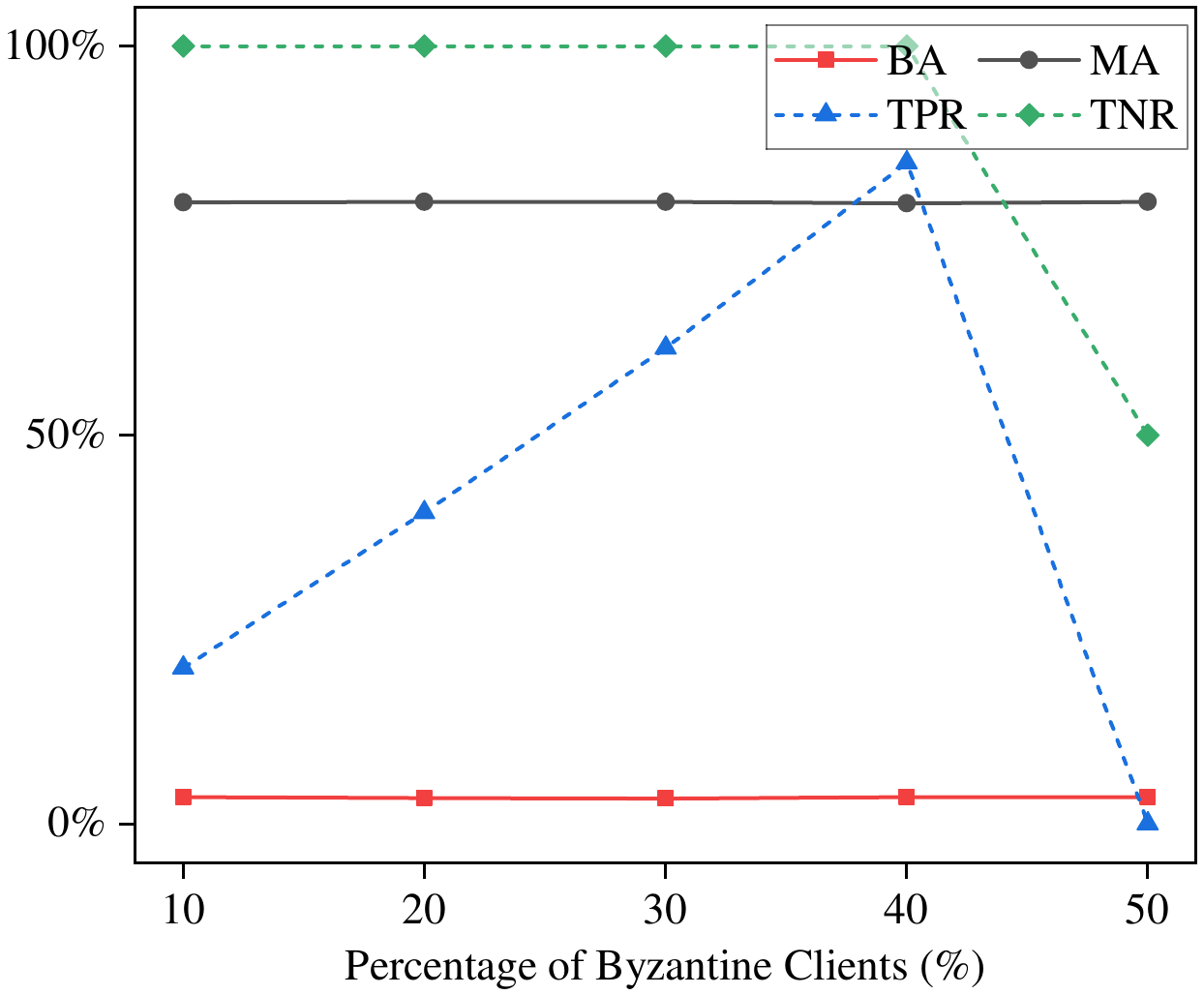}}\\
\caption{Impact of the percentage of Byzantine clients on CIFAR-10 dataset when all clients are participated in an iterarion. (a)-(d): original vector-wise filterings, (e)-(f): ABBR versions of vector-wise filterings.}
\label{fig:all_bcr_impact}
\end{figure*}

\subsection{The proof of Theorem~\ref{theorem_2}}
\label{proof:theo_2}
We first provide the proof based on Euclidean distance. Suppose Assumption~\ref{assumption_1} holds, the goal of vector-wise filtering is formulated as follows:
\begin{equation}
\label{goal_euclidean_distance}
I_{opt} = \mathrm{arg} \min_{I \subset I_{N}}{\left \| \frac{1}{\left | I \right | } \sum_{j\in I} L_j-G_t^* \right \|}^2 \leq \alpha,
\end{equation}
where we assume that the aggregation $G^t =  \frac{1}{\left | I \right | } \sum_{j\in I}$, $I_{N}$ represents the index set of all local models. We assume $I_{opt}$ is the index set of accepted local models by vector-wise filtering with the Euclidean distance in the original-dimensional space, $I'_{opt}$ is the index set of accepted local models by vector-wise filtering with the Euclidean distance in the low-dimensional space.

We first consider the case that $I^{'}_{opt}$ and $I_{opt}$ are equal, then:
\begin{equation}
\label{case_1}
{\left \| \frac{1}{\left | I^{'}_{opt} \right | } \sum_{j\in I^{'}_{opt}} L_j-G_t^* \right \|}^2 = {\left \| \frac{1}{\left | I_{opt} \right | } \sum_{j\in I_{opt}} L_j-G_t^* \right \|}^2.
\end{equation}
Then, we consider the case that $I^{'}_{opt}$ and $I_{opt}$ are not equal. Since the distance between any two local models changes in the low-dimensional space, the optimal group $I^{'}_{opt}$ may change in two directions. The one is: 
\begin{equation}
\label{case_2}
{\left \| \frac{1}{\left | I^{'}_{opt} \right | } \sum_{j\in I^{'}_{opt}} L^{'}_j-G_t^{'*} \right \|}^2 > {\left \| \frac{1}{\left | I_{opt} \right | } \sum_{j\in I_{opt}} L^{'}_j-G_t^{'*} \right \|}^2,
\end{equation}
where $L^{'}_j$, $G^{'}_t$, and $G_t^{'*}$ is the low-dimensional model. However, it contradicts the goal \ref{goal_euclidean_distance}. The other one is:
\begin{equation}
\label{case_3}
{\left \| \frac{1}{\left | I^{'}_{opt} \right | } \sum_{j\in I^{'}_{opt}} L^{'}_j-G_t^{'*} \right \|}^2 \le {\left \| \frac{1}{\left | I_{opt} \right | } \sum_{j\in I_{opt}} L^{'}_j-G_t^{'*} \right \|}^2.
\end{equation}
According to Theorem \ref{theorem_1}, the Inequality \ref{case_3} is occurred only when the following Inequality holds:
\begin{equation}
\label{case_3_1}
{\left \| \frac{1}{\left | I^{'}_{opt} \right | } \sum_{j\in I^{'}_{opt}} L_j-G_t^* \right \|}^2 \le \frac{1+\varepsilon}{1-\varepsilon}{\left \| \frac{1}{\left | I_{opt} \right | } \sum_{j\in I_{opt}} L_j-G_t^* \right \|}^2.
\end{equation}

According to the Inequality \ref{case_3_1}, we can prove the Inequality \ref{equ:assumption_1_1} in Theorem \ref{theorem_2}. The proof of Inequality \ref{equ:assumption_1_2} in Theorem \ref{theorem_2} is the same as above, so we omit it.

\begin{figure*}[htp]
\centering
\subfloat[Multi-Krum]{\label{fig:degree_noniid_2_multi_krum} \includegraphics[width=0.24\textwidth]{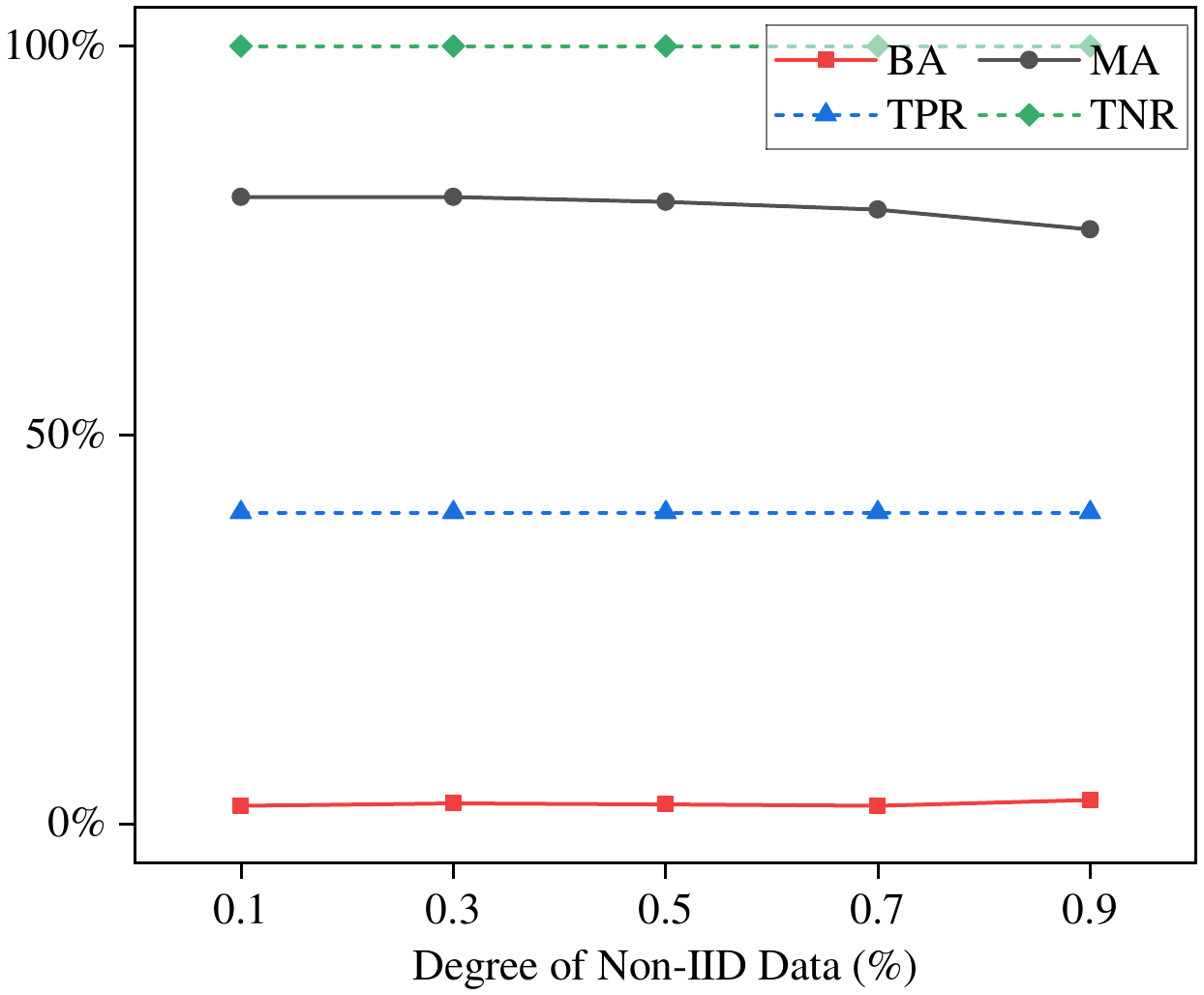}}
\subfloat[FoolsGold]{\label{fig:degree_noniid_2_foolsgold} \includegraphics[width=0.24\textwidth]{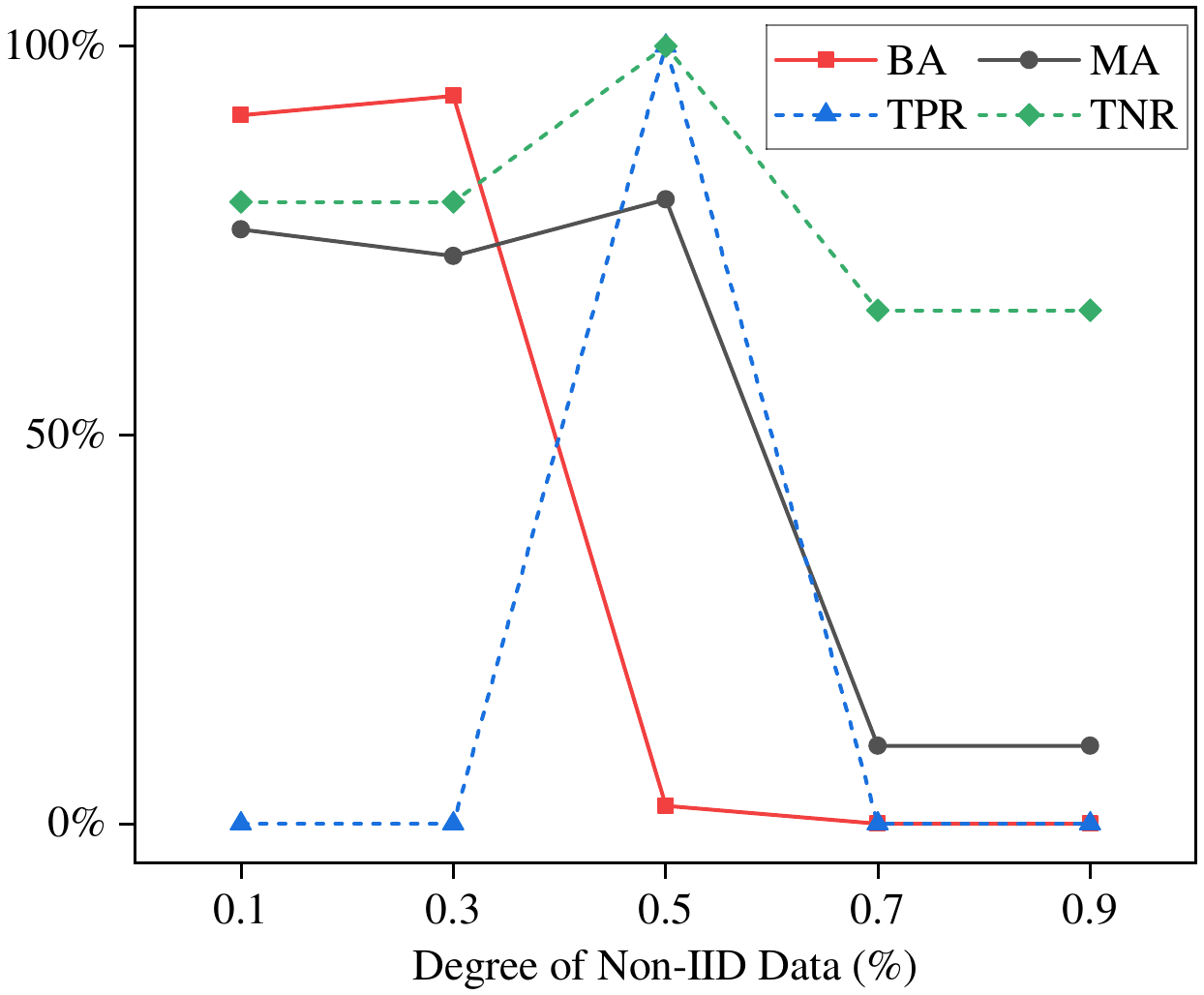}}
\subfloat[FABA]{\label{fig:degree_noniid_2_faba} \includegraphics[width=0.24\textwidth]{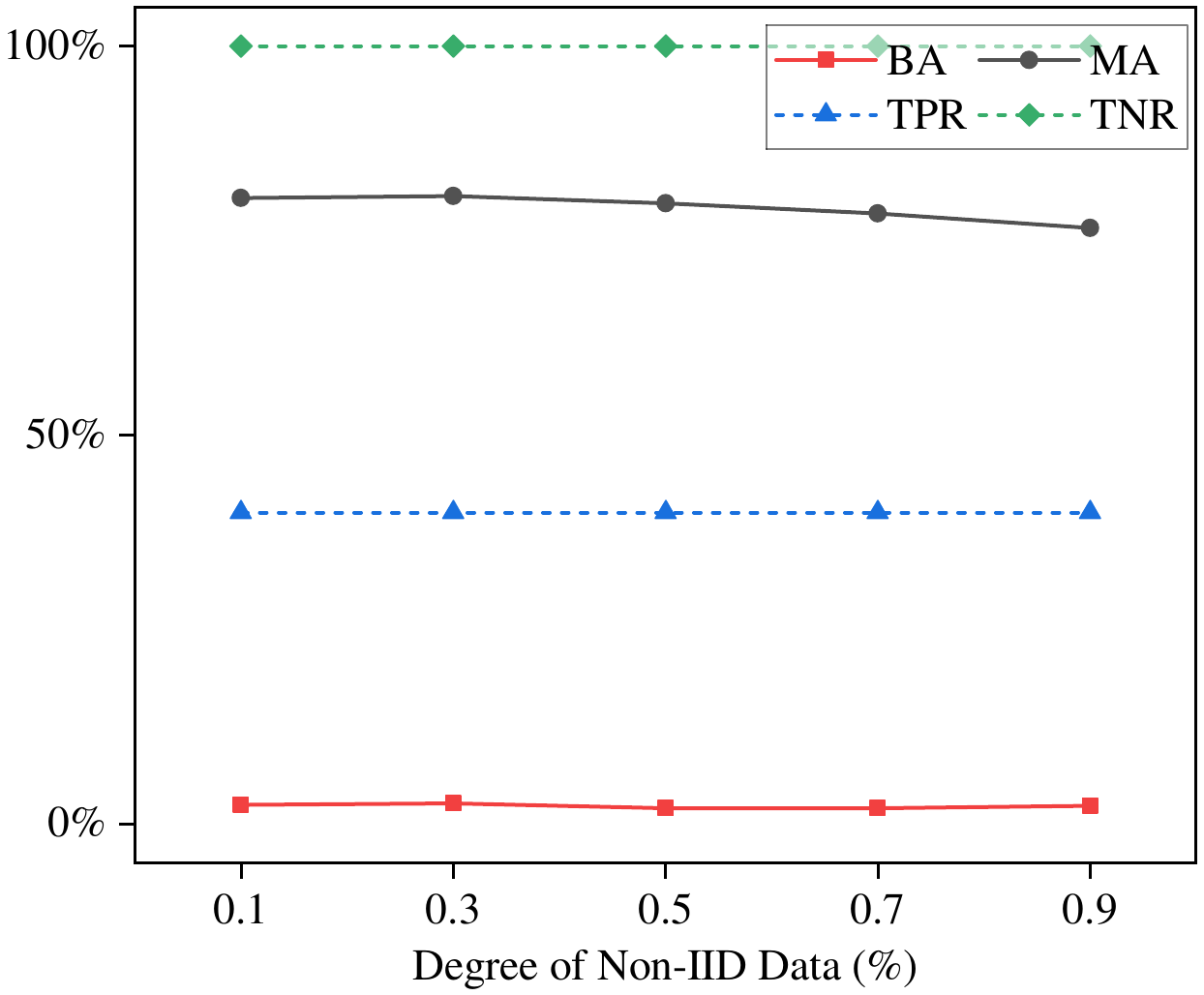}}
\subfloat[FLAME]{\label{fig:degree_noniid_2_flame} \includegraphics[width=0.24\textwidth]{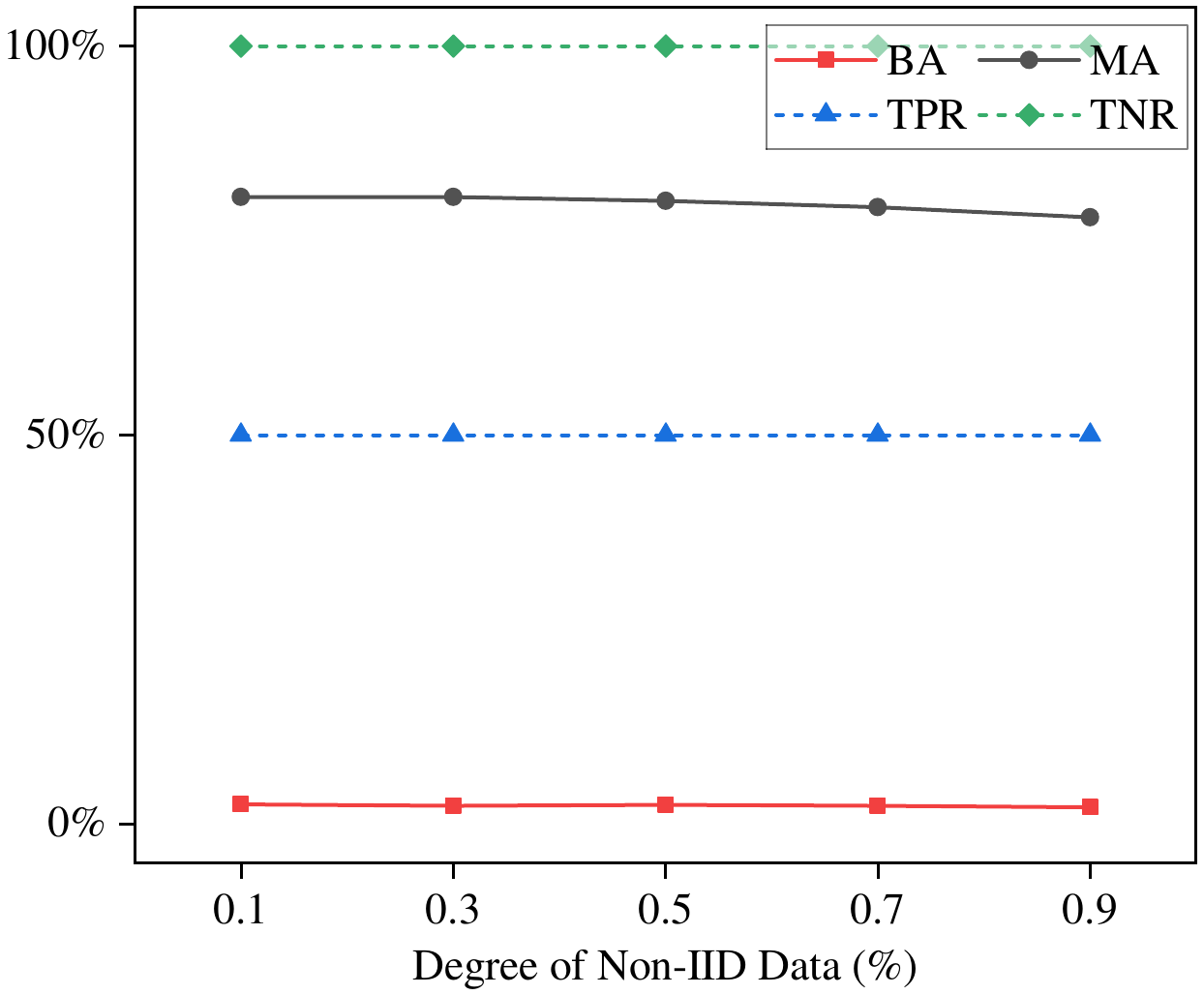}}\\

\subfloat[ABBR Multi-Krum]{\label{fig:degree_noniid_2_multi_krum_abbr} \includegraphics[width=0.24\textwidth]{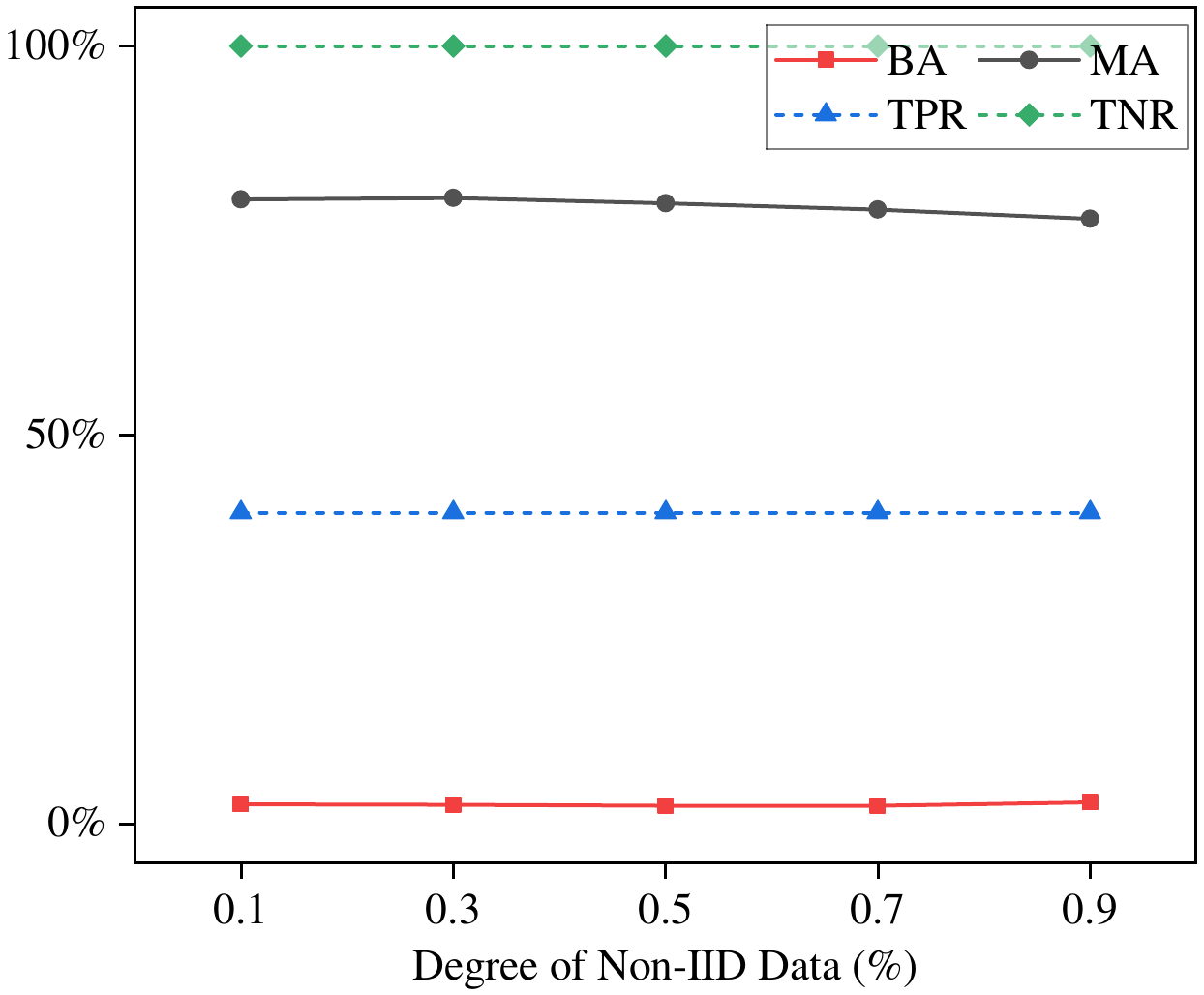}}
\subfloat[ABBR FoolsGold]{\label{fig:degree_noniid_2_foolsgold_abbr} \includegraphics[width=0.24\textwidth]{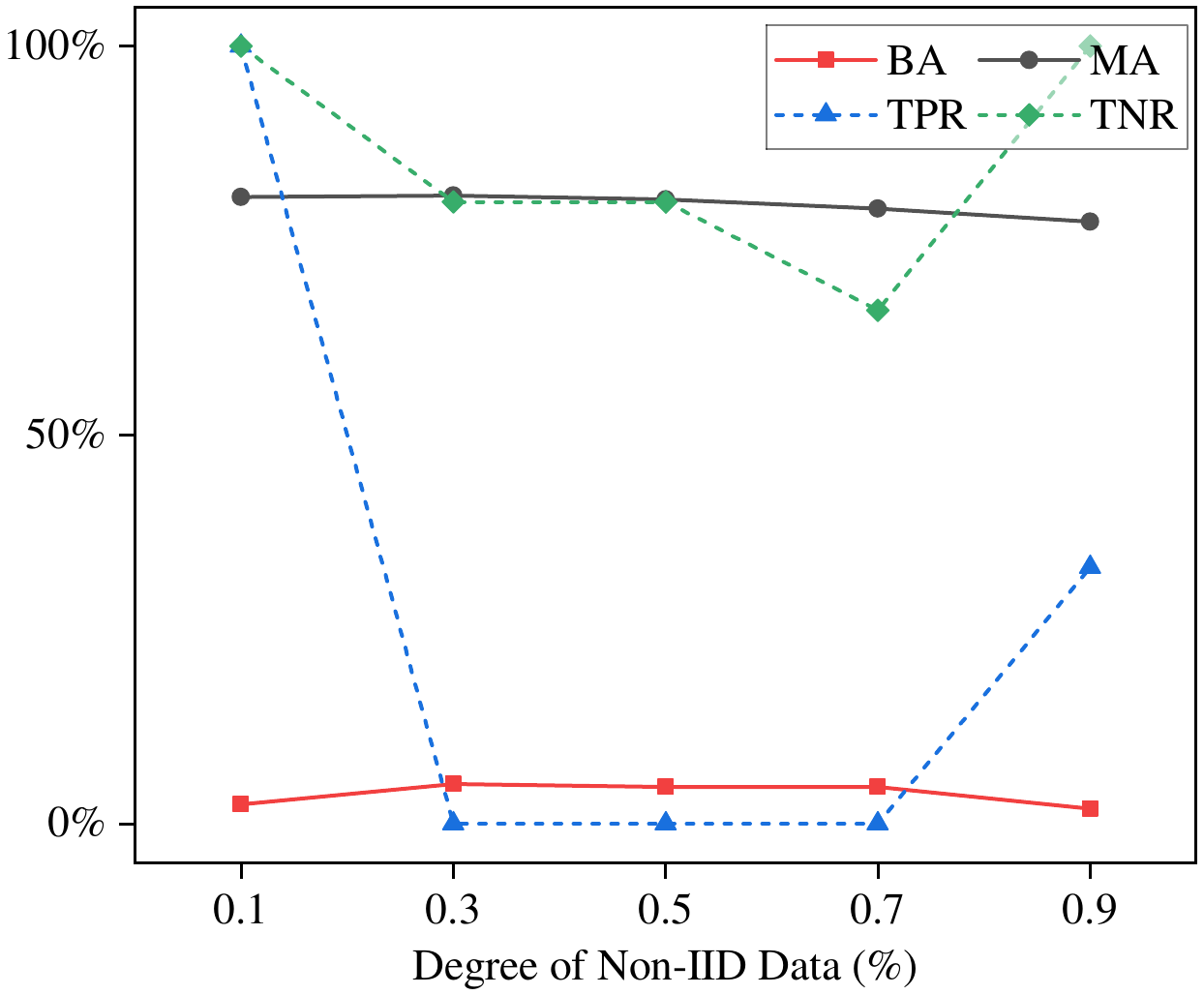}}
\subfloat[ABBR FABA]{\label{fig:degree_noniid_2_faba_abbr} \includegraphics[width=0.24\textwidth]{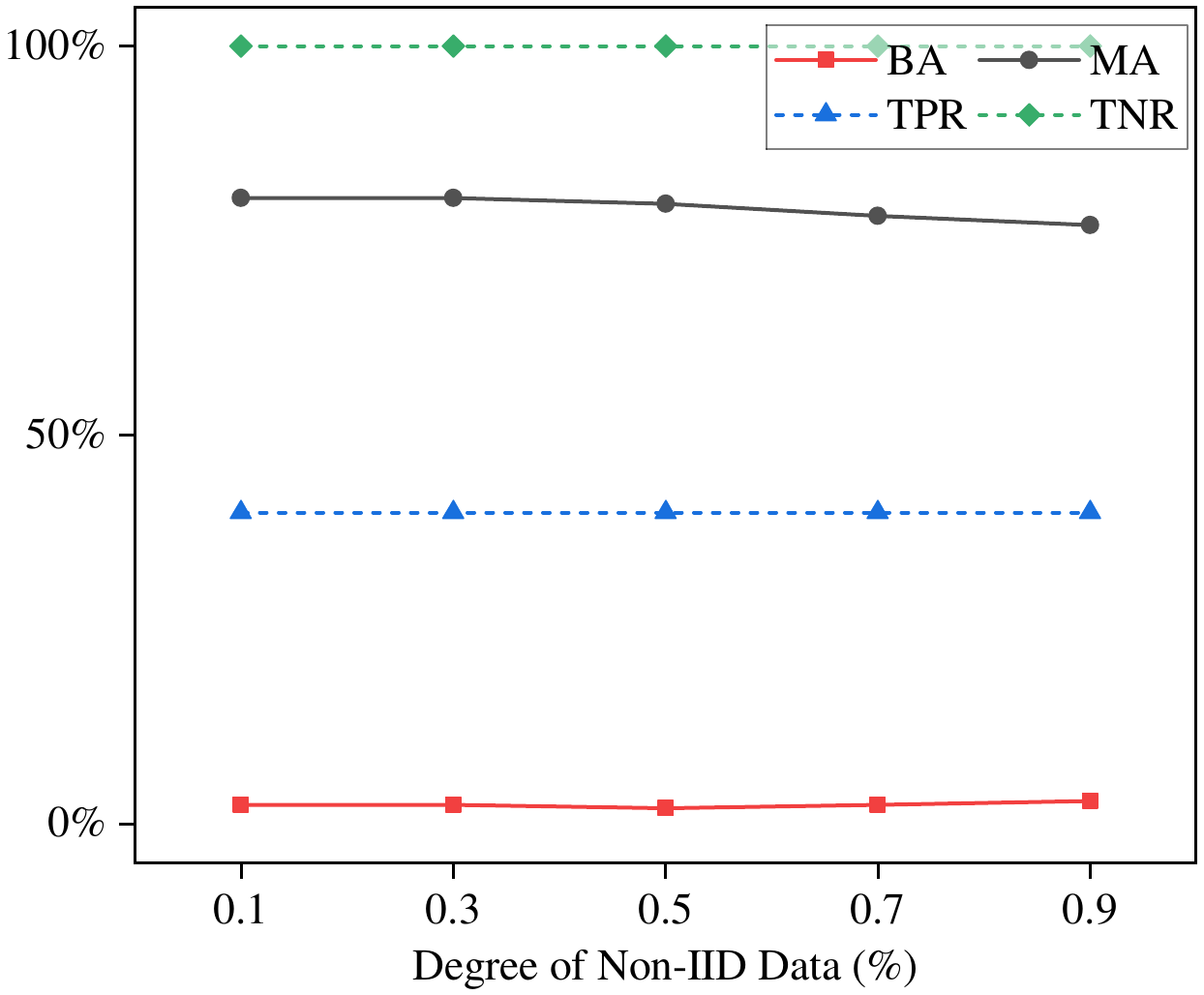}}
\subfloat[ABBR FLAME]{\label{fig:degree_noniid_2_flame_abbr} \includegraphics[width=0.24\textwidth]{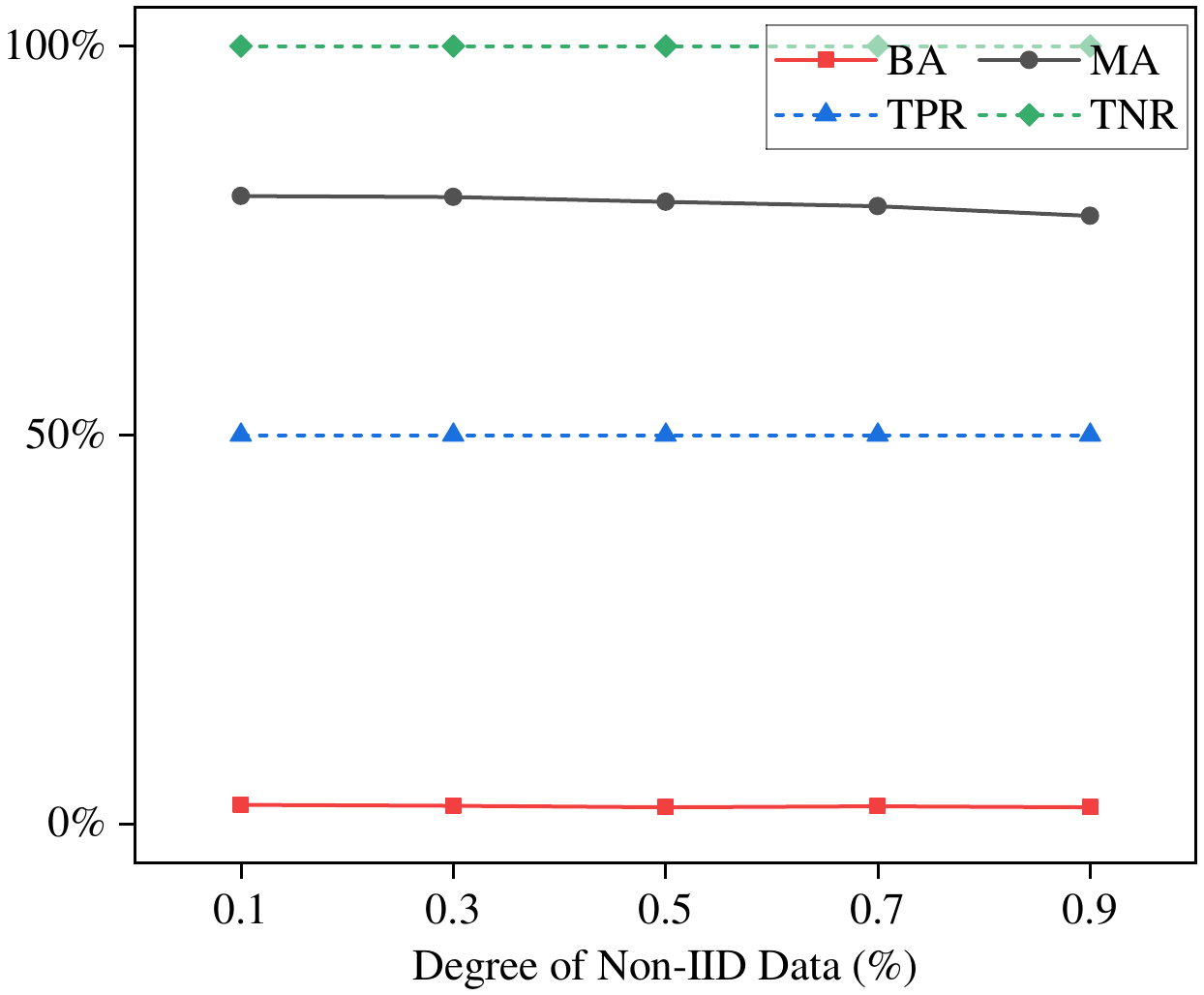}}\\

\caption{Impact of the degree of non-IID data (simulated by assigning data with label $l$ to the $l$-th client group with proportion $p \in [0,1]$ ) on CIFAR-10 dataset. (a)-(d): original vector-wise filterings, (e)-(f): ABBR versions of vector-wise filterings.}
\label{fig:alpha_2_impact}
\end{figure*}

\begin{algorithm}[hpt]
	\caption{Dimensionality Reduction}
	\label{alg:dimensionality_reduction}
	\small
        \begin{spacing}{1}
	\begin{algorithmic}[1]
	    
	    \Function{DimensionalityReduction}{${\langle L_1 \rangle}^A,\cdots,{\langle L_n \rangle}^A,d,\varepsilon,\eta$}
	    \State $s \gets$ \Call{DiffieHellman}{} \Comment{Diffie-Hellman key agreement is used to obtain the random seed.}
	    \State $k = \left \lceil \frac{4+2\eta}{\varepsilon ^ 2 - \varepsilon ^ 3}\log{n} \right \rceil $
	    \State $\mathbf{P} \gets$ \Call{PRG}{$s$} $\bmod 2$ \Comment{PRG is the Pseudorandom Generator that outputs a uniformly random matrix with the random seed $s$. The size of $\mathbf{P} $ is $d \times k$ and its elements are uniformly sampled from $\left \{ 0,1 \right \} $. }
	    \State $\mathbf{P} \left [ \mathbf{P} =0 \right ]  = -1$ \Comment{If the element $e$ of $\mathbf{P} $ is 0, then $e = -1$.}
	 
	    \For{each client $i$ in $\left [ 1,n \right ]$}
            \State ${\langle L'_i \rangle}^A \gets$ \Call{Reducing}{${\langle L_i \rangle}^A, \mathbf{P} ,d,k$}
	    \EndFor
        \State \Return ${\langle L'_1 \rangle}^A,\cdots,{\langle L'_n \rangle}^A$
        \EndFunction
        \\
        \Function{Reducing}{${\langle L \rangle}^A, \mathbf{P} , d, k$}
        \For {$i$ in $[ 0,k )$ }
        \State ${\langle L'[i] \rangle}^A = 0$
        \For {$j$ in $[0,d)$}
        \If {$\mathbf{P} [j,i] = 1$}
        \State ${\langle L'[i] \rangle}^A = \mathbf{ADD}({\langle L'[i] \rangle}^A ,{\langle L[j] \rangle}^A )$
        \Else
        \State ${\langle L'[i] \rangle}^A = \mathbf{SUB}({\langle L'[i] \rangle}^A ,{\langle L[j] \rangle}^A )$
        \EndIf
        \EndFor
        \EndFor
        \State \Return ${\langle L' \rangle}^A$
        \EndFunction

	\end{algorithmic}
	\end{spacing}
        
\end{algorithm}

\begin{algorithm}[hpt]
	\caption{Adaptive Tuning}
	\label{alg:adaptive_tuning}
	\small
        \begin{spacing}{1}
	\begin{algorithmic}[1]
	    
        \Function{AdaptiveTuning \quad}{$I_{opt}, G_{t-1}, \{{\langle L_i \rangle}^A\}_{i \in I_n}, \{{\langle L'_i \rangle}^A\}_{i \in I_n}$}
            \State $G'_{t-1}=G_{t-1} \times \mathbf{P}$ \Comment{$\mathbf{P} $ is the same $d \times k$ projection matrix in Algorithm \ref{alg:dimensionality_reduction}.}
            \For{$i$ in $(1, \dots  n)$}
            \State ${\langle e_i \rangle}^A \gets$ \Call{EuclideanDistances}{${\langle L'_i \rangle}^A, G'_{t-1}$}  \Comment{$e_i$ is the Euclidean distance between local model $L_i$ and global model $G_{t-1}$ in the low-dimensional space.}
            \EndFor
            \State ${\langle S_1 \rangle}^A \gets$ \Call{PrivateMedian}{$\{{\langle e_i \rangle}^A\}_{i \in (1,\dots,n)}$} \Comment{$S_1$ is the median Euclidean distance between all local models and global model.}
            \State ${\langle S_2 \rangle}^A \gets$ \Call{PrivateMin}{$\{{\langle e_i \rangle}^A\}_{i \in (1,\dots,n)}$} \Comment{$S_2$ is the minimum Euclidean distance between all local models and global model.}
            \For{$i$ in $I_{opt}$}
            \State $\gamma_i =
                \begin{cases}
                 \frac{ S_2 }{e_i} & \text{ if } {\langle e_i \rangle}^A > {\langle S_1 \rangle}^A \\
                 1 &  \text{ if } {\langle e_i \rangle}^A \le {\langle S_1 \rangle}^A
                \end{cases}$ \Comment{We use the private operators MUX and CMP in ABY to construct private functions, and then reconstruct $\frac{ S_2 }{e_i}$ from the secret sharing ${\langle \frac{S_2}{e_i}\rangle}^A$. }
            \State ${\langle L_i \rangle}^A = G_{t-1} + \gamma_i \cdot ({\langle L_i \rangle}^A - G_{t-1}) $ \Comment{Since Only the local model $L_i$ is in the form of secret sharing, no expensive operators are introduced in STPC.}
            \EndFor
	  
        \State \Return ${\langle L_1 \rangle}^A,\cdots,{\langle L_n \rangle}^A$
        \EndFunction
        
	\end{algorithmic}
	\end{spacing}
        
\end{algorithm}

\begin{algorithm}[hpt]
	\caption{Adaptive-MD}
	\label{alg:Adaptive-MD}
	\small
        \begin{spacing}{1}
	\begin{algorithmic}[1]
	    \Statex \textbf{Input:} Benign gradients $\bigtriangledown_{i\in\{n\}}$, the number of Byzantine clients $c$, scale factor $\gamma, \gamma_{succ}$, step size $s$, scale threshold $\tau$, loss threshold $L$, iterations $I$.
        \Statex \textbf{Output:} Byzantine gradients $\bigtriangledown_{i\in\{c\}}$
        \State $\gamma \gets 10$
        \State $s \gets 5$
        \State $\bigtriangledown_{ben} \gets $ \Call{Mean}{$\bigtriangledown_{i\in\{n\}}$}
        \State $\bigtriangledown_{sgn}^p \gets $\Call{Sign}{$\bigtriangledown_{i\in\{n\}}$}
        \For{$i$ in $[1, I]$}
        \State $\bigtriangledown_{byz} \gets (\bigtriangledown_{ben} - \gamma \cdot \bigtriangledown_{sgn}^p)$
        \State $\bigtriangledown_{i\in\{c\}} \gets \bigtriangledown_{byz}$
        \State $\bigtriangledown_{agg} \gets$ \Call{RobustAggregation}{ $\bigtriangledown_{i\in\{c\}},\bigtriangledown_{i\in\{n\}}$}
        \State $l = \left \| \bigtriangledown_{agg}- \bigtriangledown_{byz}\right \| $
        \If { $l > L$}
        \State $\gamma_{succ} \gets \gamma$
        \State $\gamma \gets (\gamma + s/2)$
        \Else
        \State $\gamma \gets (\gamma - s/2)$
        \EndIf
        \State $s \gets s/2$
        
        \EndFor
        \State $\bigtriangledown_{i\in\{c\}} \gets (\bigtriangledown_{ben} - \gamma_{succ} \cdot \bigtriangledown_{sgn}^p)$
        \Statex \textbf{Return:} $\bigtriangledown_{i\in\{c\}}$
	\end{algorithmic}
	\end{spacing}
        
\end{algorithm}

\begin{algorithm}[hpt]
	\caption{Adaptive-AT}
	\label{alg:Adaptive-AT}
	\small
        \begin{spacing}{1}
	\begin{algorithmic}[1]
	    \Statex \textbf{Input:} Benign gradients $\bigtriangledown_{i\in\{n\}}$, the number of Byzantine clients $c$, scale factor $\gamma$, iterations $I$.
        \Statex \textbf{Output:} Byzantine gradients $\bigtriangledown_{i\in\{c\}}$
        \State $\gamma \gets 50$
        \State $\bigtriangledown_{ben} \gets $ \Call{Mean}{$\bigtriangledown_{i\in\{n\}}$}
        \State $\bigtriangledown_{sgn}^p \gets $\Call{Sign}{$\bigtriangledown_{i\in\{n\}}$}
        \For{$i$ in $[1, I]$}
        \State $\bigtriangledown_{byz} \gets (\bigtriangledown_{ben} - \gamma \cdot \bigtriangledown_{sgn}^p)$
        \State $\bigtriangledown_{i\in\{c\}} \gets \bigtriangledown_{byz}$
        \State $\tau \gets$ \Call{MedianNorm}{ $\bigtriangledown_{i\in\{c\}},\bigtriangledown_{i\in\{n\}}$}
        
        \State $l \gets \left \| \bigtriangledown_{byz}\right \| $
        \If { $l > \tau$}
        \State $\gamma \gets \gamma/2$
        \Else
        \State break
        \EndIf
        
        \EndFor
        \State $\bigtriangledown_{i\in\{c\}} \gets (\bigtriangledown_{byz} - \gamma_{succ} \cdot \bigtriangledown_{sgn}^p)$
        \Statex \textbf{Return:} $\bigtriangledown_{i\in\{c\}}$
	\end{algorithmic}
	\end{spacing}
        
\end{algorithm}

\subsection{The detailed setup of Experiments}
\label{detailed_setup}
\myparatight{Datasets} MNIST and Fashion-MNIST all consist of 70,000 28$\times$28-dimensional grayscale images in 10 categories. MNIST is a large collection of handwritten digits with 60,000 images for training and 10,000 for testing. CIFAR-10 contains 60,000 color images of 32$\times$32 dimensions in 10 classes, with 5,000 training and 1,000 test images per class. The Tiny-ImageNet contains 120,000 images of 200 classes and each class has 500 training images, 50 validation images, and 50 test images.

\myparatight{STPC Protocol} Since ABY currently does not support floating point numbers well, we convert the model parameters to integers with precision $p=20$ by default and set ABY's sharing length to 64 bits. We simulate two separate servers by creating two processes on a server equipped with two Intel Xeon Platinum 8375C CPUs with 2.90GHz and 256GB RAM, where the LAN between the two servers is about 812 Mbit/s, and the RTT is about 0.1 ms. 

\myparatight{Evaluation Metrics} In performance evaluation, we consider the metrics of \emph{Runtime} and \emph{Communication Overhead} in the setup phase and online phase, which indicate the efficiency of privacy-preserving computation. The execution of the ABY protocol can be divided into setup phase and online phase \cite{DBLP:conf/ndss/Demmler0Z15}. In the setup phase, the two servers generate associated random numbers that are independent of the input data, which can be pre-executed. In the online phase, the two servers use the associated random numbers to calculate private functions and reconstruct the final result. We use the metric statistics function in the ABY protocol to obtain the runtime and communication overhead of our private protocol in the setup phase and online phase.

In robustness evaluation, we consider a set of metrics as follows: \emph{BA - Backdoor Accuracy} indicates the accuracy of the global model in the backdoor task. \emph{MA - Main Task Accuracy} indicates the accuracy of the global model in the main task. The targeted backdoor adversary aims to maximize \emph{BA} while maintaing the \emph{MA}, and the untargeted  adversary aims to minimize \emph{MA} only. the \emph{BA} and \emph{MA} of the global model can provide a better measure of robustness performance. \emph{TPR - True Positive Rate} indicates the ratio of the number of Byzantine local models correctly filtered to the total number of models being filtered. \emph{TNR - True Negative Rate} indicates the ratio of the number of benign local models correctly accepted to the total number of models being accepted.  \emph{TPR} and \emph{TNR} can better measure the impact of dimensionality reduction on filtering.

\myparatight{Constrain-and-Scale attack} This attack makes the backdoor local model excellent concealment by adding a penalty item to the local training loss of the Byzantine clients, thus creating a better attack effect.

\myparatight{DBA attack} This attack makes the backdoor local model more hidden by distributing the backdoor trigger into the local datasets of different Byzantine clients and embedding the entire trigger into the global model by aggregating all backdoor local models.

\myparatight{Edge-Case and PGD attacks} This attack trains a more covert backdoor local model by using data from the tail of the distribution, while the PGD attack will further improve concealment by limiting model updates to a ball at a fixed frequency during its training process.

\myparatight{Label flipping attack} This attack is a data poisoning attack method that modifies the label of training data to reduce the model's accuracy.

\myparatight{Gaussian attack} This attack randomly sample the parameter of local model from the Gaussian distribution which is estimated by the before-attack models of all Byzantine clients.
     
\subsection{Experiment Results}
\label{appendix:results}
In this section, we conduct two additional experiments to explore the robustness of our ABBR framework. 

\myparatight{Varying the percentage of Byzantine clients}
We investigate the impact of the Byzantine client rate (\emph{BCR} = $\frac{c}{n}$, where $c$ is the number of Byzantine clients) when all clients participate in each iteration. In this experiment, we use the Constrain-and-Scale attack and CIFAR-10 dataset. As shown in Figure~\ref{fig:all_bcr_impact}, most ABBR versions exhibit comparable or even better robustness than the baselines.

\myparatight{Varying the degree of non-IID data}
We conduct experiments by varying the degree of non-IID data (simulated by assigning data with label $l$ to the $l$-th client group at a proportion $p \in [0,1]$) on the CIFAR-10 dataset. Figure~\ref{fig:alpha_2_impact} shows that ABBR versions demonstrate robustness comparable to the baselines across different non-IID levels. 

\subsection{Algorithms}
In this section, we present the algorithm of dimensionality reduction in Algorithm~\ref{alg:dimensionality_reduction} and the algorithm of adaptive tuning in Algorithm~\ref{alg:adaptive_tuning}. In addition, we present the algorithm of our two adaptive attacks in Algorithm~\ref{alg:Adaptive-MD} and~\ref{alg:Adaptive-AT}.

\begin{IEEEbiography}[{\includegraphics[width=1in,height=1.25in,clip,keepaspectratio]{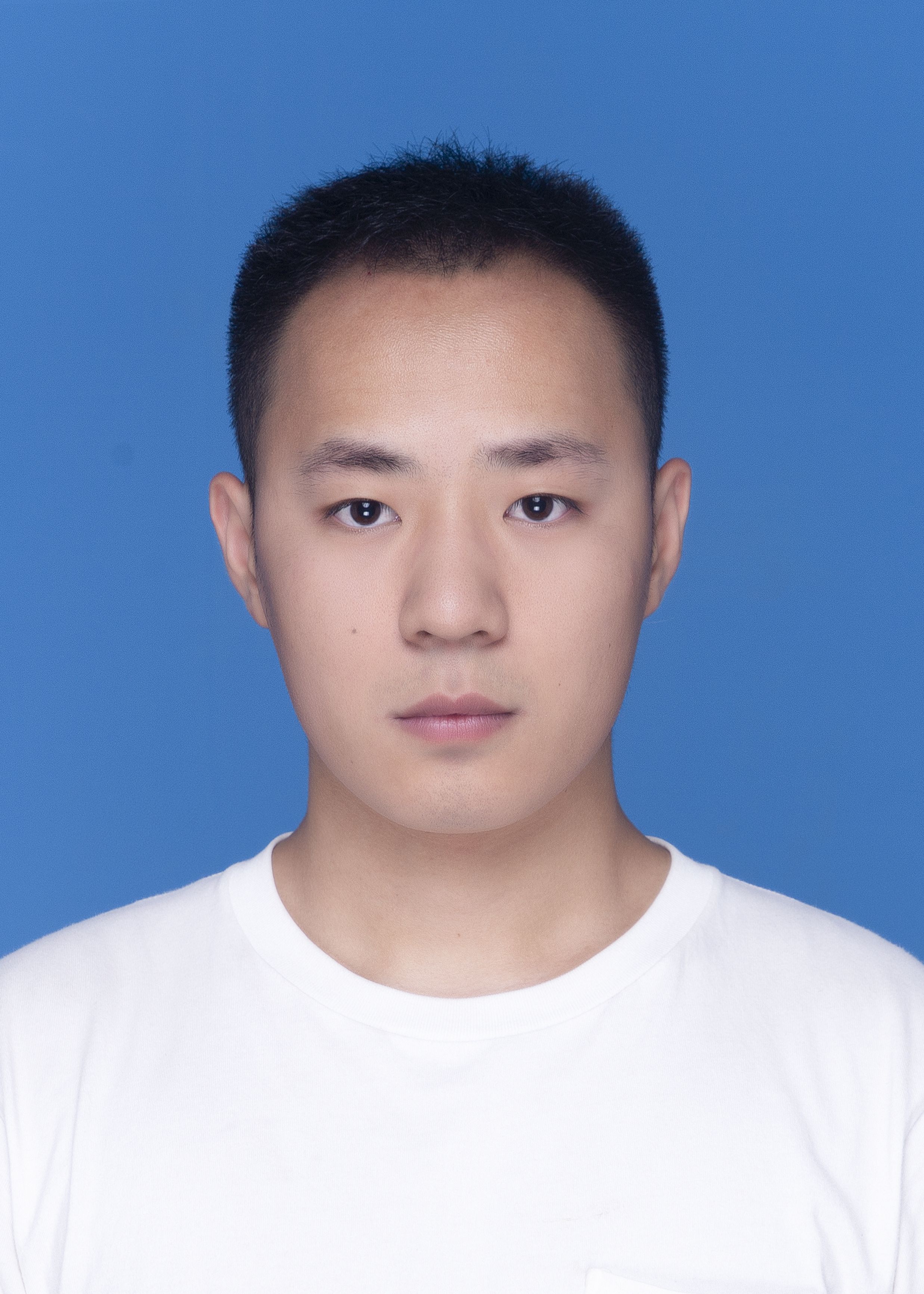}}]{Baolei~Zhang}
received his B.S. degree in software engineering from Henan University, China, in
2020. He is currently pursuing his Ph.D. degree in computer science and technology at Nankai University, China. His research interests include artificial intelligence security and privacy-preserving computation.
\end{IEEEbiography}

\begin{IEEEbiography}[{\includegraphics[width=1in,height=1.25in,clip,keepaspectratio]{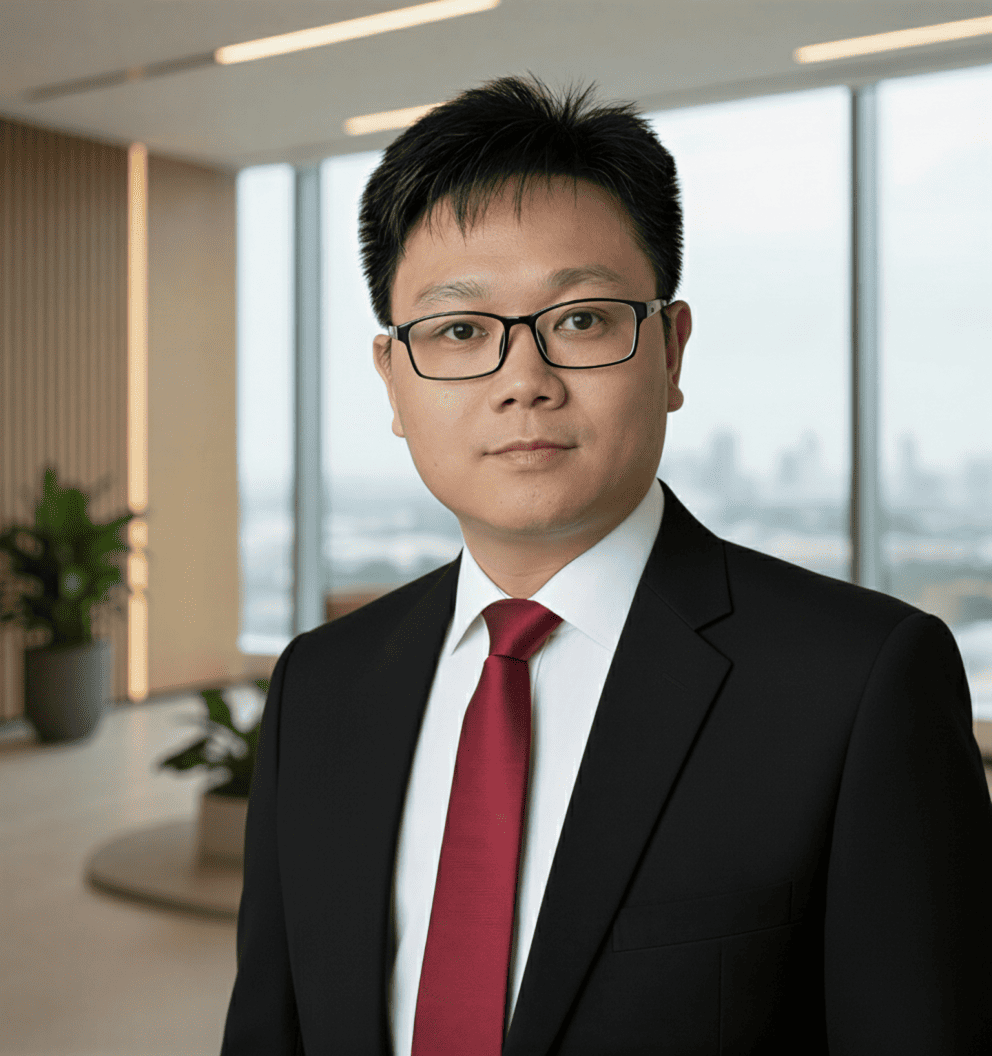}}]{Minghong~Fang}
is a tenure-track Assistant Professor in the Department of Computer Science and Engineering at the University of Louisville. From 2022 to 2024, he served as a Postdoctoral Associate in the Department of Electrical and Computer Engineering at Duke University. Dr. Fang earned his Ph.D. in Electrical and Computer Engineering from The Ohio State University in August 2022. His research broadly focuses on AI safety and security, with an emphasis on understanding the vulnerabilities of modern machine learning systems and developing principled methods to enhance their robustness and trustworthiness across diverse real-world applications.
\end{IEEEbiography}

\begin{IEEEbiography}[{\includegraphics[width=1in,height=1.25in,clip,keepaspectratio]{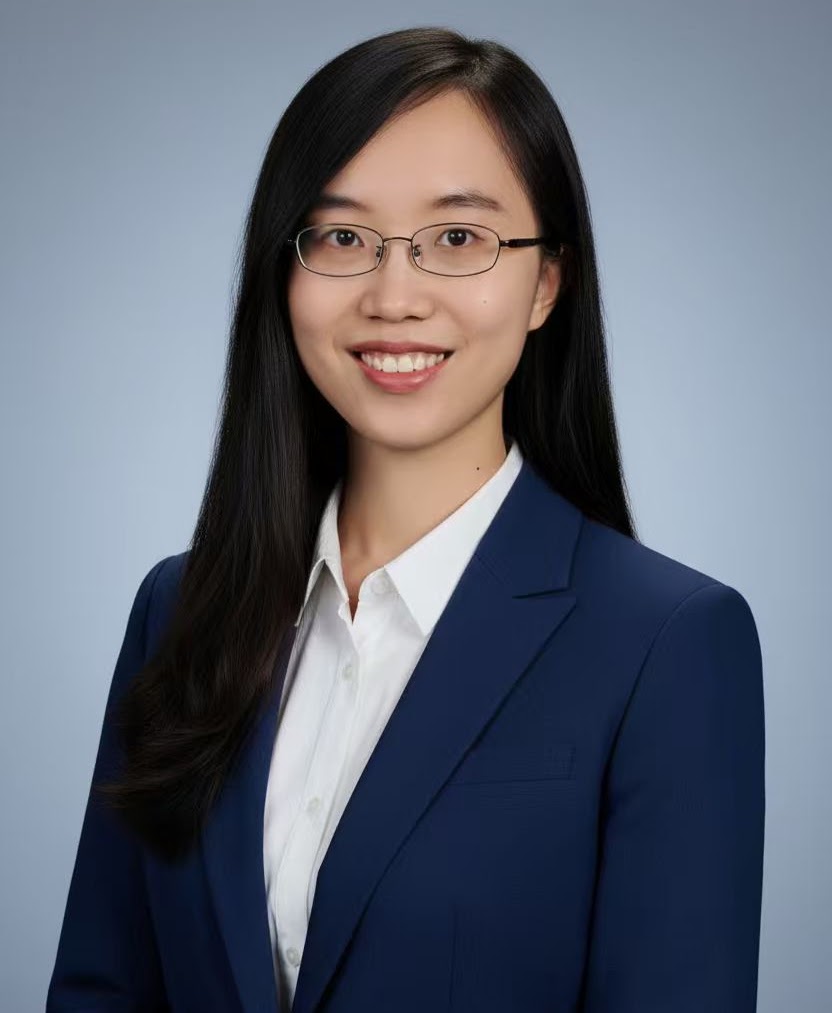}}]{Zhuqing~Liu}
is a tenure-track Assistant Professor in the Department of Computer Science and Engineering at the University of North Texas. She received her Ph.D. in Electrical and Computer Engineering from The Ohio State University in 2024. Her research lies at the intersection of machine learning, artificial intelligence, and trustworthy systems. Her work focuses on adversarial robustness, security forensics in emerging AI pipelines, federated and multi-agent learning, and the reliability of large foundation models.
\end{IEEEbiography}

\begin{IEEEbiography}[{\includegraphics[width=1in,height=1.25in,clip,keepaspectratio]{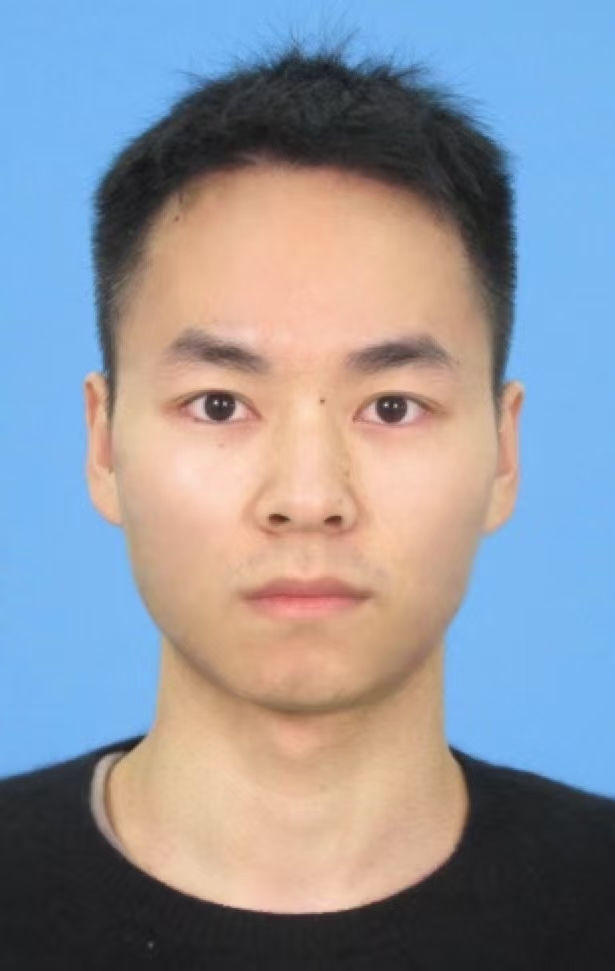}}]{Biao~Yi}
received his B.S. degree from the University of South China in 2019 and his M.S. degree from Shanghai University in 2022. He is currently pursuing his Ph.D. at Nankai University, where his research focuses on trustworthy LLM and artificial intelligence security.
\end{IEEEbiography}

\begin{IEEEbiography}[{\includegraphics[width=1in,height=1.25in,clip,keepaspectratio]{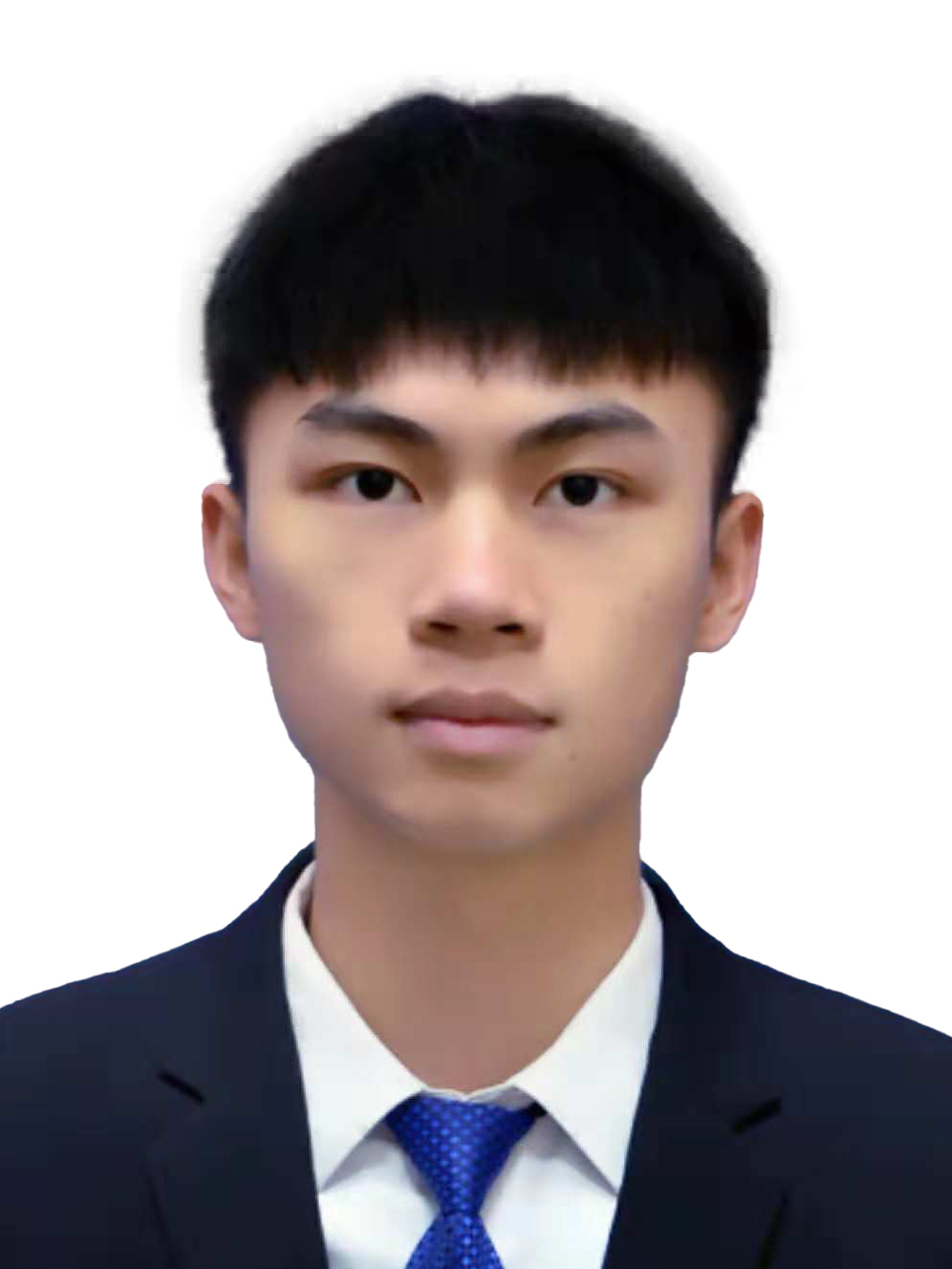}}]{Peizhao~Zhou}
received his B.S. degree from Shandong University of Science and Technology and his M.S. degree  from Northeast Normal University, respectively. He is currently pursuing his Ph.D. at Nankai University. His research interests include privacy-preserving data mining, analytics, and queries.
\end{IEEEbiography}

\begin{IEEEbiography}[{\includegraphics[width=1in,height=1.25in,clip,keepaspectratio]{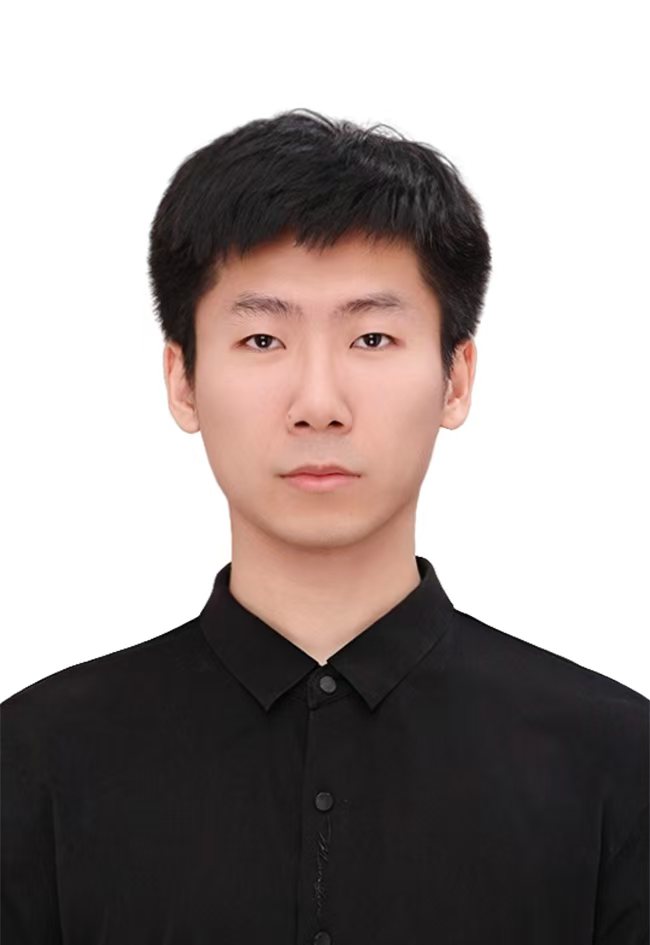}}]{Yuan~Wang}
received his B.S. degree  and M.S. degree from Nankai University, China,  in 2021 and 2025, respectively. His research focuses on artificial intelligence security and the security of large language models
\end{IEEEbiography}

\begin{IEEEbiography}[{\includegraphics[width=1in,height=1.25in,clip,keepaspectratio]{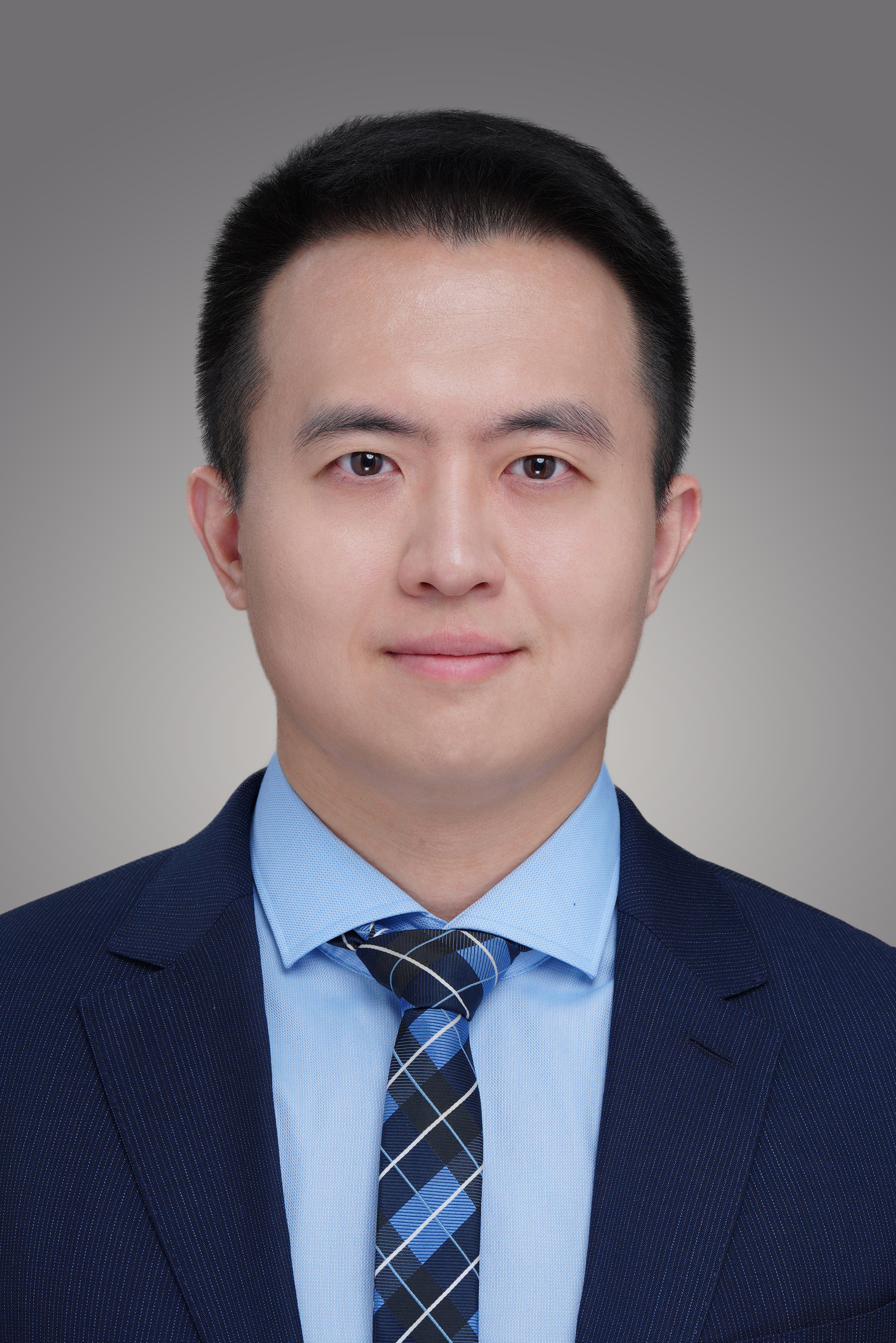}}]{Tong~Li}
received his B.S. and M.S. from Taiyuan University of Technology and Beijing University of Technology, in 2011 and 2014, respectively, both in Computer Science \& Technology. He got his Ph.D degree in information security from Nankai University in 2017. His research interests include applied cryptography, privacy-preserving computation, and secure machine learning.
\end{IEEEbiography}

\begin{IEEEbiography}[{\includegraphics[width=1in,height=1.25in,clip,keepaspectratio]{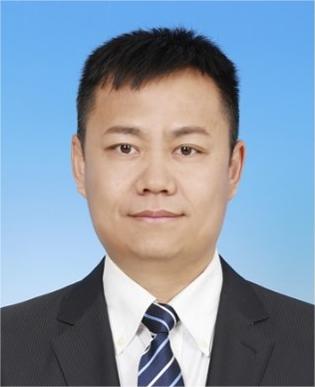}}]{Zheli~Liu}
received his B.S. and M.S. degrees in computer science and a Ph.D. degree in computer application from Jilin University, China, in 2002, 2005, and 2009, respectively. After a postdoctoral fellowship with Nankai University, he joined the College of Cyber Science, Nankai University, in 2011. He is currently a professor at Nankai University. His research interests include applied cryptography and data privacy protection.
\end{IEEEbiography}
\end{document}